\documentclass[10pt,onecolumn,draftclsnofoot,journal]{IEEEtran}
\IEEEoverridecommandlockouts
\makeatletter
\def\ps@headings{%
\def\@oddhead{\mbox{}\scriptsize\rightmark \hfil \thepage}%
\def\@evenhead{\scriptsize\thepage \hfil \leftmark\mbox{}}%
\def\@oddfoot{}%
\def\@evenfoot{}}
\makeatother
\pagestyle{headings}

\usepackage{url}
\usepackage{nomencl}
\usepackage{graphics,latexsym, graphicx}
\usepackage{amsmath,amssymb,amsthm, amstext}
\usepackage{amsmath}
\usepackage{dsfont}
\usepackage{epsf}
\usepackage{subfigure}
\usepackage{algorithm}
\usepackage{algorithmic}
\usepackage{url}
\usepackage{color}

\newtheorem{theorem}{Theorem}

\newtheorem{lemma}{Lemma}
\newtheorem{corollary}{Corollary}
\newtheorem{remark}{Remark}
\newtheorem{proposition}{Proposition}

\DeclareMathOperator*{\argmax}{arg\,max}

\newcommand{\brho}{\mbox{\boldmath$\rho$}}
\newcommand{\CoR}{\mbox{Co}(\mathcal{R})}
\newcommand{\mD} {\mathcal{D}}
\newcommand{\mW} {\mathcal{W}}

\newcommand{\exit}{j:(n,j) \in \mathcal{L}}
\newcommand{\mS}{\mathcal{S}}

\newcommand{\hA}{\hat{A}_{nf}^{(d)}(t)}
\newcommand{\hD}{\hat{D}_{nf}^{(d)}(t)}

\newcommand{\be}{\begin{eqnarray}}
\newcommand{\ee}{\end{eqnarray}}
\newcommand{\ben}{\begin{eqnarray*}}
\newcommand{\een}{\end{eqnarray*}}

\newcommand{\expect}[1]{{\mathbb E} \Bigl[ #1\Bigr]}
\newcommand{\expectS}[1]{{\mathbb E_{\mathcal{S}(t)}} \Bigl[ #1\Bigr]}
\newcommand{\prob}[1]{{\mathbb P} \left( #1\right)}
\newcommand{\cG}{\mathcal{G}}
\newcommand{\cV}{\mathcal{V}}
\newcommand{\cE}{\mathcal{E}}

\begin{document}
\title{Flow-Level Stability of Wireless Networks: Separation of Congestion Control and Packet Scheduling}
\author{\IEEEauthorblockN{Javad Ghaderi, Tianxiong Ji, R. Srikant\\}
\IEEEauthorblockA{Department of ECE and Coordinated Science Lab.\\ University of Illinois at Urbana-Champaign
\thanks{The research was supported in part by ARO MURIs W911NF-07-1-0287 and W911NF-08-1-0233 and AFOSR MURI FA 9550-10-1-0573.}
}}

\maketitle
\begin{abstract}
It is by now well-known that wireless networks with file arrivals and departures are stable if one uses $\alpha$-fair congestion control and back-pressure based scheduling and routing. In this paper, we examine whether $\alpha$-fair congestion control is necessary for flow-level stability. We show that stability can be ensured even with very simple congestion control mechanisms, such as a fixed window size scheme which limits the maximum number of packets that are allowed into the ingress queue of a flow. A key ingredient of our result is the use of the difference between the logarithms of queue lengths as the link weights. This result is reminiscent of results in the context of CSMA algorithms, but for entirely different reasons.
\end{abstract}

\section{Introduction}\label{sec:intro}
In order to operate wireless systems efficiently, scheduling algorithms are needed to facilitate simultaneous transmissions of different users.
Scheduling algorithms for wireless networks have been widely studied since Tassiulas and Ephremides \cite{eph}
proposed the \textit{max weight} algorithm for single-hop wireless networks and its extension to multihop networks using the notion of \textit{back-pressure} or \textit{differential backlog}. The back-pressure algorithm (and hence, the max weight algorithm)
is throughput optimal in the sense that it can stabilize
the queues of the network for the largest set of arrival rates possible without knowing the actual arrival rates. The back-pressure algorithm works under very general
conditions but it does not consider \textit{flow-level dynamics}. It considers \textit{packet-level dynamics} assuming that there is a fixed set of
users/flows and packets are generated by each flow according to some stochastic process. In real networks, flows arrive randomly to the network, have only a finite amount of data, and
depart the network after the data transfer is completed.
Moreover, there is no notion of congestion control in the back-pressure algorithm while most modern
communication networks use some congestion control mechanism for fairness purposes or to avoid excessive congestion inside the network \cite{LNS08}.

There is a rich body of literature on the packet-level stability of scheduling algorithms, e.g., \cite{eph}, \cite{Neely}, \cite{KM01}, \cite{srikant}. Stability of wireless networks under flow-level dynamics
has been studied in, e.g., \cite{LNS08}, \cite{LPYCP07}, \cite{MS10}. Here, by stability, we mean
that the number of flows in the network and the queue sizes
at each node in the network remain finite. To achieve flow-level stability, these works use a specific form of congestion control based on \textit{$\alpha$-fair policies}; specifically, (a) the rates at which flows/files
generate packets into their ingress queues maximize the sum-utility where each user has a utility function of the form $U(x)=x^{1-\alpha}/(1-\alpha)$ for some $\alpha > 0$ where $x$ is the flow rate, and (b) the scheduling of packets in the network
is performed based on the max wight/back-pressure algorithm.
%

When there are file/flow arrivals and departures,
if the scheduler has access to the total queue length information at nodes, then it can use max weight/back-pressure algorithm to achieve throughput optimality, but this information is not typically available to the scheduler because it is implemented as part of the MAC layer. Moreover, without congestion control, queue sizes at different nodes could be widely different. This could lead to long periods of unfairness among flows.

Therefore, we need to use congestion control to provide better QoS. With congestion control, only a few packets from each file are released to the MAC layer at each time instant, and scheduling is done based on these MAC layer packets. However, prior work requires that a specific form of congestion control (namely, ingress queue-length based rate adaptation based on $\alpha$-fair utility functions) has to be used. \textit{Here we show that, in fact, very general window flow-control mechanisms are sufficient to ensure flow-level stability. The result suggests that ingress queue-based congestion control is more important than $\alpha$-fairness to ensure network stability}, when congestion control is used in conjunction with max weight scheduling/routing.

In establishing the above result, we have used the max weight algorithm with link weights which are \textit{log-differentials} of MAC-layer queue lengths, i.e., the weight of a link $(i,j$) is chosen to be in the form of $\log(1+q_i)-\log(1+q_j)$ where $q_i$ and $q_j$ are MAC-layer queue lengths at nodes $i$ and $j$.
Shorter versions of the results presented here appeared earlier in \cite{GTS12, javad, tianxiong}.
The use of logarithmic functions of queue lengths naturally suggests the use of a CSMA-type algorithm to implement the scheduling algorithm in a distributed fashion \cite{shah, shah2,ghaderi}. The main difference here is that the weights are $\log$-differential of queue lengths rather than $\log$ of queue lengths themselves. We show that the stability results for CSMA without time-scale separation can be extended to the model in this paper with log-differential of queue lengths as weights, and the type of congestion control mechanisms considered here.

At this point, we comment on the differences between our paper and a related model considered in \cite{bonfeu10}. In \cite{bonfeu10}, throughput-optimal scheduling algorithms have been derived for a connection-level model of a wireless network assuming that each link has access to the number of files waiting at the link. Here, we only use MAC-layer queue information which is readily available. Further, \cite{bonfeu10} assumes a time-scale separation between CSMA and the file arrival-departure process. Such an assumption is not made in this paper.

The rest of the paper is organized as follows.
In Section \ref{sec:system}, we describe our models for the wireless network, file arrivals, and Transport and MAC layers.
We propose our scheduling algorithm in Section \ref{sec:algorithm}. Section \ref{sec:proof} is devoted to the formal statement about the throughput-optimality of the algorithm and its proof. In Section \ref{sec: distributed}, we consider the distributed implementation of our algorithm and Section \ref{conclusion} contains conclusions. The appendices at the end of the paper contain some of the proofs.

\section{System Model}\label{sec:system}
\subsection*{Model of wireless network}
Consider a multihop wireless network consisting of a set of nodes $\mathcal{N}=\{1,2,..,N\}$ and a set of links $\mathcal{L}$ between the nodes. There is a link from $i$ to $j$, i.e.,
$(i,j) \in \mathcal{L}$, if transmission from $i$ to $j$ is allowed. Let $\mathbf{\mu}=[\mu_{ij}: (i,j) \in \mathcal{L}]$ be the rates according to which
links can transmit packets. Let $\mathcal{R}$ denote the set of available rate vectors (or transmission schedules) $\mathbf{r}=[r_{ij}: (i,j) \in \mathcal{L}]$.
Note that each transmission schedule $\mathbf{r}$ corresponds to a set of node power assignments chosen by the network. Also let $\mbox{Co}(\mathcal{R})$ denote the
convex hall of $\mathcal{R}$ which corresponds to time-sharing between different rate vectors. Hence, in general, $\mathbf{\mu} \in \CoR$.

There are a set of users/source nodes $\mathcal{U} \subseteq \mathcal{N}$ and each user/source transfers data to a destination over a fixed route in the network\footnote{The final results can be extended to case when each source has multiple destinations or to the cases of multi-path routing and adaptive routing. Here, to expose the main features, we have considered a simpler model.}. For a user/source $u \in \mathcal{U}$, we use $d(u) (\neq u)$ to denote its destination. Let $\mathcal{D}:=d(\mathcal{U})$ denote the set of destinations.

We consider a time-slotted system. At each time slot $t$,
new files can arrive at the source nodes and scheduling decisions must be made to deliver the files to destinations in multihop fashion along fixed routes.
We use $a_s(t)$ to denote the number of files that arrive at source $s$ at time $t$ and assume that the process $\{a_s(t); s \in \mathcal{U}\}_{\{t=1,2, \cdots \}}$ is iid over time and independent across users with rate $[\kappa_s; s \in \mathcal{U}]$ and has bounded second moments.
Moreover, we assume that there are $K$ possible file types where the files of type $i$ are geometrically distributed with mean $1/\eta_i$ packets. The file arrived at source $s$ can belong to type $i$ with probability $p_{si}$, $i=1,2,..,K$. Our motivation for selecting such a model is due to the \textit{large variance distribution} of file sizes in the Internet. It is believed that, see e.g.,~\cite{self}, that most of bytes are generated by long files while most of the flows are short flows. By controlling the probabilities $p_{si}$, for the same average file size, we can obtain distributions with very large variance.
Let $m_s=\sum_{i=1}^K{p_{si}/\eta_i}$ denote the mean file size at node $s$, and define the work load at source $s$ by $\rho_s=\kappa_s m_s$. Let $\brho=[\rho_s: s \in \mathcal{U}]$ be the vector of loads.
\subsection*{Model of Transport and MAC layers}
Upon arrival of a file at a source Transport layer, a TCP-connection is
established that regulates the injection of packets into the
MAC layer. Once transmission of
a file ends, the file departs and the corresponding TCP-connection
will be closed. The MAC-layer is responsible for making the scheduling decisions to deliver the MAC-layer packets to their destinations over their corresponding routes. Each node has a fixed routing table that determines the next hop for each destination.

At each source node, we index the files according to their arriving order such that the index 1 is given to the earliest file. This
means that once transmission of a file ends, the indices of the remaining files are updated such that indices again start from 1
and are consecutive. Note that the indexing rule \textit{is not} part of the algorithm implementation and it is used here only for the
purpose of analysis. We use $\mathcal{W}_{sf}(t)$ to denote the TCP congestion window size for file $f$ at source $s$ at time $t$. Hence, $\mathcal{W}_{sf}$ is a time-varying sequence which changes as
a result of TCP congestion control. If the congestion window
of file $f$ is not full, TCP will continue injecting packets from
the remainder of file $f$ to the congestion window until file $f$
has no packets remaining at the Transport layer or the congestion window becomes full.
We consider \textit{ingress queue-based congestion control} meaning that when a packet of congestion window
departs the ingress queue, it is replaced with a new packet from its corresponding file at the Transport layer.
It is important to note that the MAC layer does not know
the number of remaining packets at the Transport layer, so scheduling decisions have to be made based on the MAC-layers information only.
It is reasonable to assume that $1 \leq \mathcal{W}_{sf}(t) \leq \mathcal{W}_{cong}$, i.e., each file has at least one packet waiting to be transferred and all congestion window sizes are bounded from above by a constant $\mathcal{W}_{cong}$.
\subsection*{Routing and queue dynamics}
At the MAC layer of each node $n \in \mathcal{N}$, we consider separate queues for the packets of different destinations. Let $q_n^{(d)}$, $d \in \mathcal{D}$, denote the packets of destination $d$ at the MAC-layer of $n$. Also let $\mathbf{R}^{(d)}_{N \times N}$ be the routing matrix corresponding to packets of destination $d$ where $R^{(d)}_{ij}=1$ if the next hop of node $i$ for destination $d$ is node $j$, for some $j$ such that $(i,j) \in \mathcal{L}$, and $0$ otherwise. Routes are acyclic meaning that each packet eventually reaches its destination and leaves the network. A packet of destination $d$ that is transmitted from $i$ to $j$ is removed from $q^{(d)}_i$ and added to $q^{(d)}_j$. Packet that reaches its destination is removed from the network. Note that packets in $q_n^{(d)}$ could be generated at node $n$ itself (if $n$ is a source with destination $d$) or belong to other sources that use $n$ as an intermediate relay along their routes to their destinations.

For the analysis, we also use $Q_n^{(d)}$ (\textit{with capital $Q$}) to denote the total per-destination queues, i.e., $Q_n^{(d)}$ represents the packets of destination $d$ at node $n$, in its MAC or Transport layer.

For each node $n$, the MAC (or total) per-destination queues $q_{n}^{(d)}$ (or $Q_{n}^{(d)}$) fall into three cases: (i) $n$ is source and $d(n)$ is its destination, (ii) $n$ is a source but $d \neq d(n)$, and (iii) $n$ is not a source. In the case (i), it is important to distinguish between the MAC-layer queue and the total queue associated with $d(n)$, i.e., $Q_n^{(d(n))} \neq q_n^{(d(n))}$, because of the existing packets of destination $d(n)$ at the Transport layer of $n$. However, $Q_n^{(d)}=q_n^{(d)}$ holds for all $d \in \mathcal{D}\backslash d(n)$ in case (ii), and for all destinations in case (iii).

Let $z_{ij}(t)$ denote the number of packets transmitted over link $(i,j) \in \mathcal{L}$ at time $t$. Then, the total-queue dynamics for a destination $d$, at each node $n$, is given by
\ben \label{eq:queue dynamics}
Q_n^{(d)}(t+1)& =& Q_n^{(d)}(t)-\sum_{j=1}^N R^{(d)}_{nj}z_{nj}^{(d)}(t) +\sum_{i=1}^N R^{(d)}_{in}z_{in}^{(d)}(t)+A^{(d)}_n(t),
\een
where $A^{(d)}_n(t)$ is the total number of packets for destination $d$ that new files bring to node $n$ at time slot $t$. To express one formula for the queue dynamics in all three cases, (i), (ii), and (iii), we can write $\expect{A^{(d)}_n(t)}=\rho^{(d)}_n$, where $\rho^{(d)}_n:=\rho_n$ in case (i) and $\rho^{(d)}_n:=0$ otherwise.

Let $x_{ij}^{(d)}$ denote the scheduling variable
that shows the rate at which the packets of destination $d$ can be forwarded over the link $(i,j)$. Note that $
z_{ij}^{(d)}(t)=\min \left\{x_{ij}^{(d)}, q^{(d)}_i(t) \right\},
$
because $i$ cannot send more than its queue content at each time.

The capacity region of the network $\mathcal{C}$ is defined as the set of all load vectors $\brho$
that under which the total-queues in the network can be stabilized. Note that under our connection-level model, stability of total-queues will imply that the number of files in the network is also stable.
It is well-known \cite{Neely} that a vector $\brho$ belongs to $\mathcal{C}$ if and only if there exits a transmission rate vector $\mu \in \CoR$ such that
\begin{align*}
&\mu^{(d)}_{ij} \geq 0; \ \forall d \in \mathcal{D} \mbox{ and } \forall (i,j) \in \mathcal{L}, \\
&\rho^{(d)}_n-\sum_{j=1}^N R^{(d)}_{nj}\mu_{nj}^{(d)}+\sum_{i=1}^N R^{(d)}_{in}\mu_{in}^{(d)} \leq 0; \ \forall d \in \mathcal{D} \mbox{ and } \forall n \neq d,\\
&\sum_{d \in \mD} \mu_{ij}^{(d)} \leq \mu_{ij}; \  \forall (i,j) \in \mathcal{L}.
\end{align*}

\section{Description of Scheduling Algorithm}\label{sec:algorithm}
The algorithm is essentially the back-pressure algorithm \cite{eph} but it only uses the MAC-layer information.
The key step in establishing the optimality of such an algorithm is using an appropriate weight function of the MAC-layer queues instead of using the total queues. In particular,
consider a \textit{log-type} function
\be \label{function}
g(x):=\frac{\log(1+x)}{h(x)},
\ee
where $h(x)$ is an arbitrary increasing function which makes
$g(x)$ an increasing concave function. Assume that $h(0) > 0$
and $g(x)$ is continuously differentiable on $(0, \infty)$: For example, $h(x)=\log(e+\log(1+x))$ or $h(x)=\log^\theta(e+x)$ for some $0 <\theta < 1$.
For each link $(i,j)$ with $R^{(d)}_{ij}=1$, define
\be \label{diff log}
w_{ij}^{(d)}(t):=g(q_i^{(d)}(t))-g(q_j^{(d)}(t)).
\ee
Note that if $\{d\in \mD :R^{(d)}_{ij}=1\}=\emptyset$, then we can remove the link $(i,j)$ from the network without reducing the capacity region since no packets are forwarded over it. So without loss of generality, we assume that $\{d\in \mD :R^{(d)}_{ij}=1\}\neq \emptyset$, for every $(i,j) \in \mathcal{L}$.
Then the scheduling algorithm is as follows:

At each time $t$:
\begin{itemize}
\item Each node $n$ observes the MAC-layer queue sizes of itself and its next hop, i.e., for each $d\in \mathcal{D}$, it observes $q_n^{(d)}$ and $q_j^{(d)}$ for a $j$ such that $R^{(d)}_{ij}=1$.

\item For each link $(i,j)$, define a weight
\be \label{eq: link weight}
w_{ij}(t):= \max_{d\in \mD :R^{(d)}_{ij}=1} w_{ij}^{(d)}(t),
\ee
and
\be \label{eq: dtilde}
\tilde{d}^*_{ij}(t)=\argmax_{d\in \mD :R^{(d)}_{ij}=1} w_{ij}^{d}(t).
\ee

\item The network needs to find the optimal rate vector $\tilde{x}^* \in \mathcal{R}$ that solves
\be \label{eq: xstartilde}
\tilde{x}^*(t)=\argmax_{r \in \mathcal{R}} \sum _{(i,j) \in \mathcal{L}}r_{ij}w_{ij}(t).
\ee
\item Finally, assign $x_{ij}^{(d)}(t)=\tilde{x}^*_{ij}$ if $d={\tilde{d}}^*_{ij}(t)$, and zero otherwise (break ties at random).
\end{itemize}

\section{System Stability} \label{sec:proof}
In this section, we analyze the system and prove its stability under the algorithm described in Section \ref{sec:algorithm}. The following theorem states our main result.
\begin{theorem}\label{theorem}
For any $\brho$ strictly inside $\mathcal{C}$, the scheduling algorithm in Section \ref{sec:algorithm},
can stabilize the network independent
of transport-layer ingress queue-based congestion control mechanism (as long as the minimum window
size is one and the window sizes are bounded) and the (nonidling)
service discipline used to transmit packets from active
nodes.
\end{theorem}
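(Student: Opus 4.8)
The plan is to establish positive recurrence of the driving Markov chain (equivalently, that the number of files and all queues stay finite) by a Foster--Lyapunov drift argument. Because file sizes are geometric, residual file sizes are memoryless, so the system state --- per-source file counts and residual sizes, per-node per-destination MAC-layer queue contents, and congestion-window states --- is Markov, and all window sizes lie in $[1,\mathcal{W}_{cong}]$. Every active file has at least one remaining packet at its source, so $Q_s^{(d(s))}(t)\ge F_s(t)$, where $F_s$ is the number of files at $s$, and $Q_n^{(d)}(t)$ also dominates the largest residual size of any file routed toward $d$ through $n$; hence it suffices to make the drift of a function of the total per-destination queues $Q_n^{(d)}$ negative, since bounding their stationary means then bounds the number of files. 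I would take $V(t)=\sum_{n,d}G(Q_n^{(d)}(t))$ with $G(x)=\int_0^x g(u)\,du$, which is convex because $g$ is increasing.

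For the one-slot drift, convexity of $G$, concavity of $g$, and boundedness of $g'$ give $G(Q+\Delta)-G(Q)\le g(Q)\Delta+g'(0)\Delta^2$, and since the per-slot increments have finite second moments (geometric files; the file-arrival process has bounded second moments), $\expect{\Delta V\mid\mathcal{F}_t}\le\sum_{n,d}g(Q_n^{(d)})\,\expect{\Delta Q_n^{(d)}\mid\mathcal{F}_t}+C_1$. Using the total-queue dynamics and reorganizing the flow terms by links, $\sum_{n,d}g(Q_n^{(d)})\expect{\Delta Q_n^{(d)}\mid\mathcal{F}_t}=\sum_s g(Q_s^{(d(s))})\rho_s-\expect{\sum_{(i,j),d}R_{ij}^{(d)}z_{ij}^{(d)}\bigl(g(Q_i^{(d)})-g(Q_j^{(d)})\bigr)\mid\mathcal{F}_t}$. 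The key point is that the total-queue link weight $g(Q_i^{(d)})-g(Q_j^{(d)})$ coincides with the MAC-queue weight $w_{ij}^{(d)}(t)$ that the scheduler actually uses on every link except those incident, on the layer of $d$, to a source node of $d$ --- in the simple model, the ingress links out of a source toward its own destination --- where the two differ by $\delta_s(t):=g(Q_s^{(d(s))}(t))-g(q_s^{(d(s))}(t))\ge0$, since only at such pairs does the total queue additionally carry the transport-layer backlog. Substituting and writing $g(Q_s^{(d(s))})=g(q_s^{(d(s))})+\delta_s$, the drift is at most $\sum_s g(q_s^{(d(s))})\rho_s-\expect{\sum_{(i,j),d}R_{ij}^{(d)}z_{ij}^{(d)}w_{ij}^{(d)}\mid\mathcal{F}_t}+\sum_s\delta_s\,\expect{\Delta Q_s^{(d(s))}\mid\mathcal{F}_t}+C_1$.

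The middle term is handled by the usual back-pressure comparison. Since the algorithm maximizes $\sum_{(i,j)}r_{ij}w_{ij}$ over $r\in\mathcal{R}$ and assigns each scheduled link to its maximizing destination (non-positive-weight links playing no role, as usual), $\sum_{(i,j),d}R_{ij}^{(d)}x_{ij}^{(d)}w_{ij}^{(d)}\ge\sum_{(i,j),d}R_{ij}^{(d)}\mu_{ij}^{(d)}w_{ij}^{(d)}$ for the interior witness $\mu\in\CoR$ supplied by $\brho$ being strictly inside $\mathcal{C}$; the truncation $z_{ij}^{(d)}=\min(x_{ij}^{(d)},q_i^{(d)})$ costs only a constant, because a link is truncated only when its queue lies below the (bounded) service rate, where $g$ is bounded. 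Telescoping $\sum R^{(d)}\mu^{(d)}w^{(d)}$ by nodes and using $\rho_n^{(d)}-\sum_jR_{nj}^{(d)}\mu_{nj}^{(d)}+\sum_iR_{in}^{(d)}\mu_{in}^{(d)}\le-\epsilon$ together with $g(q_n^{(d)})\ge0$ gives $\expect{\sum R^{(d)}z^{(d)}w^{(d)}\mid\mathcal{F}_t}\ge\sum_s g(q_s^{(d(s))})\rho_s+\epsilon\sum_{n,d}g(q_n^{(d)})-C_2$, whence $\expect{\Delta V\mid\mathcal{F}_t}\le-\epsilon\sum_{n,d}g(q_n^{(d)})+\sum_s\delta_s\,\expect{\Delta Q_s^{(d(s))}\mid\mathcal{F}_t}+C_1+C_2$.

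The main obstacle is the leftover term $\sum_s\delta_s\,\expect{\Delta Q_s^{(d(s))}\mid\mathcal{F}_t}$, which measures the gap between the MAC weights the scheduler sees and the total queues being tracked, and is exactly where the logarithmic (slowly varying) shape of $g$ and the window bound enter. When $s$ holds many files, $q_s^{(d(s))}\ge F_s$ is large and, by concavity, $\delta_s\le g'(q_s^{(d(s))})\,(Q_s^{(d(s))}-q_s^{(d(s))})$ is negligible compared with $g(q_s^{(d(s))})$ --- one uses $g'(x)\to0$ and $g(cx)/g(x)\to1$ as $x\to\infty$ for fixed $c>0$ for log-type $g$, plus the fact that window flow control caps each file's MAC footprint at $\mathcal{W}_{cong}$, so that $q_s^{(d(s))}$ is comparable to $F_s$ rather than to the residual work --- so this part is absorbed into $\tfrac{\epsilon}{2}\sum_{n,d}g(q_n^{(d)})$. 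When instead $q_s^{(d(s))}$ is bounded while $Q_s^{(d(s))}$ is large (a few very large files), the ingress link of $s$ is eventually the heaviest of the links competing with it and hence is served, so $Q_s^{(d(s))}$ drains at a rate bounded away from $0$; making this rigorous calls for a $T$-slot drift, with $T$ allowed to grow with the state, in the spirit of the CSMA-without-time-scale-separation analysis, which shows the MAC queues equilibrate quickly relative to the total queues. Once $\expect{V(t+T)-V(t)\mid\mathcal{F}_t}\le-\tfrac{\epsilon}{2}\sum_{n,d}g(Q_n^{(d)})+C$ is in hand, the Foster--Lyapunov criterion gives positive recurrence and a finite stationary mean of $\sum_{n,d}g(Q_n^{(d)})$, hence of the total queues and of the number of files, for every nonidling service discipline and every window policy with $1\le\mathcal{W}_{sf}\le\mathcal{W}_{cong}$, which is the assertion of the theorem.
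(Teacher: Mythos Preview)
Your overall architecture (Lyapunov $V=\sum G$, convexity/concavity bounds, back-pressure comparison, wasted-service constant) matches the paper, but the key step is handled differently and your version has a real gap.

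The paper does \emph{not} build $V$ on the actual total queue $Q_n^{(d)}$. It uses
\[
\bar{Q}_n^{(d)}(t)\;=\;\mathbb{E}\bigl[Q_n^{(d)}(t)\,\big|\,\mathcal{S}_n(t)\bigr]\;=\;q_n^{(d)}(t)+\sum_{f=1}^{N_n(t)}\sigma_{nf}(t)\,\xi_{nf}(t),
\]
which is a deterministic function of the state (the state records only the \emph{type} $\sigma_{nf}$ and the indicator $\xi_{nf}$, not the actual residual size). Because each active file has at least one MAC packet, $N_n(t)\le q_n^{(d)}(t)$, and each term in the sum is at most $1/\eta_{\min}$; hence $q_n^{(d)}\le \bar{Q}_n^{(d)}\le q_n^{(d)}(1+1/\eta_{\min})$. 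Applying the log-type $g$ turns this multiplicative bound into a \emph{uniform additive constant}: $|g(\bar{Q}_n^{(d)})-g(q_n^{(d)})|\le \log(1+1/\eta_{\min})/h(0)$. This is Lemma~3, and it is what makes the whole proof a clean one-slot drift: the scheduler's MAC-based weights $w_{ij}$ differ from the $\bar{Q}$-based weights $W_{ij}$ by a constant, so the chosen schedule is within a constant of the $W$-max-weight schedule, and the standard argument finishes. (The dynamics of $\bar{Q}$ acquire an extra term $B_n^{(d)}=\hat{A}-\hat{D}$; Lemma~1 shows it has conditional mean zero and bounded variance, again using memorylessness.)

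In your proposal the state carries actual residual sizes and $V$ depends on $Q$, so your gap $\delta_s=g(Q_s)-g(q_s)$ is \emph{not} bounded: a single file of residual $M$ gives $q_s\le\mathcal{W}_{cong}$ and $\delta_s\approx\log M$. Your ``many files'' argument relies on $g(cx)/g(x)\to 1$ for a \emph{fixed} $c$, but $Q_s/q_s$ is not bounded by any fixed $c$ in the state --- only $\bar{Q}_s/q_s$ is. In the ``few files, large $Q_s$'' regime you then appeal to a $T$-slot drift with $T$ growing with the state; even there you need that the source link is scheduled a positive fraction of the time, yet the MAC weight at $s$ is $O(1)$ in this regime, so the scheduler has no reason to favor it over other links. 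None of this is needed once you switch from $Q$ to $\bar{Q}$; that substitution is the missing idea.
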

\begin{remark}
Theorem \ref{theorem} holds even when $h \equiv 1$ in (\ref{function}), however, for the distributed implementation of the algorithm in Section \ref{sec: distributed}, we need $g$ to grow slightly slower than $\log$.
\end{remark}
Theorem \ref{theorem} shows that it is possible to design the ingress queue-based congestion controller regardless of the scheduling algorithm implemented in the core network. This will allow using different congestion control mechanisms at the edge of the network for different fairness or QoS considerations without need to change the scheduling algorithm implemented at internal routers of the network. As we will see, a key ingredient of such decomposition is to use difference between the logarithms of queue lengths, as in (\ref{diff log}), for the link weights in the scheduling algorithm. The rest of this section is devoted to the proof of Theorem \ref{theorem}.
\subsection*{Order of events}
Since we use a discrete-time model, we have to specify the order in which files/packets arrive and depart, which we do below:
\begin{enumerate}

\item At the beginning of each time slot, a scheduling decision is made by the scheduling algorithm. Packets depart from the MAC layers of scheduled links.

\item File arrivals occur next. Once a file arrives, a new TCP connection is set up for that file with an initial pre-determined congestion window size.

\item For each TCP connection, if the congestion window is not full, packets are injected into the MAC layer from the Transport layer until the window size is fully used or there is no more packets at the Transport layer.
\end{enumerate}
We re-index the files at the beginning of each time slot because some files might have been departed during the last time
slot.
\subsection*{State of the system}
Define the state of node $n$ as
\ben
\mS_n(t)&=&\bigl\{(q_n^{(d)}(t),\mathcal{I}_n^{(d)}(t)): d \in \mD, (\xi_{nf}(t), \mathcal{W}_{nf}(t),\sigma_{nf}(t)): 1 \leq f \leq N_n(t)\bigr\},
\een
where $N_n(t)$ is the number of existing files at node $n$ at the beginning of time slot $t$, $\sigma_{nf}(t) \in \{1/\eta_1, \cdots , 1/\eta_K\} $ is its mean size (or type), and $\mathcal{W}_{nf}(t)$ is its corresponding congestion window size. Note that $\sigma_{nf}(t)$ is a function of time only because of re-indexing
since a file might change its index from slot to slot. $\xi_{nf}(t)$ is an indicator function of whether file $f$ has still packets in the Transport layer, i.e., if $U_{nf}(t)$ is the number of remaining packets of file $f$ at node $n$, then
$$\xi_{nf}(t)=\mathds{1} \{U_{nf}(t) > \mathcal{W}_{nf}(t)\},$$
 thus $\xi_{nf}(t) = 1$, if the last packet of file $f$ has not been injected to the MAC layer of node $n$, and $\xi_{nf}(t)=0$, if there is no remaining packets of file $f$ at the Transport layer of node $n$. Obviously, if $n$ is not a source node, then we can remove $(\xi_{nf}, \mathcal{W}_{nf},\sigma_{nf})$ from the description of $\mS_n$. $\mathcal{I}_n^{(d)}(t)$ denotes the information required about $q_n^{(d)}(t)$ to serve the MAC-layer packets which depends on the specific service discipline implemented in MAC-layer queues. In the rest of the paper, we consider the case of FIFO (\textit{First In-First Out}) service discipline in MAC-layer queues. In this case, $\mathcal{I}_n^{(d)}(t)$ is simply the ordering of packets in $q_n^{(d)}(t)$ according to their entrance times. As it will turn out from the proof, the system stability will hold for any none-idling service discipline.
Define the state of the system to be $\mS(t)=\{\mS_n(t): n \in \mathcal{N}\}$. Now, given the scheduling algorithm in section \ref{sec:algorithm}, and our traffic model, $\mS(t)$ evolves as a discrete-time Markov chain.

\begin{remark}
We only require that the congestion window dynamics could be described as a function
of queue lengths of the network so that the network Markov
chain is well-defined. Even in the case that the congestion
window is a function of the delayed queue lengths of the
network up to T time slots before, due to the feedback
delay of at most T from destination to source, the network
state could be modified, to include the queues up to T time
slots before, so that the same proof technique still applies.
\end{remark}
Next, we analyze the \textit{Lyapunov drift} to show that the network Markov chain is positive recurrent and, as a result, the number of files in the system and queue sizes are stable.

\subsection*{Lyapunov analysis}
Define $\bar{Q}^{(d)}_n(t):= \expect{Q^{(d)}_n(t)|\mathcal{S}_n(t)}$ to be the expected total queue length at node $n$ given the state $\mathcal{S}_n(t)$. Then, if $n$ is a source, and $d$ is its destination,
\begin{eqnarray}
\label{eq:q_l_expression}
\bar{Q}^{(d)}_n(t) = q^{(d)}_n(t)+ \sum_{f=1}^{N_n(t)} \Bigl[ \sigma_{nf}(t) \xi_{nf}(t) \Bigr].
\end{eqnarray}
Otherwise, if $d \neq d(n)$ or $n$ is not a source, then $\bar{Q}^{(d)}_n(t) = q^{(d)}_n(t)$. Note that given the state $\mS(t)$, $\bar{Q}_n^{(d)}$ is known.

The dynamics of $\bar{Q}^{(d)}_n(t)$ involves the dynamics of $q^{(d)}_n(t)$, $\xi_n(t)$, and $N_n(t)$, and, thus, it consists of:

\begin{itemize}
\item [(i)]departure of MAC-layer packets
\item[(ii)]new file arrivals (if $n$ is a source)
\item[(iii)]arrival of packets from previous hops that use $n$ as an intermediate relay to forward packets to their destinations
\item[(iv)]injection of packets into the MAC layer (if $n$ is a source), and
\item [(v)] departure of files from the Transport layer (if $n$ is a source).
\end{itemize}
 Hence,
\be \label{eq:expect_q_dynamics}
\bar{Q}^{(d)}_n(t+1) & = & \bar{Q}^{(d)}_n(t) - \sum_{j=1}^NR^{(d)}_{nj} z_{nj}^{(d)}(t) + \bar{A}^{(d)}_n(t) \nonumber \\
& +& \sum_{i=1}^NR^{(d)}_{in} z_{in}^{(d)}(t) + \hat{A}^{(d)}_n(t) -\hat{D}^{(d)}_n(t),
\ee
where $\bar{A}^{(d)}_n(t) =\sum_{f=N_n(t)+1}^{N_n(t)+a_n(t)} \sigma_{nf}(t)$ is the expected number of packet arrivals due to new files,
 $\hat{A}_n^{(d)}(t)$ is the total number of packets injected into the MAC layer to fill up the congestion window after scheduling and new file arrivals, and $\hat{D}_n^{(d)}(t)=\sum_{f=1}^{N_n(t)+a_n(t)} \sigma_{nf}(t) I_{nf}(t)$ is the Transport-layer  ``expected packet departure'' because of the MAC-layer injections.
Here, $I_{nf}(t)=1$ indicates that the last packet of file $f$ \textit{leaves the Transport layer during time slot $t;$} otherwise, $I_{nf}(t)=0$. To notice the difference between the indicators $I_{nf}(t)$ and $\xi_{nf}(t)$, consider a specific file and assume that its last packet enters the Transport layer at time slot $t_0$, departs the Transport layer during time slot $t_1$ and departs the MAC layer during time slot $t_2$, then its corresponding indicator $I$ is $1$ at time $t_1$ and is $0$ for $t_0 \leq t < t_1$ and $t_1< t \leq t_2$, while its indicator $\xi$ is $0$ for all time $t_1 \leq t \leq t_2$, and $1$ for $t_0 \leq  t < t_1$.

Note that $\expect{\bar{A}^{(d)}_n(t)} = \rho^{(d)}_n$ is the mean packet arrival rate at node $n$ for destination $d$.
Let $B^{(d)}_n(t) := \hat{A}_n^{(d)}(t) - \hat{D}_n^{(d)}(t)$, and
define $\expectS{\cdot}:=\expect{\cdot|\mS(t)}$. It should be clear that when $n$ is a source but $d \neq d(n)$, or when $n$ is not a source, $ \bar{A}^{(d)}_n(t)=\hat{A}^{(d)}_n(t) =\hat{D}^{(d)}_n(t)=B^{(d)}_n(t)\equiv 0$. Let $r_{max}$ denote the maximum link capacity over all the links in the network. Then Lemma~\ref{lemma:B_l} characterizes the first and second moments of $B^{(d)}_n(t)$.
\begin{lemma} \label{lemma:B_l}
For the process $\{B^{(d)}_n(t)\}$,
\begin{itemize}
\item [(i)]$\expectS {B^{(d)}_n(t) } = 0$.
\item [(ii)] Let $\eta_{min}=\min_{1 \leq i \leq K} \eta_i$, then $\expectS{B^{(d)}_n(t)^2 } \le  (\kappa_n + N^2r^2_{max} ) \max \{ \mW_{cong}^2, 1/\eta_{min}^2 \}.$
\end{itemize}
\end{lemma}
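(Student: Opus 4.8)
The plan is to analyze the quantity $B_n^{(d)}(t) = \hat{A}_n^{(d)}(t) - \hat{D}_n^{(d)}(t)$ by accounting, packet by packet, for what happens during a time slot when $n$ is a source and $d = d(n)$ (all other cases are trivially zero by the remark preceding the lemma). First I would establish part (i). The key observation is a conservation identity: every packet that is injected into the MAC layer during time slot $t$ (counted by $\hat{A}_n^{(d)}(t)$) is the "replacement'' for exactly one packet that has left the Transport layer, i.e. has been committed to the MAC layer and can never return. More precisely, for a fixed file $f$ with mean size $\sigma_{nf}(t)$, the Transport-layer "expected departure'' term $\sigma_{nf}(t) I_{nf}(t)$ records the event that the \emph{last} packet of that file leaves the Transport layer in this slot, while the injections fill the congestion window back up to its size $\mathcal{W}_{nf}(t)$. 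Because each injected packet corresponds to decrementing $U_{nf}(t)$ by one, and because the geometric residual-size property means that, conditioned on a packet being injected, the expected number of further packets remaining is $1/\eta_i - 1$ for a type-$i$ file, the expected total increment to $q_n^{(d)}$ from injections exactly balances the expected decrement to $\sum_f \sigma_{nf}\xi_{nf}$. Taking the conditional expectation $\expectS{\cdot}$ over the random file-size realizations (the only randomness not already measurable with respect to $\mS(t)$ after the scheduling and arrival steps are fixed) collapses both terms to the same value, giving $\expectS{B_n^{(d)}(t)} = 0$. The cleanest way to write this is to express $\hat{A}_n^{(d)}(t)$ and $\hat{D}_n^{(d)}(t)$ each as sums over files of contributions, observe that the per-file contributions to the drift have zero conditional mean by the memorylessness of the geometric distribution, and sum.

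For part (ii), I would bound $B_n^{(d)}(t)^2$ by bounding the number of files that can possibly contribute and the magnitude of each contribution. Only a file whose congestion-window packet departed the MAC layer during slot $t$ can have an injection, and only such a file can have $I_{nf}(t) = 1$; the number of such files is at most the total number of packets that can depart node $n$'s MAC layer in one slot, which is bounded by the total link capacity out of $n$, itself at most $N r_{max}$ (summing over destinations and out-links), plus the $a_n(t)$ newly arrived files which may inject packets into empty windows. Each contributing file contributes at most $\mathcal{W}_{cong}$ injected packets on the $\hat{A}$ side and at most a random geometric residual (mean $\le 1/\eta_{min}$) on the $\hat{D}$ side. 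Combining, $|B_n^{(d)}(t)|$ is bounded by (number of affected files) times $\max\{\mathcal{W}_{cong}, (\text{geometric residual})\}$, and squaring and taking $\expectS{\cdot}$ — using that $a_n(t)$ has bounded second moment $\kappa_n$ appearing linearly and the capacity term contributing $N^2 r_{max}^2$ — yields the stated bound $(\kappa_n + N^2 r_{max}^2)\max\{\mathcal{W}_{cong}^2, 1/\eta_{min}^2\}$. I would use independence of the residual file sizes across files and the fact that a geometric random variable with parameter $\eta_i$ has second moment $O(1/\eta_i^2)$.

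The main obstacle is part (i): making the "conservation'' argument fully rigorous requires carefully tracking, within a single slot, the interleaving of steps (2), (3), (4), (5) in the order-of-events list — new file arrival, injection to fill windows, Transport-layer departures triggered by those injections — and being precise about which quantities are measurable with respect to $\mS(t)$ before the fresh randomness of the geometric file sizes is revealed. The subtle point is the distinction, emphasized in the paragraph before the lemma, between $I_{nf}(t)$ (last packet \emph{leaves} the Transport layer this slot) and $\xi_{nf}(t)$ (whether packets \emph{remain} in the Transport layer): one must verify that $\hat{D}_n^{(d)}(t)$ as defined, weighted by $\sigma_{nf}$, is exactly the right "expected'' bookkeeping term so that $\bar{Q}_n^{(d)}(t)$ in \dref{eq:expect_q_dynamics} is consistent with its definition \dref{eq:q_l_expression} in expectation. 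Once the slot-level accounting is set up correctly, both parts follow from elementary properties of the geometric distribution and crude counting bounds, so I expect the bulk of the work — and the delicacy — to be in the definitional/measurability setup rather than in any hard estimate.
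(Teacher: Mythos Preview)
Your approach is essentially the paper's: decompose $B_n^{(d)}(t)$ into per-file contributions $B_{nf}(t)=\hat A_{nf}-\hat D_{nf}$, show each has zero conditional mean via the memorylessness of the geometric (the paper does this by an explicit computation after conditioning on the residual window size $\mathcal{W}_{nf}^r(t)=w$), and then bound the square by counting affected files.

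Two corrections are needed in part~(ii). First, $\hat D_{nf}^{(d)}(t)=\sigma_{nf}(t)I_{nf}(t)$ is \emph{not} a random geometric residual: $\sigma_{nf}(t)\in\{1/\eta_1,\dots,1/\eta_K\}$ is the (deterministic, given the type) mean size, so $|\hat D_{nf}|\le 1/\eta_{\min}$ almost surely and no second-moment estimate for a geometric is required. Second, your plan to bound $|B_n^{(d)}|$ by $(\text{affected files})\times\max\{\mW_{cong},\cdot\}$ and then square will produce $\mathbb E[a_n(t)^2]$, not $\kappa_n$, in the final bound. The paper instead splits $B_n=\sum_{f\le N_n(t)}B_{nf}+\sum_{f>N_n(t)}B_{nf}$ and uses that the new-file terms are mutually independent with zero mean (by part~(i)) and independent of the existing-file sum; this kills all cross terms, so the new-file block contributes $\mathbb E\bigl[\sum_{f>N_n(t)}B_{nf}^2\bigr]\le \kappa_n\max\{\mW_{cong}^2,1/\eta_{\min}^2\}$ via Wald, while the existing-file block is handled by your crude bound with $|\mathcal{F}_n(t)|\le N r_{\max}$. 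Without this split-and-orthogonality step your argument gives only a weaker constant.
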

Therefore, we can write
\ben \label{eq: queue dynamics}
\bar{Q}^{(d)}_n(t+1)&=& \bar{Q}^{(d)}_n(t) - \sum_{j=1}^N R^{(d)}_{nj} z_{nj}^{(d)}(t) + \tilde{A}_n^{(d)}(t) +\sum_{i=1}^NR^{(d)}_{in} z_{in}^{(d)}(t),
\een
where $\tilde{A}_n^{(d)}(t):=\bar{A}_n^{(d)}(t)+B^{(d)}_n(t)$. Note that $\tilde{A}_n^{(d)}(t)$ has mean $\rho_n^{(d)}$ and finite second moment.

Let $G(u):=\int_0^{u}g(x)dx$ for the function $g$ defined in (\ref{function}). Then $G$ is a strictly convex function. Consider a Lyapunov function
$$
V(\mS(t))=\sum_{n=1}^N\sum_{d \in \mathcal{D}}G(\bar{Q}_n^{(d)}(t)).
$$
Let $\Delta V(t) :=V(\mS(t+1))-V(\mS(t))$,
then, using convexity of $G$, we get
\ben
\Delta V(t)\leq \sum_{n=1}^N\sum_{d \in \mD}g(\bar{Q}_n^{(d)}(t+1))\left(\bar{Q}_n^{(d)}(t+1)-\bar{Q}_n^{(d)}(t)\right).
\een
Using the concavity of $g$ and the fact that $g^\prime  \leq 1$, we have
\ben
|g(\bar{Q}_n^{(d)}(t+1))-g(\bar{Q}_n^{(d)}(t))| \leq |\bar{Q}_n^{(d)}(t+1)-\bar{Q}_n^{(d)}(t)|.
\een
Furthermore, observe that, based on (\ref{eq: queue dynamics}),
\ben \label{eq:Qdiff}
|\bar{Q}_n^{(d)}(t+1)-\bar{Q}_n^{(d)}| \leq  \tilde{A}_n^{(d)}(t)+N r_{max}.
\een
Hence,
\ben
\Delta V(t)\leq  \sum_{n=1}^N\sum_{d \in \mD}g(\bar{Q}_n^{(d)}(t))(\bar{Q}_n^{(d)}(t+1)-\bar{Q}_n^{(d)}(t))+\sum_{n=1}^N\sum_{d \in \mD}(\tilde{A}_n^{(d)}(t)+N r_{max})^2.
\een
Define, $u^{(d)}_n(t):=\max \left\{ \sum_{j=1}^N R^{(d)}_{nj}x_{nj}^{(d)}(t) - q^{(d)}_n(t), 0\right\},$ to be the wasted service for packets of destination $d$, i.e., when $n$ is included in the schedule but it does not have enough packets of destination $d$ to transmit. Then, we have
\ben
 \Delta V(t) & \leq & \sum_{n=1}^N\sum_{d \in \mD}\Big\{g(\bar{Q}_n^{(d)}(t))\Big[\sum_{i=1}^N R^{(d)}_{in} x_{in}^{(d)}(t)+\tilde{A}_n^{(d)}(t)-\sum_{j=1}^NR^{(d)}_{nj} x_{nj}^{(d)}(t)\Big]\Big\}\\
 &+& \sum_{n=1}^N\sum_{d \in \mD}g(\bar{Q}_n^{(d)}(t))u_n^{(d)}(t)+ \sum_{n=1}^N\sum_{d \in \mD}(\tilde{A}_n^{(d)}(t)+N r_{max})^2.
\een

Taking the expectation of both sides, given the state at time $t$ is known, yields
\ben
\expectS{\Delta V(t)} &\leq& \sum_{n=1}^N\sum_{d \in \mD}\Bigl\{g(\bar{Q}_n^{(d)}(t))\mathbb{E}_{\mS(t)}[\rho_n^{(d)} +\sum_{i=1}^NR^{(d)}_{in} x_{in}^{(d)}(t)-\sum_{j=1}^NR^{(d)}_{nj} x_{nj}^{(d)}(t)] \Bigr\}\\
& + & \expectS{\sum_{n=1}^N\sum_{d \in \mD}g(\bar{Q}_n^{(d)}(t))u_n^{(d)}(t)}+C_1,
\een
where
$
C_1= \expect{\sum_{n=1}^N\sum_{d \in \mD}(\tilde{A}_n^{(d)}(t)+N r_{max})^2} < \infty,
$
because $\expect{\tilde{A}_n^{(d)}(t)^2}<  \infty$.
\begin{lemma}
\label{lemma:bound_third_term_geo}
There exists a positive constant $C_2$ such that, for all $\mS(t)$,
$$\sum_{n=1}^N\sum_{d \in \mD} \expectS{g(\bar{Q}_n^{(d)}(t))u_n^{(d)}(t)}\le C_2.$$
\end{lemma}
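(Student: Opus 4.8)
The plan is to bound the summand $g(\bar Q_n^{(d)}(t))\,u_n^{(d)}(t)$ by a single constant that does not depend on the state $\mS(t)$; such a bound then survives both the conditional expectation and the finite sum over the $N|\mD|$ pairs $(n,d)$. Two observations will drive this: (a) wasted service can be positive only when the MAC-layer queue is small, and (b) the expected total queue $\bar Q_n^{(d)}(t)$ is always at most a fixed multiple of the MAC-layer queue $q_n^{(d)}(t)$.

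For (a), I would argue from the scheduling rule in Section \ref{sec:algorithm}: for each $n$ and $d$ at most one neighbour $j$ has $R^{(d)}_{nj}=1$, and then $x^{(d)}_{nj}(t)=\tilde x^{*}_{nj}(t)\le r_{max}$, so $\sum_{j}R^{(d)}_{nj}x^{(d)}_{nj}(t)\le r_{max}$. Hence $u_n^{(d)}(t)>0$ forces $q_n^{(d)}(t)<r_{max}$, and in every case $u_n^{(d)}(t)\le r_{max}$.

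For (b), unless $n$ is a source with $d=d(n)$ there is nothing to check since $\bar Q_n^{(d)}(t)=q_n^{(d)}(t)$. When $n$ is a source and $d=d(n)$, equation (\ref{eq:q_l_expression}) together with $\sigma_{nf}(t)\le 1/\eta_{min}$ gives $\bar Q_n^{(d)}(t)\le q_n^{(d)}(t)+\frac{1}{\eta_{min}}\sum_{f=1}^{N_n(t)}\xi_{nf}(t)$. The crucial ingredient — and the step I expect to be the main obstacle — is the pathwise inequality $\sum_{f=1}^{N_n(t)}\xi_{nf}(t)\le q_n^{(d)}(t)$. I would justify it as follows: whenever $\xi_{nf}(t)=1$, file $f$ still has unsent packets at the Transport layer, so by the ingress-queue-based window mechanism and the order of events described above its congestion window, which has size at least one, is completely occupied by MAC-layer packets of file $f$ at the instant the schedule is computed; since packets of distinct files are distinct and all lie in $q_n^{(d(n))}$, counting them gives the claim. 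This yields $\bar Q_n^{(d)}(t)\le(1+1/\eta_{min})\,q_n^{(d)}(t)$, which holds trivially in the remaining case as well, hence for all $n,d,t$.

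Finally I would combine the two facts. Because $g$ is increasing and concave on $[0,\infty)$ with $g(0)=0$, substituting $x=\frac{1}{c}(cx)+(1-\frac{1}{c})\cdot 0$ into the concavity inequality gives $g(cx)\le c\,g(x)$ for all $c\ge 1$ and $x\ge 0$; taking $c=1+1/\eta_{min}$ gives $g(\bar Q_n^{(d)}(t))\le(1+1/\eta_{min})\,g(q_n^{(d)}(t))$. If $u_n^{(d)}(t)=0$ the summand is zero; otherwise $q_n^{(d)}(t)<r_{max}$ and $u_n^{(d)}(t)\le r_{max}$, so monotonicity of $g$ gives
\[
g(\bar Q_n^{(d)}(t))\,u_n^{(d)}(t)\le(1+1/\eta_{min})\,g(r_{max})\,r_{max}.
\]
This bound is pathwise, hence holds after conditioning on $\mS(t)$, and summing over the $(n,d)$ pairs proves the lemma with $C_2:=N|\mD|\,(1+1/\eta_{min})\,g(r_{max})\,r_{max}$. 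The one genuinely delicate point is the pathwise inequality $\sum_f\xi_{nf}(t)\le q_n^{(d)}(t)$, where the precise semantics of window flow control — at least one MAC-layer packet per active file at all times — must be invoked with care about re-indexing and possible window changes between slots.
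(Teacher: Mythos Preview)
Your proof is correct and follows essentially the same route as the paper: both argue that $u_n^{(d)}(t)>0$ forces the MAC queue to be small, that the minimum-window-one assumption then caps the number of active files (hence $\bar Q_n^{(d)}$) by a constant multiple of $q_n^{(d)}$, and that monotonicity of $g$ finishes the bound. The only cosmetic differences are that you exploit the single-next-hop property to use $r_{max}$ where the paper uses the looser $Nr_{max}$, and you pass through the concavity inequality $g(cx)\le c\,g(x)$ where the paper simply applies monotonicity to $\bar Q_n^{(d)}\le Nr_{max}(1+1/\eta_{min})$; both yield a valid constant $C_2$.
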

Using Lemma \ref{lemma:bound_third_term_geo} and changing the order of summations, we have
\ben \label{eq: drift1}
\expectS{\Delta V(t)} &\leq&  \sum_{n=1}^N\sum_{d \in \mD}g(\bar{Q}_n^{(d)}(t))\rho_n^{(d)} - \expectS{\sum_{(i,j) \in \mathcal{L}}\sum_{d \in \mD} x_{ij}^{(d)}(t)(g(\bar{Q}_i^{(d)}(t))-g(\bar{Q}_j^{(d)}(t)))}+C_1+C_2.
\een
Recall that the link weight that is actually used in the algorithm is based on the MAC-layer queues as in (\ref{diff log})-(\ref{eq: link weight}).
For the analysis, we also define a new link weight based on the state as
\be \label{state weight}
W_{ij}(t)=\max_{d \in \mD: R^{(d)}_{ij}=1} W_{ij}^{(d)}(t),
 \ee
where, for a link $(i,j) \in \mathcal{L}$ with $ R^{(d)}_{ij}=1$,
 \be \label{state weight d}
 W_{ij}^{(d)}(t):=g(\bar{Q}_i^{d}(t))- g(\bar{Q}_j^{d}(t)).
 \ee
Then, the two types of link weights only differ by a constant as stated by the following lemma.
\begin{lemma} \label{lemma}
Let $W_{ij}(t)$ and $w_{ij}(t)$, $(i,j) \in \mathcal{L}$, be the link weights defined by (\ref{state weight})-(\ref{state weight d}) and (\ref{diff log})-(\ref{eq: link weight}) respectively. Then at all times
$$|W_{ij}(t)-w_{ij}(t)| \leq \frac{\log(1+1/\eta_{min})}{h(0)}.$$
\end{lemma}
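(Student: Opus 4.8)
The plan is to bound the discrepancy on each per-destination link weight $|W_{ij}^{(d)}(t)-w_{ij}^{(d)}(t)|$ and then transfer the bound to the max-over-$d$ weights. Recall $W_{ij}^{(d)}(t)=g(\bar Q_i^{(d)}(t))-g(\bar Q_j^{(d)}(t))$ while $w_{ij}^{(d)}(t)=g(q_i^{(d)}(t))-g(q_j^{(d)}(t))$, so
\ben
|W_{ij}^{(d)}(t)-w_{ij}^{(d)}(t)| \le |g(\bar Q_i^{(d)}(t))-g(q_i^{(d)}(t))| + |g(\bar Q_j^{(d)}(t))-g(q_j^{(d)}(t))|.
\een
For any node $n$ and destination $d$, either $\bar Q_n^{(d)}(t)=q_n^{(d)}(t)$ (cases (ii) and (iii)), in which case the term vanishes, or $n$ is a source with $d=d(n)$, in which case by (\ref{eq:q_l_expression}) we have $0\le \bar Q_n^{(d)}(t)-q_n^{(d)}(t)=\sum_{f=1}^{N_n(t)}\sigma_{nf}(t)\xi_{nf}(t)$.

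First I would argue that this sum is at most $1/\eta_{min}$. The key observation is the ingress queue-based congestion control together with the window lower bound: for at most one file $f$ can $\xi_{nf}(t)=1$ hold while that file still contributes to $\bar Q_n^{(d)}$ beyond its MAC-layer packets. Indeed, a file with $\xi_{nf}(t)=1$ has not yet injected its last packet; but whenever its congestion window is not full the Transport layer immediately refills it, so a file with remaining Transport-layer packets must have a full window, i.e.\ all $\mathcal{W}_{nf}(t)$ of its packets are already counted in $q_n^{(d)}(t)$ — no, more carefully: the quantity $\sigma_{nf}(t)\xi_{nf}(t)$ counts the \emph{mean} residual size, and the point is that at most finitely many files can simultaneously have $\xi_{nf}=1$. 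Actually the cleanest route is: by the refill mechanism, at any time at most one file at source $n$ has both a non-full window and remaining packets is false in general; instead I would use that a file is \emph{active at the Transport layer} ($\xi_{nf}=1$) only if its window is full, hence $U_{nf}(t)>\mathcal{W}_{nf}(t)\ge 1$, and the files are served essentially one at a time in index order so that all but one have departed — let me instead take the robust bound: the relevant contribution is $\sum_f \sigma_{nf}(t)\xi_{nf}(t)$, and since $\xi_{nf}(t)=\mathds{1}\{U_{nf}(t)>\mathcal{W}_{nf}(t)\}$, by the FIFO/indexing at the source only the lowest-indexed file can be "in service" with packets still queued behind the window; thus at most one index $f$ has $\xi_{nf}(t)=1$, giving $\sum_f \sigma_{nf}(t)\xi_{nf}(t)\le \max_i 1/\eta_i = 1/\eta_{min}$. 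This is the step I expect to be the main obstacle, because it relies on the precise semantics of "ingress queue-based congestion control" and the re-indexing convention, and making it airtight requires carefully tracking which files can have remaining Transport-layer packets at the same instant.

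Given $0\le \bar Q_n^{(d)}(t)-q_n^{(d)}(t)\le 1/\eta_{min}$, I would then bound $|g(\bar Q_n^{(d)}(t))-g(q_n^{(d)}(t))|$ using concavity of $g$ with $g'\le 1$ and, for the sharper stated constant, the explicit form $g(x)=\log(1+x)/h(x)$ with $h$ increasing and $h(0)>0$: since $h$ is increasing and both arguments are nonnegative,
\ben
|g(\bar Q_n^{(d)}(t))-g(q_n^{(d)}(t))| \le \frac{\log(1+\bar Q_n^{(d)}(t))-\log(1+q_n^{(d)}(t))}{h(q_n^{(d)}(t))} \le \frac{\log\!\left(1+\tfrac{\bar Q_n^{(d)}(t)-q_n^{(d)}(t)}{1+q_n^{(d)}(t)}\right)}{h(0)} \le \frac{\log(1+1/\eta_{min})}{h(0)}.
\een
Note that for a source-destination pair $(n,d(n))$, at most one of nodes $i,j$ of a link can be that source node (the source cannot be its own destination, so if $i=n$ is the source then $j\ne d(n)$ and the $j$-term vanishes, and vice versa); hence only one of the two terms above is nonzero, yielding $|W_{ij}^{(d)}(t)-w_{ij}^{(d)}(t)|\le \log(1+1/\eta_{min})/h(0)$ for every $d$ with $R_{ij}^{(d)}=1$.

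Finally I would lift this to the max-weights: for any functions with $|a_d-b_d|\le c$ for all $d$ in an index set, $|\max_d a_d-\max_d b_d|\le c$ (take $d^\star$ achieving $\max_d a_d$; then $\max_d a_d - \max_d b_d \le a_{d^\star}-b_{d^\star}\le c$, and symmetrically). Applying this with $a_d=W_{ij}^{(d)}(t)$, $b_d=w_{ij}^{(d)}(t)$, and $c=\log(1+1/\eta_{min})/h(0)$ over $\{d\in\mathcal D: R_{ij}^{(d)}=1\}$ gives $|W_{ij}(t)-w_{ij}(t)|\le \log(1+1/\eta_{min})/h(0)$, which is the claim. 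The whole argument is uniform in $t$ and in the link, so it holds at all times as stated.
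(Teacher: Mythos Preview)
Your argument has a genuine gap at the step you yourself flag as the main obstacle: the claim that $\sum_{f}\sigma_{nf}(t)\xi_{nf}(t)\le 1/\eta_{\min}$, i.e.\ that at most one file at a source can have $\xi_{nf}(t)=1$, is false under the model. The index on files is purely for bookkeeping; there is no rule that files are served one at a time. Many files can coexist at source $n$, each with its own congestion window of size $\ge 1$, each with remaining Transport-layer packets ($\xi_{nf}=1$). So the additive bound $\bar Q_n^{(d)}(t)-q_n^{(d)}(t)\le 1/\eta_{\min}$ does not hold; the gap $\bar Q_n^{(d)}-q_n^{(d)}$ can be as large as $N_n(t)/\eta_{\min}$, which is unbounded.

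The paper's fix is to observe that, because every existing file has window size at least one, each file contributes at least one packet to the MAC queue; hence $N_n(t)\le q_n^{(d)}(t)$ and therefore $\bar Q_n^{(d)}(t)\le q_n^{(d)}(t)(1+1/\eta_{\min})$, a \emph{multiplicative} bound. This is exactly where the logarithmic form of $g$ is essential: from $\bar Q\le q(1+1/\eta_{\min})$ one gets $\log(1+\bar Q)\le \log(1+q)+\log(1+1/\eta_{\min})$ and, dividing by $h(\cdot)\ge h(0)$, the stated constant. Your chain of inequalities for $g$ would actually still go through with this multiplicative bound (since $(\bar Q-q)/(1+q)\le 1/\eta_{\min}$), but your justification of the premise is wrong.

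A second, smaller issue: your ``only one endpoint can be a source for destination $d$'' argument is not valid either, since two distinct sources $i$ and $j$ can share the same destination $d$. Fortunately it is unnecessary. Because $g(\bar Q_n^{(d)})-g(q_n^{(d)})\in[0,c]$ for \emph{every} node $n$ (with $c=\log(1+1/\eta_{\min})/h(0)$), the difference $W_{ij}^{(d)}-w_{ij}^{(d)}=[g(\bar Q_i^{(d)})-g(q_i^{(d)})]-[g(\bar Q_j^{(d)})-g(q_j^{(d)})]$ lies in $[-c,c]$ directly, without any triangle inequality or case analysis on which endpoint is a source. Your final max-over-$d$ step is fine and matches the paper.
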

\begin{proof}
Recall that, at each node $n$, for all destinations $d \neq d(n)$, we have $\bar{Q}_n^{d}=q_n^{d}$. If $d=d(n)$ is the destination of $n$, then $\bar{Q}_n^{d}$ consists of:
(i) packets of $d$ received from upstream flows that use $n$ as an intermediate relay, and
(ii) MAC-layer packets received from the files generated at $n$ itself. Since $1 \leq \mathcal{W}_{nf}(t) \leq \mW_{cong}$, the number of files with destination $d$ that are generated at node $n$ or have packets at node $n$ as an intermediate relay, is at most $q_n^{(d)}(t)$. Therefore, it is clear that
$$q_n^{d} \leq \bar{Q}_n^{d} \leq q_n^{d} + q_n^{d}\frac{1}{\eta_{min}}.$$
Hence, for all $n$ and $d$, using a log-type function, as the function $g$ in (\ref{function}), yields
\be \label{ineq 1}
g(q_n^{d}) \leq g(\bar{Q}_n^{d}) &\leq & g\left(q_n^{d} (1+1/{\eta_{min}})\right) \nonumber \\
 & \leq & \frac{\log\left((1+q_n^{d}) (1+1/{\eta_{min}})\right)}{h(q_n^{d} (1+1/{\eta_{min}}))}  \nonumber \\
 & \leq & g(q_n^{d})+\frac{\log(1+1/\eta_{min})}{h(0)}.
\ee

It then follows that, $\forall d \in \mD$, and $\forall (i,j) \in \mathcal{L}$ with $R^{(d)}_{ij}=1$,
\be \label{ineq 2}
|W_{ij}^{(d)}-w_{ij}^{(d)}|\leq \log(1+1/\eta_{min})/h(0).
\ee
Let $d_{ij}^*:=\argmax_{d: R_{ij}^{(d)}=1} W_{ij}^{(d)}$ and $\tilde{d}^*_{ij}$ as in (\ref{eq: dtilde}).
Then, using (\ref{ineq 2}), we have that
$$w_{ij} \geq  w_{ij}^{(d_{ij}^*)}\geq W_{ij}-\log(1+1/\eta_{min})/h(0),$$
and, $$W_{ij} \geq W_{ij}^{(\tilde{d}_{ij}^*)} \geq w_{ij}-\log(1+1/\eta_{min})/h(0).$$ This concludes the proof.
\end{proof}
Let $x^*(t)$ be the max weight schedule based on weights $\{W_{ij}(t): (i,j) \in \mathcal{L}\}$, i.e.,
\be \label{eq: xstar}
x^*(t)=\argmax_{x \in \mathcal{R}} \sum _{(i,j) \in \mathcal{L}}x_{ij}W_{ij}(t).
\ee
Note the distinction between ${x}^*$ and $\tilde{x}^*$ as we used $\tilde{x}^*(t)$ in (\ref{eq: xstartilde}) to denote the Max Weight schedule based on MAC-layer queues.
Then, the weights of the schedules $\tilde{x}^*$ and $x^*$ differ only by a constant for all queue values as we show next.
First note that, from definition of $x^*$,
\be
\sum_{(i,j) \in \mathcal{L}}x^*_{ij}W_{ij}(t)-\sum _{(i,j) \in \mathcal{L}}\tilde{x}^*_{ij}W_{ij}(t) \geq 0.
\ee
Next, we have
\be
\sum_{(i,j) \in \mathcal{L}}x^*_{ij}W_{ij}(t)-\sum _{(i,j) \in \mathcal{L}}\tilde{x}^*_{ij}W_{ij}(t)&=& \sum _{(i,j) \in \mathcal{L}}x^*_{ij}W_{ij}(t)-\sum_{(i,j) \in \mathcal{L}}x^*_{ij}w_{ij}(t) \label{first1}\\
&&+  \sum_{(i,j) \in \mathcal{L}}x^*_{ij}w_{ij}(t)-\sum _{(i,j) \in \mathcal{L}}\tilde{x}^*_{ij}w_{ij}(t) \label{second2}\\
&&+ \sum _{(i,j) \in \mathcal{L}}\tilde{x}^*_{ij}w_{ij}(t) -\sum _{(i,j) \in \mathcal{L}}\tilde{x}^*_{ij}W_{ij}(t) \label{third3}\\
& \leq&  2 N^2r_{max} \log(1+1/\eta_{min})/h(0), \nonumber
\ee
because, by Lemma \ref{lemma}, (\ref{first1}) and (\ref{third3}) are less than $N^2r_{max}\log(1+1/\eta_{min})/h(0)$ each, and (\ref{second2}) is negative by definition of $\tilde{x}^*$.
Hence, \textit{under MAC scheduling $\tilde{x}^*$}, the Lyapunov drift is bounded as follows.
\ben
\expectS{\Delta V(t)} &\leq&  \sum_{n=1}^N\sum_{d \in \mD}\left\{g(\bar{Q}_n^{(d)}(t))\rho_n^{(d)}\right\}- \expectS{\sum_{(i,j) \in \mathcal{L}}x_{ij}^*(t)W_{ij}} + C,
\een
where $C=C_1+C_2+2N^2r_{max}\log(1+1/\eta_{min})/h(0)$.

Accordingly, using (\ref{state weight})-(\ref{state weight d}), and changing the order of summations in the right hand side of the above inequality yields
\ben
\expectS{\Delta V(t)}  &\leq& \sum_{n=1}^N\sum_{d \in \mD} \left\{g(\bar{Q}_n^{(d)}(t))\mathbb{E}_{\mS(t)}\bigl[\rho_n^{(d)}+\sum_{i=1}^NR^{(d)}_{in} {x^*}_{in}^{(d)}(t)-\sum_{j=1}^NR^{(d)}_{nj} {x^*}_{nj}^{(d)}(t)\bigr]\right\} + C,
\een
where ${x^*}_{ij}^{(d)}(t)=x^*_{ij}(t)$ for $d = d^*_{ij}$ (break ties at random) and is zero otherwise.
The rest of the proof is standard. Since load $\brho$ is strictly inside the capacity region, there must exist a $\epsilon >0$ and a $\mu \in \mbox{Co}(\mathcal{R})$ such that
\be
\rho_n^{(d)}+\epsilon \leq \sum_{j=1}^NR^{(d)}_{nj} {\mu}_{nj}^{(d)}-\sum_{i=1}^N R^{(d)}_{in}{\mu}_{in}^{(d)}\ ; \forall n \in \mathcal{N} ,\forall d \in \mD.
\ee
Hence, for any $\delta>0$,
\ben
\expectS{\Delta V(t)} &\leq& \sum_{n=1}^N\sum_{d \in \mD}g(\bar{Q}_n^{(d)}(t))\left[\sum_{i=1}^NR^{(d)}_{in} {x^*}_{in}^{(d)}(t)-\sum_{j=1}^N R^{(d)}_{nj} {x^*}_{nj}^{(d)}(t)\right]\\
&& -  \sum_{n=1}^N\sum_{d \in \mD}g(\bar{Q}_n^{(d)}(t))\left[\sum_{i=1}^NR^{(d)}_{in} {\mu}_{in}^{(d)}(t)-\sum_{j=1}^N R^{(d)}_{in}{\mu}_{nj}^{(d)}(t)\right]\\
&& - \epsilon\sum_{n=1}^N\sum_{d \in \mD}g(\bar{Q}_n^{(d)}(t)) + C .
\een
But $\sum _{(i,j) \in \mathcal{L}}x^*_{ij}W_{ij}(t) \geq \sum _{(i,j) \in \mathcal{L}}\mu_{ij} W_{ij}(t)$, $\forall \mu \in \mathrm{Co}(\mathcal{R})$, hence,
\ben
\expectS{\Delta V(t)} &\leq&  -\epsilon \sum_{n=1}^N\sum_{d \in \mD}g(\bar{Q}_n^{(d)}(t)) + C
\leq - \delta,
\een
whenever $\max_{n,d}\bar{Q}_n^{(d)} \geq g^{-1}\big(\frac{C_2+\delta}{\epsilon}\big)$ or, as a sufficient condition, whenever $\max_{n,d}q_n^{(d)} \geq g^{-1}\big(\frac{C_2+\delta}{\epsilon}\big)$. Therefore, it follows that the system is stable by an extension of the Foster-Lyapunov criteria \cite{asm} (Theorem 3.1 in \cite{eph}). In particular, queue sizes and the number of files in the system are stable.
\begin{remark}
Although we have assumed that file sizes follow a mixture of geometric distributions, our results also hold for the case of bounded file sizes with general distribution. The proof argument for the latter case is obtained by minor modifications of the proof presented in this paper (see \cite{javad}) and, hence, has been omitted for brevity.
\end{remark}

\section{Distributed Implementation}\label{sec: distributed}
The optimal scheduling algorithm in Section \ref{sec:algorithm} requires us to find a maximum weight-type schedule at each time, i.e., needs to solve (\ref{eq: xstartilde}) at each time $t$. This is a formidable task, hence, in this section, we design a distributed version of the algorithm based on \textit{Glauber Dynamics}.

For simplicity, we consider the following criterion for successful packet reception: Packet transmission over link $(i,j) \in \mathcal{L}$ is successful if none of the neighbors of node $j$ are transmitting. Furthermore, we assume that every node can transmit to at most one node at each time, receive from at most one node at each time, and cannot transmit and receive simultaneously (over the same frequency band). This especially models the packet reception in the case that the set of neighbors of node $i$, i.e., $C(i)=\{j: (i,j)   \in \mathcal{L}\}$, is the set of nodes that are within the transmission range of node $i$ and the interference caused by node $i$ at all other nodes, except its neighbors, is negligible. Moreover, the packet transmission over $(i,j)$ is usually followed by an ACK transmission from receiver to sender, over $(j,i)$. Hence, for a \textit{synchronized data/ACK system}, we can define a \textit{Conflict Set} (CS) for link $(i,j)$ as
\be
\mathrm{CS}_{(i,j)} &= & \Big\{(a,b) \in \mathcal{L}:a \in C(j) \mbox{, or } b \in C(i),  \mbox{or }a \in \{i,j\} \mbox{, or } b \in \{i,j\} \Big\}.
\ee
This ensures that when the links in $\mathrm{CS}_{(i,j)}$ are inactive, the data/ACK transmission over $(i,j)$/ $(j,i)$ is successful.

Furthermore, for simplicity, assume that for each link $(i,j)$, $x_{ij} \in \{0,1\}$, i.e, its service rate is one packet per time slot.
 We can capture the interference constraints by using a \textit{conflict graph} $\cG(\cV,\cE)$, where each vertex in $\cV$ is a communication link in the wireless network. There is an edge $((i,j),(a,b)) \in \cE$ between vertices $(i,j)$ and $(a,b)$ if simultaneous transmissions over communication links $(i,j)$ and $(a,b)$ are not successful. Therefore, at each time slot, the active links should form an independent set of $\cG$, i.e., no two scheduled vertices can share an edge in $\cG$. Let $\mathcal{R}$ be the set of all such feasible schedules and $|\mathcal{L}|$ denote the number of communication links in the wireless network.



We say that a node is active if it is a sender or a receiver for some active link. Inactive nodes can \textit{sense the wireless medium} and know if there is an active node in their neighborhood. This is possible because we use a synchronized data/ACK system and detecting active nodes can be performed by sensing the data transmission of active senders and sensing the ACK transmission of active receivers. Hence, using such carrier sensing, nodes $i$ and $j$ know if the channel is idle, i.e., $\sum_{(a,b)\in \mathrm{CS}_{(i,j)}}x_{ab}=0$ or if the channel is busy, i.e., $\sum_{(a,b)\in \mathrm{CS}_{(i,j)}}x_{ab} \geq 1$.
\begin{remark}
For the case of single hop networks, the link weight (\ref{eq: link weight}) is reduced to $w_{ij}(t)=g(1+q_i(t))/h(q_i(t))$ where $i$ is the source and $j$ is the destination of flow over $(i,j)$. Such a weight function is exactly the one that under which throughput optimality of CSMA has been established in \cite{ghaderi}.
Next, we will propose a version of CSMA that is suitable for the general case of multihop flows and will prove its throughput optimality. The proof uses techniques originally developed in \cite{shah, shah2} for continuous-time CSMA algorithms, and adapted in \cite{ghaderi} for the discrete-time model considered here.
\end{remark}
\subsection{Basic CSMA Algorithm for Multihop Networks }
For our algorithm, based on the MAC layer information, we define a modified weight for each link $(i,j)$ as
\be \label{eq: link weight modified}
\tilde{w}_{ij}(t) = \max_{d:R_{ij}^{(d)}=1} \tilde{w}^{(d)}_{ij}(t),
\ee
where
\be
\tilde{w}^{(d)}_{ij}(t)=\tilde{g}\left(q_i^{(d)}(t)\right)-\tilde{g}\left(q_j^{(d)}(t)\right),
\ee
and,
\be
\tilde{g}\left(q_i^{(d)}(t)\right)=\max \left\{g\left(q_i^{(d)}(t)\right),g^*(t)\right\}
\ee
where the function $g$ is the same as (\ref{function}) defined for the centralized algorithm, and
\be \label{eq: gstar}
g^*(t):=\frac{\epsilon}{4N^3} g(q_{max}(t)),
\ee
where $q_{max}(t):=\max_{i,d} q^{(d)}_i(t)$ is the maximum MAC-layer queue length in the network at time $t$ and assumed to be known, and $\epsilon$ is an arbitrary small but fixed positive number.
Note that if we remove $g^*(t)$ from the above definition, then $\tilde{w}_{ij}$ is equal to $w_{ij}$ in (\ref{diff log})-(\ref{eq: link weight}).

Consider the conflict graph $\cG(\cV,\cE)$ of the network as defined earlier. At each time slot $t$, a link $(i,j)$ is chosen uniformly at random, with probability $\frac{1}{|\mathcal{L}|}$, then
\begin{itemize}
\item [(i)] If $\tilde{x}_{ab}(t-1)=0$ for all links $(a,b) \in \mathrm{CS}_{(i,j)}$, then
$\tilde{x}_{ij}(t)=1$ with probability $p_{ij}(t)$, and $\tilde{x}_{ij}(t)=0$ with probability $1-p_{ij}(t)$ .\\
Otherwise, $\tilde{x}_{ij}(t)$=0.
\item[(ii)] $\tilde{x}_{ab}(t)=x_{ab}(t-1)$ for all $(a,b) \neq (i,j)$.
\item [(iii)] $x^{(d)}_{ij}(t)=\tilde{x}_{ij}(t)$ if $d=\argmax_{d:R_{ij}^{(d)}=1} \tilde{w}^{(d)}_{ij}(t)$ (break ties at random), and zero otherwise.
\end{itemize}
We choose $p_{ij}(t)$ to be
\be
p_{ij}(t)=\frac{\exp(\widetilde{w}_{ij}(t))}{1+\exp(\widetilde{w}_{ij}(t))}.
\ee
It turns out that the choice of function $g$ is crucial in establishing the throughput optimality of the algorithm for general networks. The following Theorem states the main result regarding the throughput optimality of the basic CSMA algorithm.
\begin{theorem}\label{th1}
Consider any $\epsilon > 0$. Under the function $g$ specified in (\ref{function}), the basic CSMA algorithm can stabilize the network for any $\brho \in (1-3\epsilon)\mathcal{C}$, independent of Transport-layer ingress queue-based congestion control (as long as the minimum window
size is one and the window sizes are bounded) and the (nonidling)
service discipline used to serve packets of active
queues.
\end{theorem}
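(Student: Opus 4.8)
\quad The approach is to re-run the Lyapunov argument behind Theorem~\ref{theorem}, with the very same Lyapunov function $V(\mS(t))=\sum_{n}\sum_{d\in\mD}G(\bar{Q}_n^{(d)}(t))$, and to compensate for the fact that Glauber dynamics does not solve the max-weight problem (\ref{eq: xstartilde}) within one slot by carrying out the drift analysis over \emph{blocks} of slots, in the spirit of \cite{shah,shah2,ghaderi}. The first step is to notice that the chain of estimates in the proof of Theorem~\ref{theorem} up through Lemmas~\ref{lemma:B_l}, \ref{lemma:bound_third_term_geo} and~\ref{lemma} is indifferent to how the schedule is chosen --- it uses only feasibility of the served rates together with the bounded-window, nonidling edge behaviour --- so with the CSMA schedule $\tilde{x}(t)$ of Section~\ref{sec: distributed} one still obtains
\[
\expectS{\Delta V(t)}\ \le\ \sum_{n}\sum_{d\in\mD}g(\bar{Q}_n^{(d)}(t))\,\rho_n^{(d)}\ -\ \expectS{\ \sum_{(i,j)\in\mathcal{L}}\tilde{x}_{ij}(t)\,\tilde{w}_{ij}(t)\ }\ +\ C',
\]
where $C'=C+O(N^2)+\tfrac{\epsilon}{2N}g(q_{max}(t))$; the extra term appears because, by Lemma~\ref{lemma} and (\ref{eq: gstar}), $\tilde{w}_{ij}$ differs from the MAC weight $w_{ij}$ of (\ref{diff log}) and from the state weight $W_{ij}$ of (\ref{state weight}) by at most a constant plus $2g^*(t)$ per link. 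Thus the whole job reduces to lower bounding the (time-averaged) weight $\sum_{(i,j)}\tilde{x}_{ij}(t)\tilde{w}_{ij}(t)$ the algorithm actually collects.

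\textbf{Static (Gibbs) estimate.}\quad The activation probability $p_{ij}(t)=e^{\tilde{w}_{ij}(t)}/(1+e^{\tilde{w}_{ij}(t)})$ is chosen precisely so that one Glauber step, with the weights frozen at their time-$t$ values, is reversible with respect to the product-form Gibbs distribution $\pi_t(\sigma)\propto\exp(\sum_{(i,j)\in\sigma}\tilde{w}_{ij}(t))$ on the feasible schedules (independent sets of $\cG$) $\sigma\in\mathcal{R}$. Since $\pi_t$ maximises expected weight plus entropy, and the entropy is at most $|\mathcal{L}|\log 2$, we get $\mathbb{E}_{\sigma\sim\pi_t}\big[\sum_{(i,j)\in\sigma}\tilde{w}_{ij}(t)\big]\ge\max_{\sigma\in\mathcal{R}}\sum_{(i,j)\in\sigma}\tilde{w}_{ij}(t)-|\mathcal{L}|\log 2$. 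This is where the floor $g^*(t)$ of (\ref{eq: gstar}) is needed: telescoping the floored per-destination weights $\tilde{w}^{(d^*)}_{ij}$ along the routing path of destination $d^*$ from the node at which $q_{max}(t)$ is attained down to $d^*$ itself --- this path carries total weight $\tilde{g}(q_{max}(t))-\tilde{g}(0)=g(q_{max}(t))-g^*(t)$ and has at most $N$ links, so some single link on it, a one-vertex independent set, carries weight at least $(g(q_{max}(t))-g^*(t))/N$ --- shows that $\max_{\sigma\in\mathcal{R}}\sum_{(i,j)\in\sigma}\tilde{w}_{ij}(t)\ge g(q_{max}(t))/(2N)$ once $q_{max}(t)$ is large, so the constant entropy gap is negligible. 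Combining this with $|\tilde{w}_{ij}-W_{ij}|\le O(1)+2g^*(t)$ and the identity $\max_{\mu\in\CoR}\sum\mu_{ij}W_{ij}(t)=\max_{\sigma\in\mathcal{R}}\sum_{(i,j)\in\sigma}W_{ij}(t)$ yields
\[
\mathbb{E}_{\sigma\sim\pi_t}\Big[\sum_{(i,j)\in\sigma}\tilde{w}_{ij}(t)\Big]\ \ge\ \max_{\mu\in\CoR}\sum_{(i,j)}\mu_{ij}W_{ij}(t)\ -\ O(N^2)\ -\ \tfrac{\epsilon}{2N}\,g(q_{max}(t)).
\]

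\textbf{Dynamic estimate --- the main obstacle.}\quad Within any single slot the Glauber chain is far from $\pi_t$: there is no time-scale separation. The remedy, and the technically hard part, is to analyse $\expectS{V(\mS(t+T(t)))-V(\mS(t))}$ over a block of $T(t)$ slots, where $T(t)$ is a slowly growing function of $q_{max}(t)$ chosen so that two competing requirements hold simultaneously on the block: (a) the weights barely move --- each MAC queue changes by at most $Nr_{max}+\mW_{cong}$ per slot and, since $g$ is a log-type function with $g'\le1$ decreasing, every $\tilde{w}_{ij}$ moves by only $O(T(t)/q_{max}(t))$ over the block, and so does $\pi_s$ in total variation; and (b) the Glauber chain run with these (almost) frozen weights comes within $\epsilon$ of $\pi_t$, which takes of order $\tau_{\mathrm{mix}}(t)$ slots, where a conductance or coupling bound as in \cite{shah2,ghaderi} gives $\tau_{\mathrm{mix}}(t)\le\exp\!\big(O(N^2 w_{\max}(t))\big)$ with $w_{\max}(t)=\max_{ij}\tilde{w}_{ij}(t)=O(g(q_{max}(t)))$. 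Requirements (a) and (b) are compatible only if $\exp(O(N^2 g(q_{max})))$ grows strictly slower than $q_{max}$, i.e.\ only if $g$ grows strictly slower than $\log$ --- exactly the restriction anticipated in the remark following Theorem~\ref{theorem}; e.g.\ with $g(x)=\log^\theta(e+x)$, $\theta<1$, one has $\tau_{\mathrm{mix}}(t)=q_{max}(t)^{o(1)}$, so $T(t)=\lceil q_{max}(t)^{1/2}\rceil$ satisfies $T(t)\gg\tau_{\mathrm{mix}}(t)$ while the weight drift $O(q_{max}(t)^{-1/2})\to 0$. For such $T(t)$, a coupling argument shows that over the block $\expectS{\sum_{(i,j)}\tilde{x}_{ij}\tilde{w}_{ij}}\ge T(t)\,\mathbb{E}_{\sigma\sim\pi_t}\big[\sum_{(i,j)\in\sigma}\tilde{w}_{ij}(t)\big]-o(T(t))$, so the algorithm collects essentially the Gibbs-average weight. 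Making the mixing-time bound of (b) precise and balancing it against the slow drift of (a) is the crux of the argument; everything else is bookkeeping.

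\textbf{Conclusion.}\quad Summing the one-step bound over the block and inserting the static and dynamic estimates gives
\[
\expectS{V(\mS(t+T(t)))-V(\mS(t))}\ \le\ T(t)\Big(\sum_{n,d}g(\bar{Q}_n^{(d)})\rho_n^{(d)}-\max_{\mu\in\CoR}\sum_{(i,j)}\mu_{ij}W_{ij}(t)\Big)+T(t)\Big(O(N^2)+\tfrac{\epsilon}{N}g(q_{max}(t))+o(1)\Big).
\]
Because $\brho$ is strictly inside $(1-3\epsilon)\mathcal{C}$, there is a $\mu\in\CoR$ whose flow-conservation inequalities carry a slack that, after a careful accounting, dominates every accumulated $O(\epsilon)$ loss --- the $g^*$-flooring, the (constant) entropy gap, the mixing error and the feedback/slow-variation error --- which is exactly what the factor $(1-3\epsilon)$ is calibrated for; hence the right-hand side above is $\le-\delta\,T(t)$ for some constant $\delta>0$ whenever $q_{max}(t)$ exceeds a threshold. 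A state-dependent-drift version of the Foster--Lyapunov criterion (as in \cite{ghaderi}; cf.\ also \cite{asm,eph}) then gives positive recurrence of the Markov chain $\mS(t)$, hence stability of the queues and of the number of files; robustness to the nonidling service discipline and to any bounded, minimum-window-one, ingress-queue-based congestion controller is inherited exactly as in Theorem~\ref{theorem}.
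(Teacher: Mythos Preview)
Your plan is sound and would go through, but it is organised differently from the paper. You run a \emph{multi-step} Lyapunov drift over a forward-looking block of $T(t)$ slots, chosen long enough for Glauber dynamics to mix from an arbitrary start yet short enough that the weights are nearly frozen; you then compare the block-average collected weight to the Gibbs average via the free-energy/entropy bound $\mathbb{E}_{\pi_t}[\text{weight}]\ge\max-|\mathcal{L}|\log 2$. The paper instead keeps a \emph{one-step} drift and shifts all the mixing into a \emph{backward-looking} adiabatic statement (Proposition~\ref{drift} and Lemma~\ref{property}): since $q_{max}$ can drop by at most one per slot, $q_{max}(t)\ge q_{th}+t^*$ guarantees the chain has already spent the previous $t^*$ slots in the slow-drift regime, so $\|\mu_t-\pi_t\|_{TV}\le\delta/4$ \emph{at the current slot}; then a Gibbs-concentration bound (rather than your entropy bound) shows $\pi_t\{\text{weight}<(1-\epsilon)\max\}\le\delta/2$, whence the single-slot schedule is $(1-\epsilon)$-optimal with probability $\ge 1-\delta$. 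That buys the paper the ordinary Foster--Lyapunov criterion and a clean reusable Lemma~\ref{property}; your route avoids the separate adiabatic proposition but needs a state-dependent-drift version and explicit burn-in accounting. Two small repairs to your estimates: the floor $g^*$ is needed primarily to control the per-step drift of $\pi_t$ (cf.\ Lemma~\ref{alpha ratio}) rather than for the telescoping in your static estimate; and the per-step change in $\tilde w_{ij}$ is $O\!\big(g'(g^{-1}(g^*))\big)$, not $O(1/q_{max})$, so the correct block length lies between $\tau_{\mathrm{mix}}$ and $1/\alpha_t$ --- a gap that exists precisely because $\alpha_tT_{t+1}\to 0$ (condition~(\ref{alphaT})) --- rather than being pinned at $q_{max}^{1/2}$.
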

\subsection{Distributed Implementation}
The basic algorithm is based on Glauber-Dynamics with one site update at each time. For distributed implementation, we need a randomized mechanism to select a link uniformly at each time slot. We use the Q-CSMA idea \cite{srikant2} to perform the link selection: Each time slot is divided into a control slot and a data slot. In the control slot, nodes exchange short control messages, similar to RTS/CTS packets in IEEE 802.11 protocol, to come up with a collision-free \textit{decision schedule} $m$. In the data slot, each link $(i,j)$ that is included in the decision schedule performs the basic CSMA algorithm.
The control message sent from node $j$ to $i$ in time slot $t$, contains the carrier sense information of node $j$ at time $t-1$, and the vector of MAC layer queue sizes of node $j$ at time $t$, i.e, $[q^{(d)}_j(t): d \in \mD]$, to determine the weight of link $(i,j)$.

Next, we describe the mechanisms for generation of decision schedules and data transmission schedules in more detail.

\subsection*{Generation of decision schedule}
As in \cite{srikant2}, we divide the control slot into two mini slots. In the first mini slot, each node $i$ chooses one of its neighbors $j \in C(i)$ uniformly at random, then it transmits a RTD (\textit{Request-To-Decide}) packet, containing the ID(index) of node $j$, with probability $\beta_i$. If RTD is received successfully by $j$ (i.e., $j$ and none of the neighbors of $j$ transmit RTD messages), in the second mini-slot, $j$ sends a CTD (\textit{Clear-To-Decide}) packet back to $i$, containing the ID of node $i$. The CTD message is received successfully at $i$ if there is no collision with other CTD messages. Given a successful RTD/CTD exchange over the link $(i,j)$, the link $(i,j)$ will be included in the decision schedule $m$ and no link from $\mathrm{CS}_{(i,j)}$ will be included in $m$. Hence, $m$ is a valid schedule.
So each node $i$ needs to maintain the following memories:
\begin{itemize}
\item $AS_i(t)$/$AR_i(t)$: node $i$ is included in $m(t)$ as a sender/receiver for some link.
\item $ID_i(t)$: the index of the node which is paired with $i$ as a its sender (when $AR_i(t)=1$) or its receiver (when $AS_i(t)=1$).
\item $NR_i(t)/NS_i(t)$: Carrier sense by node $i$, i.e., node $i$ has an active receiver/sender in it neighborhood during data slot $t$.
\end{itemize}
\begin{algorithm}\label{alg dec}
\caption{Decision schedule at control slot $t$ }
\begin{algorithmic}[1]
\STATE For every node $i$, set $AS_i(t)=AR_i(t)=0$.
\STATE In the first mini-slot:

- $AS_i(t)=1$ with probability $\beta_i$; otherwise $AS_i(t)=0$.

-If $AS_i(t)=1$, choose a node $j\in C(i)$ uniformly at random and send a RTD to $j$ and set $ID_i(t)=j$; otherwise listen for RTD messages.
\STATE In the second mini-slot:

-If received a RTD from $j$ in the first mini-slot, send a CTD to $j$ and set $AR_i(t)=1$ and $ID_i=j$; nodes with $AS_i(t)=1$ listen for CTD messages.

-If $AS_i(t)=1$ and CTD received successfully from $ID_i(t)$, include $(i,ID_i(t))$ in $m(t)$, otherwise $AS_i(t)=0$.

\end{algorithmic}
\end{algorithm}
CTD message sent back from a node $j$ to $i$ also contains the carrier sense information of node $j$, i.e., $NR_j(t-1)$ and $NS_j(t-1)$, and the vector of MAC layer queue sizes of node $j$ at time $t$, i.e, $q^{(d)}_j(t)$
\subsection*{Generation of data transmission schedule}
After the control slot, every node $i$ knows if it is included in the decision schedule $m(t)$, as a sender, and also knows its corresponding receiver $ID_i=j$. The data transmission schedule at time $t$, i.e., $x(t)$, is generated based on $x(t-1)$ and $m(t)$. Only those links that are in $m(t)$ can change their states and the state of other links remain unchanged. A link $(i,j)$ that is included in $m(t)$, can start a packet transmission with probability $p_{ij}(t)$ only if its conflict set has been silent during the previous time slot, as in the basic CSMA algorithm.
\begin{algorithm}\label{alg}
\caption{Data transmission schedule at slot $t$}
\begin{algorithmic}[1]
\STATE
- $\forall$ $i$ with $AS_i(t)=1$ and receiver $j=ID_i$,

If no links in $\mathrm{CS}_{(i,j)}$ were active in the previous data slot, i.e., $x_{ij}(t-1) = 1$ or $NR_i(t-1)=NS_j(t-1)=0$,
\begin{itemize}
\item $x_{ij}(t) = 1$ with probability $p_{ij}(t)$,
\item  $x_{ij} = 0$ with probability $\bar{p}_{ij}(t) = 1 - p_{ij}(t)$.
\end{itemize}
Else
 $x_{ij}(t) = 0$.

- $\forall (i,j) \notin m(t)$: $x_{ij}(t) = x_{ij}(t - 1)$.

\STATE In the data slot, use $x(t)$ as the transmission schedule.
\end{algorithmic}
\end{algorithm}
\subsection*{Data transmission and carrier sensing}
In the data slot, we use $x(t)$ for the data transmission. In this slot, every node $i$ will perform of the following.

$x_{ij}(t)=1$: Node $i$ will send a data packet to node $j$.

$x_{ji}(t)=1$: Node $i$ will send an ACK to node $j$ after receiving a data packet from $j$.

All other nodes are inactive and perform carrier sensing. Since the data/ACK transmissions are synchronized in our system, every inactive node $i$ will set $NS_i(t)=0$ is it does not sense any transmission during the data transmission period and set $NS_i(t)=1$ otherwise. Similarly, node $i$ will set $NR_i(t) = 0$ if it senses
no signal during the ACK transmission period and set $NR_i(t) = 1$ otherwise.
\begin{remark}
In IEEE 802.11 DCF, the RTS/CTS exchange is used
to reduce the Hidden Terminal Problem. However, even with
RTS/CTS, the hidden terminal problem can still occur, see \cite{srikant2}. Since, in our synchronized system, RTD and CTD messages are sent in two different mini-slots, this completely eliminates the hidden terminal problem.
\end{remark}
\begin{remark}
To determine the weight at each link, $q_{max}(t)$ is also needed. Instead, each node can maintain an estimate of $q_{max}(t)$ similar to the procedure suggested in \cite{shah}. In fact, it is easy to incorporate such a procedure in our algorithm because, in the control slot, each node can include its estimate of $q_{max}(t)$ in the control messages and update its estimate based on the received control messages. Then we can use Lemma $2$ of \cite{shah} to complete the stability proof. So we do not pursue this issue here. In practical networks $\frac{\epsilon}{4 N^3} \log (1+q_{max}(t)) $ is small and we can use the weight function $g$ directly, and thus, there may not be any need to know $q_{max}(t)$.
\end{remark}
\begin{corollary}\label{dis}
Under the weight function $g$ specified in Theorem \ref{th1}, the distributed algorithm can stabilize the network for any $\brho \in (1-3\epsilon)\mathcal{C} $.
\end{corollary}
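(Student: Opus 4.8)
The plan is to argue that the distributed scheme of Algorithms~\ref{alg dec}--\ref{alg} is, for the purposes of the stability proof, \emph{the same algorithm} as the basic CSMA algorithm of Theorem~\ref{th1}: it induces, for a fixed weight vector $\widetilde{w}$, a Markov chain on the feasible schedules $\mathcal{R}$ (the independent sets of $\cG$) that is reversible with respect to the same product-form distribution
$$
\pi(x)=\frac{1}{Z}\exp\Bigl(\sum_{(i,j)\in\mathcal{L}}x_{ij}\widetilde{w}_{ij}\Bigr),\qquad x\in\mathcal{R},
$$
and whose mixing time is controlled by the same bounds up to constant factors. Once this is in place, the Lyapunov-drift argument behind Theorem~\ref{th1} carries over without change and gives stability for every $\brho\in(1-3\epsilon)\mathcal{C}$, so only this reduction needs to be justified.

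First I would establish three structural properties of the decision schedule $m(t)$ generated by Algorithm~\ref{alg dec}. (a) \emph{$m(t)$ is always a valid schedule}: by the definition of the conflict set $\mathrm{CS}_{(i,j)}$, a successful RTD/CTD exchange over $(i,j)$ precludes a successful exchange over any link in $\mathrm{CS}_{(i,j)}$, so the links placed in $m(t)$ form an independent set of $\cG$; moreover, since RTD and CTD occupy separate mini-slots, the hidden-terminal problem is absent (cf.\ the remark on \cite{srikant2}) and the carrier-sense bits $NR,NS$ fed into Algorithm~\ref{alg} faithfully report whether $\mathrm{CS}_{(i,j)}$ was silent in the previous data slot. (b) \emph{The law of $m(t)$ does not depend on $x(t-1)$} --- it is determined solely by the independent $\beta_i$-coins and the uniform neighbor selections. (c) Choosing each $\beta_i$ to be a fixed constant in $(0,1)$, every feasible link $(i,j)$ lies in $m(t)$ with probability at least a graph-dependent constant $\delta_0>0$, uniformly in $t$ and in the queue lengths.

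Next I would verify reversibility. Conditioned on $m(t)=m$ and on the states of the links outside $m$, which Algorithm~\ref{alg} freezes, the links of $m$ are refreshed by \emph{independent} Bernoulli trials --- a link $(i,j)\in m$ whose conflict set was silent in the previous data slot switches on with probability $p_{ij}(t)=\exp(\widetilde{w}_{ij})/(1+\exp(\widetilde{w}_{ij}))$ and otherwise turns off --- and this parallel update is consistent precisely because $m$ is an independent set, so no two refreshed links conflict. A direct detailed-balance computation, the multihop analogue of the Q-CSMA argument of \cite{srikant2} adapted as in \cite{ghaderi}, shows that each such block update preserves $\pi$; averaging over the state-independent law of $m(t)$ then shows the whole chain is reversible with respect to $\pi$. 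Property (c) yields irreducibility and a spectral-gap lower bound that differs from the single-site bound only by the factor $\delta_0$, so the mixing-time estimates used to prove Theorem~\ref{th1} (following \cite{shah,shah2,ghaderi}) remain valid up to constants.

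With the stationary distribution and the mixing rate matched to those of the basic chain, the rest of the proof is identical to that of Theorem~\ref{th1}: per-slot queue changes are bounded whereas the weights $\widetilde{w}_{ij}(t)$ --- built from $g$ and the scaling $g^*(t)=\frac{\epsilon}{4N^3}g(q_{max}(t))$ --- move slowly when queues are large, so the schedule distribution tracks the time-varying $\pi_t$; the form of $\pi_t$ concentrates on near-maximum-weight independent sets; hence the time-averaged service rate is within $3\epsilon$ of the max-weight rate and $V(\mS(t))$ has strictly negative drift whenever $\brho\in(1-3\epsilon)\mathcal{C}$ and $\max_{n,d}q_n^{(d)}(t)$ is large enough, which gives positive recurrence and therefore stability of the queue lengths and of the number of files. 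I expect the crux of the argument to be exactly the two facts that make the reduction work: that the \emph{parallel} block update over $m(t)$ (not a single-site update) still obeys detailed balance for $\pi$ --- which relies on $m(t)$ being an independent set and independent of the current state --- and that the mixing time of the resulting chain stays polynomially bounded so that the slowly-varying-weights argument of Theorem~\ref{th1} is not disturbed; both follow from property (c) together with the standard Q-CSMA reversibility computation.
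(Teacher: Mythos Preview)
Your proposal is correct and follows essentially the same route as the paper's own extension: the paper simply observes that the distributed scheme is a multiple-site (parallel) Glauber dynamics whose decision schedule $m(t)$ is drawn with a fixed, state-independent probability $\alpha(m)$, appeals to \cite{srikant2,ghaderi} for reversibility with respect to the same $\pi$, invokes Lemma~\ref{multiple} for a mixing-time bound of the same form ($T\le\frac{64^{|\cV|}}{2}\exp(4|\cV|\tilde w_{max})$) as Lemma~\ref{mixing2}, and then declares the remainder of the argument identical to that of Theorem~\ref{th1}. Your structural properties (a)--(c) and the detailed-balance check spell out what the paper leaves implicit in those citations, but the underlying approach is the same.
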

\subsection{Proof of Throughput Optimality}\label{sec:csma_proofs}
Consider the basic CSMA algorithm over a graph $\cG(\cV,\cE)$. Assume that the weights are constants, i.e., the basic algorithm uses a weight vector $\mathbf{\tilde{w}}=[\tilde{w}_{ij}:(i,j) \in \mathcal{L}]$ at all times. Then, the basic algorithm is essentially an irreducible, aperiodic, and reversible Markov chain (called Glauber Dynamics) to generate the independent sets of $\cG$. So, the state space $\mathcal{R}$ consists of all independent sets of $\cG$. The stationary distribution of the chain is given by
\begin{equation}\label{stationary}
\pi(s)=\frac{1}{Z} \exp\Big(\sum_{(i,j) \in s} \tilde{w}_{ij}\Big);\ \ s \in \mathcal{R},
\end{equation}
where $Z$ is the normalizing constant.

We start with the following lemma that relates the modified link weight and the original link weight.
\begin{lemma}\label{relate1}
For all links $(i,j) \in \mathcal{L}$, the link weights (\ref{eq: link weight modified}) and (\ref{eq: link weight}) differ at most by $g^*(t)$, i.e.,
\be
|\tilde{w}_{ij}(t) - w_{ij}(t)| \leq g^*(t).
\ee
\end{lemma}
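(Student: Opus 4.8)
The plan is to unwind the definitions of the two weights and use the elementary fact that replacing each of two quantities by its maximum with a common value $g^*(t)$ changes their difference by at most $g^*(t)$. First I would recall that, for any link $(i,j)\in\mathcal{L}$ and any destination $d$ with $R^{(d)}_{ij}=1$, the modified per-destination weight is $\tilde{w}_{ij}^{(d)}(t)=\tilde{g}(q_i^{(d)}(t))-\tilde{g}(q_j^{(d)}(t))$ with $\tilde{g}(x)=\max\{g(x),g^*(t)\}$, while the original per-destination weight is $w_{ij}^{(d)}(t)=g(q_i^{(d)}(t))-g(q_j^{(d)}(t))$. The key pointwise estimate is the following: for real numbers $a,b$ and any constant $c$,
\[
\bigl|(\max\{a,c\}-\max\{b,c\})-(a-b)\bigr|\le c,
\]
whenever $a,b\ge 0$ and $c\ge 0$ (which holds here since queue lengths are nonnegative, $g$ is nonnegative by $g(0)=\log(1)/h(0)=0$ and monotonicity, and $g^*(t)\ge 0$). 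Indeed, $\max\{a,c\}-a\in[0,c]$ and $\max\{b,c\}-b\in[0,c]$, so their difference lies in $[-c,c]$. Applying this with $a=g(q_i^{(d)}(t))$, $b=g(q_j^{(d)}(t))$, $c=g^*(t)$ gives $|\tilde{w}_{ij}^{(d)}(t)-w_{ij}^{(d)}(t)|\le g^*(t)$ for every admissible $d$.

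Next I would pass from the per-destination weights to the link weights via the max over destinations. Let $d^\dagger$ be a maximizer defining $\tilde{w}_{ij}(t)=\tilde{w}_{ij}^{(d^\dagger)}(t)$ and let $d^*=\tilde{d}^*_{ij}(t)$ from \eqref{eq: dtilde} be a maximizer defining $w_{ij}(t)=w_{ij}^{(d^*)}(t)$. Then
\[
\tilde{w}_{ij}(t)=\tilde{w}_{ij}^{(d^\dagger)}(t)\ge \tilde{w}_{ij}^{(d^*)}(t)\ge w_{ij}^{(d^*)}(t)-g^*(t)=w_{ij}(t)-g^*(t),
\]
and symmetrically
\[
w_{ij}(t)=w_{ij}^{(d^*)}(t)\ge w_{ij}^{(d^\dagger)}(t)\ge \tilde{w}_{ij}^{(d^\dagger)}(t)-g^*(t)=\tilde{w}_{ij}(t)-g^*(t),
\]
using in each chain the per-destination bound from the previous paragraph together with the optimality of $d^\dagger$ (resp.\ $d^*$) for $\tilde{w}_{ij}$ (resp.\ $w_{ij}$). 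Combining the two inequalities yields $|\tilde{w}_{ij}(t)-w_{ij}(t)|\le g^*(t)$, which is the claim.

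There is essentially no real obstacle here: the statement is a robustness estimate for the $\max$ operation, and the only things to be careful about are (i) checking nonnegativity of $g$ (so that $\tilde g$ is the ``raised floor'' of $g$ and the clean $[0,c]$ bound applies), and (ii) handling the $\max$ over destinations correctly, i.e.\ comparing each weight's optimizing destination against the other weight rather than trying to match optimizers. The argument is uniform in $t$, so it holds at all times, as stated.
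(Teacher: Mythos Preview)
Your proof is correct and follows essentially the same approach as the paper: both establish the per-destination bound $|\tilde{w}_{ij}^{(d)}(t)-w_{ij}^{(d)}(t)|\le g^*(t)$ via the elementary inequality $0\le\max\{a,c\}-a\le c$ for $a,c\ge 0$, and then pass to the max over destinations. Your write-up is actually a bit more explicit than the paper's, since the paper merely asserts that the per-destination bound is sufficient, whereas you carry out the two-sided maximizer-swapping argument (exactly as in the paper's Lemma~\ref{lemma}).
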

Proof is included in the appendix.
%
The basic algorithm uses a time-varying version of the Glauber dynamics, where the weights change with time. This yields a time-inhomogeneous Markov chain but we will see that, for the choice of function $g$ in (\ref{function}), it behaves similarly to the Glauber dynamics.
\subsection*{Mixing time of Glauber dynamics}
For simplicity, we index the elements of $\mathcal{R}$ by $1,2,...,r$, where $r=|\mathcal{R}|$.
Then, the eigenvalues of the corresponding transition matrix are ordered in such a way that
$$
\lambda_1 =1>\lambda_2 \geq ... \geq \lambda_r > -1.
$$
The convergence to steady state distribution is geometric with a rate equal to the \textit{second largest eigenvalue modulus} (SLEM) of the transition matrix \cite{pier}. In fact, for any initial probability distribution $\mu_0$ on $\mathcal{R}$, and for all $n \geq 1$,
\begin{equation}\label{mixing1}
\|\mu_0 \mathbf{P}^n-\pi\| _{\frac{1}{\pi}} \leq (\lambda^*)^n \|\mu_0-\pi\| _{\frac{1}{\pi}},
\end{equation}
where $\lambda^*=\max\{\lambda_2, |\lambda_r|\}$ is the SLEM.
Note that, by definition,
$
\|z\|_{1/\pi}=\left(\sum_{i=1}^r z(i)^2\frac{1}{\pi(i)}\right)^{1/2}.
$

The following Lemma gives an upper bound on the SLEM $\lambda^*$ of Glauber dynamics.
\begin{lemma}\label{mixing2}
For the Glauber Dynamics with the weight vector $\mathbf{\tilde{w}}$ on a graph $\cG(\cV,\cE)$,
$$
\lambda^* \leq 1-\frac{1}{16^{|\cV|}\exp(4|\cV|\tilde{w}_{max})},
$$
where $\tilde{w}_{max}=\max_{(i,j) \in \mathcal{L}} \tilde{w}_{ij}$.
\end{lemma}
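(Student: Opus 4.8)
The plan is to bound $\lambda_2$ and $|\lambda_r|$ separately, using that $\lambda^*=\max\{\lambda_2,|\lambda_r|\}$ and that the chain is reversible with respect to $\pi$ in (\ref{stationary}): a single-site flip of a link $(i,j)$ changes the exponent in (\ref{stationary}) by exactly $\tilde{w}_{ij}$, and $p_{ij}/(1-p_{ij})=\exp(\tilde{w}_{ij})$ gives detailed balance, so all eigenvalues are real. It then suffices to show that each of $\lambda_2$ and $|\lambda_r|$ is at most $1-(16^{|\cV|}e^{4|\cV|\tilde{w}_{max}})^{-1}$.

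For $\lambda_2$ I would use the canonical-path (multicommodity-flow) bound for reversible chains, $\lambda_2\le 1-1/\rho$, with the following paths. Fix an arbitrary ordering of the links; for each ordered pair of independent sets $(x,y)$, route the unit of flow along the path that first deactivates the links of $x$ one at a time in that order (reaching $\emptyset$) and then activates the links of $y$ one at a time. Every edge of such a path is a legal single-site move of the Glauber dynamics, and the path length is at most $2|\cV|$. For the congestion $\rho(e)=\frac{1}{\pi(u)P(u,v)}\sum_{(x,y):\,e\in\gamma_{xy}}|\gamma_{xy}|\,\pi(x)\pi(y)$ of an edge $e=(u,v)$, the numerator is $O(|\cV|)$, since a fixed edge can lie on the path of a pair $(x,y)$ only through the $x\to\emptyset$ leg or the $\emptyset\to y$ leg, so the total $\pi(x)\pi(y)$-mass routed through $e$ is at most $2$. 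For the denominator, bound $Z=\sum_{s\in\mathcal{R}}e^{\sum_{k\in s}\tilde{w}_k}\le 2^{|\cV|}e^{|\cV|\tilde{w}_{max}}$ (at most $2^{|\cV|}$ independent sets, each of weight-sum at most $|\cV|\tilde{w}_{max}$), so $\pi(u)\ge e^{\sum_{k\in u}\tilde{w}_k}/Z\ge 2^{-|\cV|}e^{-|\cV|\tilde{w}_{max}}$ for the (nonnegative-weight) sets involved, and bound the single-site transition probability by $P(u,v)\ge\frac{1}{|\cV|}\cdot\frac{1}{1+e^{\tilde{w}_{max}}}$. Multiplying these crude estimates and absorbing the polynomial-in-$|\cV|$ factors and $1+e^{\tilde{w}_{max}}$ into the exponentials (using $8|\cV|^2\le 8^{|\cV|}$ and $4|\cV|\ge|\cV|+1$) yields $\rho\le 16^{|\cV|}e^{4|\cV|\tilde{w}_{max}}$, hence the claimed bound on $\lambda_2$.

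For $|\lambda_r|$ I would exploit the large holding probability of the dynamics: whichever link $\ell$ is selected, the chain stays put with probability at least $\min\{p_\ell,\,1-p_\ell\}\ge(1+e^{\tilde{w}_{max}})^{-1}$ (if $\ell$ is active it stays with probability $p_\ell$; if inactive, with probability $1$ or $1-p_\ell$), so $P(u,u)\ge\delta:=(1+e^{\tilde{w}_{max}})^{-1}$ for every state $u$. Writing $P=\delta I+(1-\delta)P'$ with $P'$ stochastic and $\pi$-reversible, every eigenvalue of $P$ is at least $2\delta-1$, so $|\lambda_r|\le 1-2\delta=1-\frac{2}{1+e^{\tilde{w}_{max}}}\le 1-(16^{|\cV|}e^{4|\cV|\tilde{w}_{max}})^{-1}$. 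Combining this with the bound on $\lambda_2$ gives the lemma.

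The main obstacle is the congestion estimate in the $\lambda_2$ step: one has to verify that the crude bounds on $|\mathcal{R}|$, on $\pi(u)P(u,v)$, and on the path length indeed combine to exactly the stated form $16^{|\cV|}e^{4|\cV|\tilde{w}_{max}}$, which needs the bookkeeping above (and uses $\tilde{w}_{max}\ge 0$, which holds in the application because the maximum-queue node contributes a nonnegative differential). Routing instead through the conductance and Cheeger's inequality $\lambda_2\le 1-\Phi^2/2$ is conceptually simpler but loses a square in the exponent, so the canonical-path bound is the one to use.
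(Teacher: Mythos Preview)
The paper does not actually prove this lemma here; it defers to \cite{ghaderi}, which treats the single-hop case where every link weight is of the form $g(q_i)\ge 0$. Your canonical-path argument (paths through $\emptyset$, length $\le 2|\cV|$, mass $\le 2$ through any edge) together with the holding-probability bound on $\lambda_r$ is the standard route and your bookkeeping does recover the stated form $16^{|\cV|}e^{4|\cV|\tilde w_{max}}$. In fact the exponent you actually need is only $(|\cV|+1)\tilde w_{max}$; the factor $4$ in the stated bound is what one gets from the conductance/Cheeger route you mention at the end, so the cited proof almost certainly goes that way rather than through canonical paths.

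There is one genuine gap, though. Your argument silently assumes \emph{every} $\tilde w_\ell\ge 0$, not merely $\tilde w_{max}\ge 0$. You use this twice: in the lower bound $\pi(u)\ge e^{\sum_{k\in u}\tilde w_k}/Z\ge 1/Z$, and in $\min\{p_\ell,1-p_\ell\}\ge(1+e^{\tilde w_{max}})^{-1}$. In the multihop setting of this paper, $\tilde w_{ij}=\max_d\bigl(\tilde g(q_i^{(d)})-\tilde g(q_j^{(d)})\bigr)$ can be negative (take a link whose head queue exceeds its tail queue for every routed destination); then $\sum_{k\in u}\tilde w_k$ can be negative and, for $\tilde w_\ell<0$, $\min\{p_\ell,1-p_\ell\}=e^{\tilde w_\ell}/(1+e^{\tilde w_\ell})$ is not controlled by $\tilde w_{max}$. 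The clean fix is to run the whole argument with $\max_\ell|\tilde w_\ell|$ in place of $\tilde w_{max}$; since in the application $|\tilde w_{ij}(t)|\le g(q_{max}(t))$, the downstream use of the mixing-time bound (Proposition~\ref{drift} and Lemma~\ref{lemma: adiabatic}) is unchanged.
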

See \cite{ghaderi} for the proof. We define the \textit{mixing time} as $T=\frac{1}{1-\lambda^*}$, so
\begin{equation} \label{mixing3}
T \leq 16^{|\mathcal{L}|}\exp(4|\mathcal{L}|\tilde{w}_{max})
\end{equation}
Simple calculation, based on (\ref{mixing1}), reveals that the amount of time needed to get close to the stationary distribution is \textit{approximately} proportional to $T$.
\subsection*{A key proposition}
At any time slot $t$, given the weight vector $\mathbf{\tilde{w}}(t)=[\tilde{w}_{ij}(t): (i,j) \in \mathcal{L}]$, the MaxWeight-type algorithm, Section \ref{sec:algorithm}, should solve $
\max_{s \in \mathcal{R}} \sum_{(i,j) \in s}\tilde{w}_{ij}(t),
$
instead, the distributed algorithm tries to simulate a distribution
\begin{equation}\label{stationary}
\pi_t(s)=\frac{1}{Z_t} \exp\Big(\sum_{(i,j) \in s} \tilde{w}_{ij}(t)\Big);\ \ s \in \mathcal{R},
\end{equation}
i.e., the stationary distribution of Glauber dynamics with the weight vector $\mathbf{\tilde{w}}(t)$ at time $t$.

Let $P_t$ denote the transition probability matrix of Glauber dynamics with the weight vector $\mathbf{\tilde{w}}(t)$.
Also let $\mu_t$ be the true probability distribution of the inhomogeneous-time chain, over the set of schedules $\mathcal{R}$, at time  $t$.
Therefore, we have $\mu_t=\mu_{t-1}P_t$. Let $\pi_t$ denote the stationary distribution of the time-homogenous Markov chain with $P=P_t$ as in (\ref{stationary}).
By choosing proper $g^*$ and $g(\cdot)$, we aim to ensure that $\mu_t$ and $\pi_t$ are close enough, i.e.,
$
 \|\pi_t-\mu_t\|_{TV} \leq \delta
$
for some $\delta$ arbitrary small, where
$\|\pi-\mu\|_{TV}=\frac{1}{2}\sum_{i=1}^r|\pi(i)-\mu(i)|.
$
Note that $
\|\mu-\pi\|_{\frac{1}{\pi}} \geq 2 \|\mu - \pi\|_{TV}.
$
Next, we characterize the amount of change in the stationary distribution as a result of queue evolutions.
\begin{lemma} \label{alpha ratio}
For any schedule $s \in \mathcal{R}$,
$
e^{-\alpha_t}\leq \frac{\pi_{t+1}(s)}{\pi_t(s)} \leq e^{\alpha_t},
$
where,
\begin{equation} \label{eq: alpha}
\alpha_t= 2(1+\mW_{cong})|\mathcal{L}| g^\prime\Big(g^{-1}(g^*(t+1))-1-\mW_{cong}\Big),
\end{equation}
and $\mW_{cong}$ is the maximum congestion window size.
\end{lemma}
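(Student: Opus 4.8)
The plan is to work directly with the Gibbs form of the two stationary distributions. Writing $Z_t=\sum_{s'\in\mathcal{R}}\exp\big(\sum_{(i,j)\in s'}\tilde{w}_{ij}(t)\big)$, we have
\[
\frac{\pi_{t+1}(s)}{\pi_t(s)}\;=\;\frac{Z_t}{Z_{t+1}}\,\exp\!\Big(\sum_{(i,j)\in s}\big(\tilde{w}_{ij}(t+1)-\tilde{w}_{ij}(t)\big)\Big).
\]
So if we can show that the total one-slot change of the modified weights satisfies $\sum_{(i,j)\in\mathcal{L}}|\tilde{w}_{ij}(t+1)-\tilde{w}_{ij}(t)|\le\tfrac12\alpha_t$, then for every schedule $s$ the exponential factor lies in $[e^{-\alpha_t/2},e^{\alpha_t/2}]$; moreover $Z_{t+1}$ is a sum of terms each within $e^{\pm\alpha_t/2}$ of the corresponding term of $Z_t$, so $Z_t/Z_{t+1}$ also lies in $[e^{-\alpha_t/2},e^{\alpha_t/2}]$, and multiplying the two bounds gives the lemma. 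The whole proof thus reduces to a bound on $\sum_{(i,j)\in\mathcal{L}}|\tilde{w}_{ij}(t+1)-\tilde{w}_{ij}(t)|$, and since $|\mathcal{R}|$ and the summation are over at most $|\mathcal{L}|$ links, it suffices to bound the one-slot change of a single modified link weight.

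To do that I first reduce to a per-node statement. Since $\tilde{w}_{ij}=\max_d\tilde{w}^{(d)}_{ij}$ and $|\max_d a_d-\max_d b_d|\le\max_d|a_d-b_d|$, and since $\tilde{w}^{(d)}_{ij}(t)=\tilde{g}(q_i^{(d)}(t))-\tilde{g}(q_j^{(d)}(t))$ with $\tilde{g}_t(\cdot):=\max\{g(\cdot),g^*(t)\}$, we get $|\tilde{w}_{ij}(t+1)-\tilde{w}_{ij}(t)|\le\max_d\big(|\Delta_i^{(d)}|+|\Delta_j^{(d)}|\big)$, where $\Delta_n^{(d)}:=\tilde{g}_{t+1}(q_n^{(d)}(t+1))-\tilde{g}_t(q_n^{(d)}(t))$. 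Hence it suffices to show each $|\Delta_n^{(d)}|$ is at most $(1+\mW_{cong})\,g'\!\big(g^{-1}(g^*(t+1))-1-\mW_{cong}\big)$, up to the universal constant that is absorbed in the final form of $\alpha_t$.

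The bound on $|\Delta_n^{(d)}|$ rests on two elementary facts. (a) Any single MAC-layer queue changes by at most $1+\mW_{cong}$ in one slot: in the CSMA model a node transmits on at most one outgoing link and receives on at most one incoming link, so there is at most one departure (which, if it occurs, is either replaced by an ingress-queue injection or decreases the queue by one) and at most one relayed arrival, and a source injects at most $\mW_{cong}$ fresh packets to fill the window of a newly arrived file; consequently $q_{max}(\cdot)$, and hence $g^*(\cdot)=\tfrac{\epsilon}{4N^3}g(q_{max}(\cdot))$, also move by a controlled amount, and $g^{-1}(g^*(t))\le g^{-1}(g^*(t+1))-1-\mW_{cong}$ fails by at most an additive $\tfrac{\epsilon}{4N^3}(1+\mW_{cong})\le1+\mW_{cong}$. (b) $g$ is increasing and concave with $g'\le1$, so $g'$ is nonincreasing, and $q_{max}(t)\ge g^{-1}(g^*(t))$ always. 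Now one does a short case analysis on the positions of $q_n^{(d)}(t)$ and $q_n^{(d)}(t+1)$ relative to the thresholds $g^{-1}(g^*(t))$ and $g^{-1}(g^*(t+1))$ at which $g$ overtakes the floor. When both values sit at or above their thresholds, $\tilde{g}$ coincides with $g$ at both times, so by the mean value theorem $|\Delta_n^{(d)}|\le g'(\min)\,(1+\mW_{cong})$ with $\min\ge g^{-1}(g^*(t+1))-1-\mW_{cong}$ by (a), which is exactly the target bound. In every other case one of $\tilde{g}_t(q_n^{(d)}(t)),\tilde{g}_{t+1}(q_n^{(d)}(t+1))$ equals its floor; one then compares against $g$ evaluated at a point within $1+\mW_{cong}$ of a threshold, or against the other floor, and uses the smallness of $|g^*(t+1)-g^*(t)|$ (again from (a), and with the extra $\tfrac{\epsilon}{4N^3}$ factor) together with the monotonicity of $g'$ to absorb the difference into the same bound.

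The main obstacle is precisely this last case analysis: because the floor $g^*(t)$ is itself time-varying, one has to check that neither a crossing of a threshold nor a shift of the threshold between slots $t$ and $t+1$ ever produces a jump in $\tilde{g}$ larger than what $g'$ evaluated at $g^{-1}(g^*(t+1))-1-\mW_{cong}$ allows; this is where concavity of $g$ (hence monotonicity of $g'$) and the one-step Lipschitz bounds on all queues are doing the work. One technical caveat is that the argument of $g'$ must lie in $(0,\infty)$, which requires $g^*(t+1)>g(1+\mW_{cong})$; when this fails the queues are small, the weights are bounded by a fixed constant, and $\alpha_t$ is simply replaced by a fixed constant, which does not affect the subsequent mixing-time argument.
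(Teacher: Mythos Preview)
Your overall strategy is the same as the paper's: write the ratio $\pi_{t+1}(s)/\pi_t(s)$ via the Gibbs form, bound the normalizing-constant ratio by the maximal per-schedule weight change, and then reduce everything to a uniform bound on $|\tilde w_{ij}(t+1)-\tilde w_{ij}(t)|$ obtained from concavity of $g$ and the one-slot Lipschitz bound $|q^{(d)}_i(t+1)-q^{(d)}_i(t)|\le 1+\mathcal{W}_{cong}$.

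The one place where you diverge from the paper is in handling the floor $g^*$. You set up a four-case analysis on whether $q_n^{(d)}$ lies above or below the (time-varying) threshold $g^{-1}(g^*)$, and then leave the mixed cases as a sketch. The paper avoids this entirely by the observation that, since $g$ is increasing,
\[
\tilde g_t\big(q^{(d)}_i(t)\big)=\max\{g(q^{(d)}_i(t)),\,g(q^*(t))\}=g\big(\max\{q^{(d)}_i(t),\,q^*(t)\}\big)=:g\big(\tilde q^{(d)}_i(t)\big),
\]
with $q^*(t):=g^{-1}(g^*(t))$. This turns the awkward ``max with a moving floor'' into an evaluation of $g$ at a modified queue $\tilde q$, so a single application of concavity gives $|g(\tilde q^{(d)}_i(t+1))-g(\tilde q^{(d)}_i(t))|\le g'(\min)\cdot|\tilde q^{(d)}_i(t+1)-\tilde q^{(d)}_i(t)|$. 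One then checks $|\tilde q^{(d)}_i(t+1)-\tilde q^{(d)}_i(t)|\le 1+\mathcal{W}_{cong}$ (using $|\max(a,c)-\max(b,d)|\le\max(|a-b|,|c-d|)$ together with your own observation that $q^*(\cdot)$ is $1$-Lipschitz in $q_{\max}$), and $\min\{\tilde q(t),\tilde q(t+1)\}\ge q^*(t+1)-1-\mathcal{W}_{cong}$, which delivers exactly the bound you were aiming for without any case split. Your case analysis would ultimately reproduce this, but the $\tilde q$ substitution makes the ``main obstacle'' you flag disappear in one line.
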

Now, equipped with Lemmas \ref{mixing2} and \ref{alpha ratio}, we make use of the results in \cite{shah, shah2} and \cite{ghaderi} in the final proof. Specifically, we will use the following key Proposition from \cite{ghaderi} which we have included a proof for it in the appendix for completeness.
\begin{proposition}\label{drift}
Given any $\delta >0$, $\|\pi_t-\mu_t\|_{TV} \leq \delta/4$ holds when $ q_{max}(t)\geq q_{th}+t^*$, if there exists a $q_{th}$ such that
\begin{equation}\label{alphaT}
\alpha_t T_{t+1} \leq \delta/16\mbox{ whenever } q_{max}(t) > q_{th},
\end{equation}
where
\begin{itemize}
\item[(i)] $T_t \leq 16^{|\mathcal{L}|}\exp(4|\mathcal{L}|\tilde{w}_{max}(t))$
\item [(ii)] $t^*$ is the smallest $t$ such that
\be\label{t*}
\frac{1}{\sqrt{\min_s \pi_{t_1}(s)}} \exp(-\sum_{k=t_1}^{t_1+t^*}\frac{1}{T^2_k}) \leq \delta/4,
\ee
where $q_{max}(t_1)=q_{th}$.
\end{itemize}
\end{proposition}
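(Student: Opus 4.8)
The plan is to control the weighted ($\chi^2$-like) distance $d_t:=\|\mu_t-\pi_t\|_{1/\pi_t}$, which dominates $2\|\mu_t-\pi_t\|_{TV}$, by a one-step recursion that I then unroll over a block of slots on which $q_{max}$ stays above $q_{th}$. First, for the one step: since $\mu_{t+1}=\mu_t P_{t+1}$ and $\pi_{t+1}P_{t+1}=\pi_{t+1}$, I write $\mu_{t+1}-\pi_{t+1}=(\mu_t-\pi_{t+1})P_{t+1}$ and apply the SLEM contraction \dref{mixing1} for the time-homogeneous chain $P_{t+1}$, giving $d_{t+1}\le\lambda^*_{t+1}\|\mu_t-\pi_{t+1}\|_{1/\pi_{t+1}}$. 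A triangle inequality then splits $\|\mu_t-\pi_{t+1}\|_{1/\pi_{t+1}}\le\|\mu_t-\pi_t\|_{1/\pi_{t+1}}+\|\pi_t-\pi_{t+1}\|_{1/\pi_{t+1}}$, and Lemma \ref{alpha ratio} ($e^{-\alpha_t}\le\pi_{t+1}(s)/\pi_t(s)\le e^{\alpha_t}$ pointwise) lets me compare the two stationary measures: the first term is $\le e^{\alpha_t/2}d_t$ and the second is $\le e^{\alpha_t}-1$. Hence $d_{t+1}\le\gamma_{t+1}d_t+\varepsilon_{t+1}$, with $\gamma_{t+1}:=\lambda^*_{t+1}e^{\alpha_t/2}$ and $\varepsilon_{t+1}:=\lambda^*_{t+1}(e^{\alpha_t}-1)$.

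Next I would feed in hypothesis \dref{alphaT}. Whenever $q_{max}(t)>q_{th}$ we have $\alpha_t\le\delta/(16T_{t+1})$ (here Part (i), i.e.\ \dref{mixing3} for the instantaneous weights $\tilde{w}(t)$, keeps $T_{t+1}$ explicit), so $\alpha_t$ is small, $e^{\alpha_t/2}\le1+\alpha_t$, and $\gamma_{t+1}\le(1-1/T_{t+1})(1+\alpha_t)\le1-(1-\delta/16)/T_{t+1}\le1-1/(2T_{t+1})$ for $\delta$ small, while $\varepsilon_{t+1}\le2\alpha_t\le\delta/(8T_{t+1})\le\tfrac{\delta}{4}(1-\gamma_{t+1})$. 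Let $t_1$ be the most recent slot with $q_{max}(t_1)\le q_{th}$; since the largest MAC-layer queue increases by at most one per slot (after the routine truncation of rare large-arrival slots, cf.\ \cite{ghaderi,shah}), $q_{max}(t)\ge q_{th}+t^*$ forces $q_{max}(s)>q_{th}$ for every $s\in(t_1,t]$ and $t-t_1\ge t^*$, so the per-step estimate holds throughout the block. Unrolling $d_{k+1}\le\gamma_{k+1}d_k+\varepsilon_{k+1}$ from $t_1$ to $t$, the additive part telescopes, $\sum_k\varepsilon_{k+1}\prod_{k'>k}\gamma_{k'+1}\le\tfrac{\delta}{4}\sum_k(1-\gamma_{k+1})\prod_{k'>k}\gamma_{k'+1}=\tfrac{\delta}{4}\bigl(1-\prod_k\gamma_{k+1}\bigr)\le\tfrac{\delta}{4}$, while the homogeneous part $\bigl(\prod_k\gamma_{k+1}\bigr)d_{t_1}$ obeys $\prod_k\gamma_{k+1}\le\exp\!\bigl(-\tfrac12\sum_k 1/T_{k+1}\bigr)$ and $d_{t_1}\le 1/\sqrt{\min_s\pi_{t_1}(s)}$ — which, by the defining inequality \dref{t*} for $t^*$ (noting $\sum 1/T_k^2\le\sum 1/T_k$ and absorbing the constant in the exponent into the size of $t^*$), is $\le\delta/4$. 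Adding the two, $d_t\le\delta/2$, i.e.\ $\|\mu_t-\pi_t\|_{TV}\le\delta/4$, as claimed.

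The step I expect to be the real obstacle is the balance inside the one-step recursion: a geometric pull toward the moving target $\pi_{t+1}$ at rate $\approx1/T_{t+1}$ competing with a displacement $\pi_t\to\pi_{t+1}$ of size $\approx\alpha_t$. The contraction beats the displacement by a fixed fraction per step only when $\alpha_tT_{t+1}$ is small, which is exactly hypothesis \dref{alphaT}; and establishing \dref{alphaT} for some finite $q_{th}$ forces $\alpha_t$ to decay strictly faster than $T_{t+1}\le16^{|\mathcal{L}|}\exp(4|\mathcal{L}|\tilde{w}_{max}(t))$ grows. Since $\tilde{w}_{max}(t)$ is of order $g(q_{max}(t))$, this is precisely the place where $g$ must grow strictly slower than $\log$: with $g\equiv\log$, $T_{t+1}$ blows up too fast for any decay of $\alpha_t$ to compensate, and the argument collapses. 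A secondary technical point, already flagged above, is justifying the reduction to a clean block $\{q_{max}>q_{th}\}$ of length at least $t^*$ when arrival batches are only bounded in second moment; this is handled by the standard truncation of the rare large-arrival slots.
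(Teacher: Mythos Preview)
Your argument is correct and follows essentially the same route as the paper's: control the $\chi^2$-type distance $\|\mu_t-\pi_t\|_{1/\pi_t}$ via a one-step recursion that combines the SLEM contraction \dref{mixing1} with the pointwise ratio bound of Lemma~\ref{alpha ratio}, and then close under hypothesis~\dref{alphaT}. The only cosmetic differences are that the paper tracks $a_t:=\|\mu_{t+1}-\pi_t\|_{1/\pi_t}$ rather than your $d_t$ and argues by induction (once $a_t\le\delta/4$ it stays there) rather than by unrolling; this bookkeeping produces the cruder factor $(1-1/T_t)(1+1/T_t)=1-1/T_t^2$ and hence the $\sum 1/T_k^2$ in the statement's definition of $t^*$, in place of your sharper $\tfrac12\sum 1/T_k$.
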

In other words, Proposition \ref{drift} states that when queue lengths are large, the observed distribution of the schedules is close to the desired stationary distribution. The key idea in the proof is that the weights change at the rate $\alpha_t$ while the system responds to these changes at the rate $1/T_{t+1}$. Condition (\ref{alphaT}) is to ensure that the weight dynamics are slow enough compared to response time of the chain such that it remains close to its equilibrium (stationary distribution).

We will also use the following lemma that relates the maximum queue length and the maximum weight in the network. Hence, when one grows, the other one increases as well.
\begin{lemma} \label{relate}
Let $w_{max}(t)=\max_{(i,j) \in \mathcal{L}} w_{ij}(t)$. Then
$$
\frac{1}{N}g\left(q_{max}(t)\right) \leq w_{max}(t) \leq g\left(q_{max}(t)\right).
$$
\end{lemma}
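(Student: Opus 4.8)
The plan is to establish the two inequalities separately, relying only on the monotonicity of $g$ and the structure of the link weights. For the upper bound, recall from~(\ref{diff log})--(\ref{eq: link weight}) that $w_{max}(t) = \max_{(i,j)\in\mathcal{L}} \max_{d:R^{(d)}_{ij}=1} \left(g(q_i^{(d)}(t)) - g(q_j^{(d)}(t))\right)$. Since $g$ is increasing and $g(0) = 0$ (as $h(0) > 0$ and $\log 1 = 0$, so $g(0)=0$), each term $g(q_i^{(d)}(t)) - g(q_j^{(d)}(t)) \le g(q_i^{(d)}(t)) \le g(q_{max}(t))$, because $q_i^{(d)}(t) \le q_{max}(t)$ for every node $i$ and destination $d$ by definition of $q_{max}$. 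Taking the maximum over links and destinations gives $w_{max}(t) \le g(q_{max}(t))$ immediately.

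For the lower bound, let $(i^*,d^*)$ be a pair achieving $q_{i^*}^{(d^*)}(t) = q_{max}(t)$. The idea is to walk along the route toward the destination $d^*$: consider the path $i^* = n_0, n_1, n_2, \ldots, n_L = d^*$ determined by the routing matrix $R^{(d^*)}$, which is finite and acyclic with $L \le N$ hops. Telescoping, $g(q_{i^*}^{(d^*)}(t)) - g(q_{d^*}^{(d^*)}(t)) = \sum_{\ell=0}^{L-1}\left(g(q_{n_\ell}^{(d^*)}(t)) - g(q_{n_{\ell+1}}^{(d^*)}(t))\right)$. Since $q_{d^*}^{(d^*)}(t) = 0$ (packets are removed upon reaching their destination) and $g(0) = 0$, the left side equals $g(q_{max}(t))$. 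Each summand is at most $w_{n_\ell n_{\ell+1}}^{(d^*)}(t) \le w_{n_\ell n_{\ell+1}}(t) \le w_{max}(t)$, and there are at most $N$ summands, so $g(q_{max}(t)) \le N \, w_{max}(t)$, which rearranges to $\frac{1}{N} g(q_{max}(t)) \le w_{max}(t)$.

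The main obstacle, and the step requiring the most care, is the lower bound: one must be sure that the route from the node holding the maximum queue actually reaches a destination at which the queue is genuinely zero, and that $g(0)=0$ so the telescoping sum collapses cleanly. The acyclicity assumption on routes (stated in the queue-dynamics section) guarantees the path is finite with at most $N-1$ internal nodes; the convention that a packet leaving the network is removed gives $q_{d^*}^{(d^*)}(t)=0$; and $g(0) = \log(1)/h(0) = 0$ handles the boundary term. One subtlety worth noting: if $q_{max}(t) = 0$ the statement is trivial ($w_{max}(t) \ge 0$ always, since the all-zero schedule-difference is a lower bound, or one simply observes both sides are $0$), so we may assume $q_{max}(t) > 0$ and work on the route as above. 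No further estimates are needed — this is entirely a monotonicity-plus-telescoping argument.
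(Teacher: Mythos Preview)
Your proof is correct. The paper proves the lower bound by a different route: it writes the vector identity $\mathbf{w}^{(d)} = (\mathbf{I} - \mathbf{R}^{(d)})\, g(\mathbf{q}^{(d)})$, notes that $\mathbf{R}^{(d)}$ is nilpotent (since routes are acyclic, $(\mathbf{R}^{(d)})^N = 0$), inverts via the Neumann series $(\mathbf{I} - \mathbf{R}^{(d)})^{-1} = \sum_{k=0}^{N-1} (\mathbf{R}^{(d)})^k$, and bounds $\|(\mathbf{I} - \mathbf{R}^{(d)})^{-1}\|_\infty \leq N$ to conclude $\|g(\mathbf{q}^{(d)})\|_\infty \leq N\, \|\mathbf{w}^{(d)}\|_\infty$. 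Your telescoping along the explicit path from $i^*$ to $d^*$ is really the same computation unwound --- the $k$-th power of $\mathbf{R}^{(d)}$ picks out the $k$-th hop, so the Neumann series is precisely summing the per-hop differentials --- but your version is more elementary and self-contained, avoiding any matrix-norm machinery and making the role of $g(0)=0$ and $q_{d^*}^{(d^*)}=0$ explicit. The paper's formulation is more coordinate-free and would adapt more readily if routing were generalized (e.g., to fractional next-hop matrices), while yours is arguably cleaner for the fixed-route setting actually assumed here.
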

\subsection*{Some useful results for the basic CSMA algorithm}
Roughly speaking, since the mixing time $T$ is exponential in $g(q_{max})$, $g^\prime(g^{-1}(g^*))$ must be in the form of $e^{-{g^*}}$;
otherwise it will be impossible to satisfy $\alpha_tT_{t+1}< \delta/16$ in Proposition \ref{drift} for any arbitrarily small $\delta$ as $q_{max}(t) \to \infty$.
The only function with such a property is the $\log(\cdot)$ function. In fact, $g$ must grow slightly slower than $\log(\cdot)$ to satisfy (\ref{alphaT}), and to ensure the existence of a finite $t^*$ in Lemma \ref{drift}. For example, by choosing functions that grow much slower than $\log(1+x)$, like $h(x)=\log(e+\log(1+x))$, we can make $g(x)$ behave approximately like $\log(1+x)$ for large ranges of $x$ (correspondingly, for the range of practical queue lengthes). More accurately, we state the result as the following lemma whose proof can be found in the appendix.
\begin{lemma}\label{lemma: adiabatic}
The Basic CSMA algorithm, with function $g$ as in (\ref{function}), satisfies the requirements of Proposition \ref{drift}.
\end{lemma}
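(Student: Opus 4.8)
The plan is to verify, in turn, the three ingredients that Proposition~\ref{drift} requires: item (i) (the explicit mixing-time bound), the adiabatic condition~(\ref{alphaT}), and item (ii) (a finite $t^*$ satisfying~(\ref{t*})). Item (i) is essentially a restatement of earlier results: with the weights frozen at $\tilde{\mathbf w}(t)$ the basic algorithm is the Glauber dynamics, so Lemma~\ref{mixing2} (i.e.\ (\ref{mixing3})) gives $T_t\le 16^{|\mathcal L|}\exp(4|\mathcal L|\tilde w_{max}(t))$ verbatim. Along the way I would record two elementary facts. First, $\tilde w_{max}(t)\le g(q_{max}(t))$: since $\tilde g(q_i^{(d)})=\max\{g(q_i^{(d)}),g^*(t)\}$ always lies in $[g^*(t),g(q_{max}(t))]$, each $\tilde w_{ij}^{(d)}(t)$ lies in $[-g(q_{max}(t)),g(q_{max}(t))]$ (one may instead combine Lemmas~\ref{relate1} and~\ref{relate}). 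Second, arrivals, link rates and window sizes being bounded, $q_{max}$ moves by at most a fixed constant $C_q$ per slot, and since $g'\le1$ this yields $g(q_{max}(t+1))\le g(q_{max}(t))+C_q$; hence $T_{t+1}\le 16^{|\mathcal L|}\exp\big(4|\mathcal L|(g(q_{max}(t))+C_q)\big)$.

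For condition~(\ref{alphaT}) I would start from the expression~(\ref{eq: alpha}) for $\alpha_t$ (Lemma~\ref{alpha ratio}) and the estimate $g'(x)\le\frac{1}{(1+x)h(x)}\le\frac{1}{h(0)(1+x)}$ (the $h'$ term in $g'$ only helps). Writing $y:=g^{-1}(g^*(t+1))$, the definition~(\ref{function}) of $g$ gives $1+y=\exp(g^*(t+1)h(y))$, so for $q_{max}(t)$ large $1+(y-1-\mW_{cong})\ge\tfrac12\exp(g^*(t+1)h(y))$ and therefore
\[
\alpha_t\ \le\ \frac{4(1+\mW_{cong})|\mathcal L|}{h(0)}\,\exp\!\Big(-g^*(t+1)\,h\big(g^{-1}(g^*(t+1))\big)\Big).
\]
Since $q_{max}(t+1)\ge q_{max}(t)-C_q$ and $g'\le1$, one gets $g^*(t+1)=\tfrac{\epsilon}{4N^3}g(q_{max}(t+1))\ge\tfrac{\epsilon}{8N^3}g(q_{max}(t))$ once $q_{max}(t)$ is large, while $h\big(g^{-1}(g^*(t+1))\big)\to\infty$ as $q_{max}(t)\to\infty$. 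Multiplying by the bound on $T_{t+1}$ from item (i),
\[
\alpha_t T_{t+1}\ \le\ C\,\exp\!\Big(g(q_{max}(t))\big[\,4|\mathcal L|-\tfrac{\epsilon}{8N^3}\,h\big(g^{-1}(g^*(t+1))\big)\,\big]\Big)
\]
for a constant $C$, and since the bracket tends to $-\infty$ there is a $q_{th}$ with $\alpha_t T_{t+1}\le\delta/16$ whenever $q_{max}(t)>q_{th}$. I expect this to be the main obstacle: if $h\equiv1$ the bracket stays $\ge 4|\mathcal L|-\tfrac{\epsilon}{8N^3}>0$ and the product diverges, so the extra diverging factor $h(g^{-1}(g^*))$ in the exponent of $\alpha_t$ — which is exactly what ``$g$ grows slower than $\log$'' (the $g^*$ of (\ref{eq: gstar}) plus $h\uparrow\infty$) buys — is precisely what dominates the $e^{\Theta(|\mathcal L|g(q_{max}))}$ growth of the mixing time.

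Finally, for item (ii) I would fix $t_1$ with $q_{max}(t_1)=q_{th}$ and bound the atoms of the frozen stationary law $\pi_{t_1}$ from~(\ref{stationary}) by $\pi_{t_1}(s)\ge|\mathcal R|^{-1}\exp(-2|\mathcal L|\tilde w_{max}(t_1))\ge|\mathcal R|^{-1}\exp(-2|\mathcal L|g(q_{th}))$, so $1/\sqrt{\min_s\pi_{t_1}(s)}\le M:=\sqrt{|\mathcal R|}\,e^{|\mathcal L|g(q_{th})}$, a constant once $q_{th}$ is fixed. On the window $t_1\le k\le t_1+t^*$ we have $q_{max}(k)\le q_{th}+t^*C_q$, hence $T_k\le 16^{|\mathcal L|}\exp(4|\mathcal L|g(q_{th}+t^*C_q))$ and
\[
\sum_{k=t_1}^{t_1+t^*}\frac1{T_k^2}\ \ge\ (t^*+1)\,16^{-2|\mathcal L|}\exp\!\big(-8|\mathcal L|g(q_{th}+t^*C_q)\big).
\]
Because $g(q)/\log(1+q)=1/h(q)\to0$, the exponential factor is only $(t^*)^{-o(1)}$, so the right side diverges as $t^*\to\infty$, $\exp(-\sum_kT_k^{-2})\to0$, and a finite $t^*$ making $M\exp(-\sum_kT_k^{-2})\le\delta/4$ — i.e.~(\ref{t*}) — exists. (Slow growth of $g$ is again essential: for $g\asymp\log$ this sum behaves like $(t^*)^{1-8|\mathcal L|}$ and can stay bounded.) With item (i),~(\ref{alphaT}) and item (ii) all established, Proposition~\ref{drift} applies, which is the assertion of the lemma.
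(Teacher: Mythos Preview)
Your proposal is correct and follows essentially the same route as the paper: bound $\alpha_t$ via $g'(x)\le c/(1+x)$ and the identity $1+g^{-1}(g^*)=\exp(g^*\,h(g^{-1}(g^*)))$; bound $T_{t+1}$ via Lemma~\ref{mixing2} together with $\tilde w_{max}\le O(g(q_{max}))$; multiply and observe that the exponent contains $[\,\text{const}-h(g^{-1}(g^*))\,]$ which tends to $-\infty$; and then handle $t^*$ by lower-bounding $\sum_k T_k^{-2}$ as $t^*\cdot (1+q_{th}+O(t^*))^{-O(1)/h(\cdot)}$ and using $h\to\infty$ to get divergence. The only cosmetic differences are that you carry an explicit per-slot queue-change constant $C_q$ (the paper effectively takes it to be $1$), you bound $\tilde w_{max}\le g(q_{max})$ directly from the range of $\tilde g$ rather than via Lemmas~\ref{relate1}--\ref{relate} (yielding a slightly tighter constant), and in item~(ii) you use only the upper bound $q_{max}(k)\le q_{th}+t^*C_q$, whereas the paper also uses $q_{max}(k)\ge q_{th}$ to fix $h(q_{max}(k))\ge h(q_{th})$; both variants give the same conclusion. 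One tiny slip: your intermediate inequality $\pi_{t_1}(s)\ge|\mathcal R|^{-1}\exp(-2|\mathcal L|\tilde w_{max}(t_1))$ is not quite right when weights can be negative, but your final bound $\pi_{t_1}(s)\ge|\mathcal R|^{-1}\exp(-2|\mathcal L|g(q_{th}))$ is, since you already established $|\tilde w_{ij}|\le g(q_{max})$.
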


Next, the following Lemma states that, with high probability, the basic CSMA algorithm chooses schedules that their weights are close to the Max Wight schedule.
\begin{lemma}\label{property}
The basic CSMA algorithm has the following property: Given any $0<\varepsilon<1$ and $0<\delta<1$, there exists a $B(\delta, \varepsilon)>0$ such that whenever $q_{max}(t) >B(\delta, \varepsilon)$, with probability larger than $1-\delta$, it chooses a schedule $s(t) \in \mathcal{R}$ that satisfies
$$
\sum_{(i,j) \in s(t)}w_{ij}(t) \geq (1-\epsilon) \max_{s \in \mathcal{R}} \sum_{(i,j) \in s}w_{ij}(t).
$$
\end{lemma}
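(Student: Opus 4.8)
The plan is to obtain Lemma~\ref{property} by combining three facts already in hand: (a) the Gibbs distribution $\pi_t(s)\propto\exp\bigl(\sum_{(i,j)\in s}\tilde w_{ij}(t)\bigr)$ targeted by the Glauber dynamics puts almost all of its mass, when $q_{max}(t)$ is large, on schedules whose $\tilde w$-weight is within a factor $(1-\varepsilon')$ of the maximum; (b) by Lemma~\ref{lemma: adiabatic} the basic CSMA algorithm meets the hypotheses of Proposition~\ref{drift}, so its true schedule distribution $\mu_t$ stays within total variation $\delta/4$ of $\pi_t$ once $q_{max}(t)$ exceeds a finite threshold; and (c) by Lemma~\ref{relate1} the modified weights $\tilde w_{ij}(t)$ differ from the weights $w_{ij}(t)$ by at most $g^*(t)$ per link, an error that Lemma~\ref{relate} shows is a vanishing fraction of the maximum weight.

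First I would make (a) quantitative. Set $\varepsilon':=\min\{\varepsilon,\epsilon/2\}>0$ and $\tilde W^*(t):=\max_{s\in\mathcal{R}}\sum_{(i,j)\in s}\tilde w_{ij}(t)$. Since the normalizing constant of $\pi_t$ is at least $\exp(\tilde W^*(t))$, any schedule $s$ with $\sum_{(i,j)\in s}\tilde w_{ij}(t)<(1-\varepsilon')\tilde W^*(t)$ has $\pi_t(s)\le e^{-\varepsilon'\tilde W^*(t)}$; summing over the ``bad set'' $\mathcal{B}_t:=\{s:\sum_{(i,j)\in s}\tilde w_{ij}(t)<(1-\varepsilon')\tilde W^*(t)\}$ gives $\pi_t(\mathcal{B}_t)\le|\mathcal{R}|\,e^{-\varepsilon'\tilde W^*(t)}$. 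Applying Lemma~\ref{relate1} to the single best link for $w$ and then Lemma~\ref{relate}, together with $g^*(t)=\tfrac{\epsilon}{4N^3}g(q_{max}(t))$ from (\ref{eq: gstar}) and $\epsilon<1$, one checks $\tilde W^*(t)\ge w_{max}(t)-g^*(t)\ge\tfrac1N g(q_{max}(t))-g^*(t)\ge\tfrac{1}{2N}g(q_{max}(t))$; since $g$ increases to infinity this forces $|\mathcal{R}|e^{-\varepsilon'\tilde W^*(t)}\to0$ as $q_{max}(t)\to\infty$, so there is a threshold $B_1(\delta,\varepsilon)$ with $\pi_t(\mathcal{B}_t)\le\delta/2$ whenever $q_{max}(t)>B_1$.

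Next I would invoke (b): by Lemma~\ref{lemma: adiabatic} and Proposition~\ref{drift} there is a finite $B_2(\delta)$ (namely $q_{th}+t^*$) with $\|\pi_t-\mu_t\|_{TV}\le\delta/4$ whenever $q_{max}(t)\ge B_2$. Since $|\mu_t(A)-\pi_t(A)|\le\|\pi_t-\mu_t\|_{TV}$ for every event $A$, when $q_{max}(t)>\max\{B_1,B_2\}$ the CSMA schedule $s(t)$ lies outside $\mathcal{B}_t$ with probability at least $1-\pi_t(\mathcal{B}_t)-\|\pi_t-\mu_t\|_{TV}\ge1-\tfrac{3}{4}\delta>1-\delta$. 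Finally I would pass from $\tilde w$ back to $w$: on the event $\{s(t)\notin\mathcal{B}_t\}$ we have $\sum_{(i,j)\in s(t)}\tilde w_{ij}(t)\ge(1-\varepsilon')\tilde W^*(t)$, and applying $|\tilde w_{ij}(t)-w_{ij}(t)|\le g^*(t)$ both to $s(t)$ and to a maximizer of $\sum_{(i,j)\in s}w_{ij}(t)$ gives $\sum_{(i,j)\in s(t)}w_{ij}(t)\ge(1-\varepsilon')\max_s\sum_{(i,j)\in s}w_{ij}(t)-2Ng^*(t)$. Because $\max_s\sum_{(i,j)\in s}w_{ij}(t)\ge w_{max}(t)\ge\tfrac1N g(q_{max}(t))$ (the last step by Lemma~\ref{relate}) while $2Ng^*(t)=\tfrac{\epsilon}{2N^2}g(q_{max}(t))\le\tfrac{\epsilon}{2}\max_s\sum_{(i,j)\in s}w_{ij}(t)$, and $\varepsilon'\le\epsilon/2$, we obtain $\sum_{(i,j)\in s(t)}w_{ij}(t)\ge(1-\epsilon)\max_s\sum_{(i,j)\in s}w_{ij}(t)$; setting $B(\delta,\varepsilon):=\max\{B_1(\delta,\varepsilon),B_2(\delta)\}$ completes the proof.

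The only genuinely substantive input is step (b), which is outsourced to Lemma~\ref{lemma: adiabatic} and Proposition~\ref{drift}; what needs care here is the two-sided role of $g^*(t)$---it must be large enough, as a fraction of $g(q_{max}(t))$, for the mixing/adiabatic estimates used earlier, yet small enough that both the $\tilde w$-versus-$w$ gap and the contribution of $g^*$ to the maximum weight are a controllably small fraction of $w_{max}(t)$ here---together with keeping the probability budget ($\delta$, split between the $\pi_t$-concentration in step~(a) and the bound $\|\pi_t-\mu_t\|_{TV}\le\delta/4$) disjoint from the weight-accuracy budget ($\epsilon$, which governs $g^*$).
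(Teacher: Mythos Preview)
Your proposal is correct and follows essentially the same strategy as the paper: Gibbs concentration of $\pi_t$ on near-maximum-weight schedules, total-variation closeness of $\mu_t$ to $\pi_t$ via Proposition~\ref{drift} and Lemma~\ref{lemma: adiabatic}, and the per-link bound $|\tilde w_{ij}-w_{ij}|\le g^*(t)$ from Lemma~\ref{relate1} together with Lemma~\ref{relate} to control the $\tilde w$-versus-$w$ gap as a vanishing fraction of $w_{max}(t)$.

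The only organizational difference is where the $\tilde w\leftrightarrow w$ translation occurs. The paper defines the ``bad'' set $\chi_t$ directly in terms of the $w$-weights and then bounds $\pi_t(\chi_t)$ by inserting $\tilde w_{ij}\le w_{ij}+g^*(t)$ in the numerator and $\tilde w_{ij}\ge w_{ij}-g^*(t)$ in the denominator of the Gibbs ratio, arriving at $\pi_t(\chi_t)\le 2^{|\mathcal L|}e^{2|\mathcal L|g^*(t)-\epsilon w^*(t)}$. You instead define the bad set $\mathcal B_t$ in terms of the $\tilde w$-weights (which is the more natural coordinate for $\pi_t$), get the cleaner bound $\pi_t(\mathcal B_t)\le |\mathcal R|\,e^{-\varepsilon'\tilde W^*(t)}$, and postpone the $\tilde w\to w$ conversion to the very end on the good event. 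Both orderings lead to the same quantitative conclusion; yours avoids carrying the $g^*$ correction through the Gibbs estimate, while the paper's avoids the separate final translation step. One small remark: your use of $|s|\le N$ when bounding $2Ng^*(t)$ relies on the interference model of Section~\ref{sec: distributed} (each node in at most one active link); the paper uses $|\mathcal L|$ instead, but since $|\mathcal L|\le N^2$ and $g^*(t)=\tfrac{\epsilon}{4N^3}g(q_{max}(t))$, either bound suffices.
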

\begin{proof}
Let $w^*(t)=\max_{s \in \mathcal{R}} \sum_{(i,j) \in s} w_{ij}(t)$ and define
$$
\chi_t:=\Big\{s \in \mathcal{R}: \sum_{(i,j) \in s}w_{ij}(t) < (1-\epsilon) w^*(t) \Big\}.
$$
Therefore, we need to show that $\mu_t(\chi_t) \leq \delta$, for $q_{max}(t)$ large enough. For our choice of $g(\cdot)$ and $g^*$, it follows from Proposition \ref{drift} that, whenever $q_{max}(t)> q_{th}+t^*$, $2 \|\mu_t-\pi_t\|_{TV} \leq \delta/2$,
and consequently, $
\sum_{s \in \mathcal{R}} \Bigl |\mu_t(s)-\pi_t(s)\Bigr| \leq \delta/2.
$
Thus,
$$
\sum_{s \in \mathcal{\chi}_t} \mu_t(s) \leq \sum_{s \in \mathcal{\chi}_t} \pi_t(s)+\delta/2.
$$
Therefore, to ensure that $\sum_{s \in \chi_t} \mu_t(s) \leq \delta$, it suffices to have $
\sum_{s \in \mathcal{\chi}_t}\pi_t(s) \leq \delta/2.
$
But, by Lemma \ref{relate1}, $\widetilde{w}_{ij}(t) \leq w_{ij}(t)+g^*(t)$,
so,
\begin{eqnarray*}
\sum_{s \in \mathcal{\chi}_t} \pi_t(s) &\leq & \sum_{s \in \mathcal{\chi}_t} \frac{1}{Z_t}e^{\sum_{(i,j) \in s}{w}_{ij}(t)}e^{|s|g^*(t)}\leq \sum_{s \in \mathcal{\chi}_t} \frac{1}{Z_t}e^{(1-\varepsilon)w^*(t)}e^{|\mathcal{L}|g^*(t)},
\end{eqnarray*}
and
\ben
Z_t &=& \sum_{s \in \mathcal{R}}e^{\sum_{(i,j) \in s}\widetilde{w}_{ij}(t)} > \sum_{s \in \mathcal{R}}e^{\sum_{(i,j) \in s}(w_{ij}(t)-g^*(t))}
> e^{w^*(t)-|\mathcal{L}|g^*(t)}.
\een
Therefore,
\ben
\sum_{s \in \mathcal{\chi}_t} \pi_t(s) &\leq & 2^{|\mathcal{L}|}e^{2|\mathcal{L}|g^*(t)-\varepsilon w^*(t)},
\een
when $q_{max}(t)> q_{th}+t^*$. Note that $w^*(t)\geq w_{max}(t) \geq g(q_{max}(t))/{N}$, and $g^*(t)=\frac{\epsilon}{4N^3}g(q_{max}(t))$, so
\begin{eqnarray}
\sum_{s \in \mathcal{\chi}_t} \pi_t(s) &\leq & 2^{N^2}e^{-\frac{\epsilon}{2{N}}g(q_{max}(t))} \leq \delta/2 \nonumber
\end{eqnarray}
whenever $q_{max}(t) > B(\delta, \epsilon)$, where
\ben
B(\delta,\epsilon)=\max\left\{q_{th}+t^*, g^{-1}\Big(\frac{2{N}}{\epsilon}(N^2 \log{2}+\log{\frac{2}{\delta}})\Big)\right\}.
\een
\end{proof}
\subsection*{Throughput optimality}
Now we are ready to prove the throughput optimality for the basic CSMA algorithm. Let $x^*$ and $\tilde{x}^*$ be the optimal schedules based on total queues and MAC queues respectively, given by (\ref{eq: xstar}) and (\ref{eq: xstartilde}), and $\tilde{x}$ be the schedule generated by the basic CSMA algorithm. The proof is parallel to the argument for the throughput optimality of the centralized algorithm. Especially, the inequality (\ref{eq: drift1}) still holds, i.e.,
\be \label{drift1}
\expectS{\Delta V(t)} & \leq &  C_1+C_2+\sum_{n=1}^N\sum_{d \in \mD}g(\bar{Q}_n^{(d)}(t))\rho_n^{(d)} -  \expectS{\sum_{(i,j) \in \mathcal{L}}\sum_{d \in \mD} {x}_{ij}^{(d)}(t)W_{ij}^{(d)}} \nonumber\\
& = &  C_1+C_2+\sum_{n=1}^N\sum_{d \in \mD}g(\bar{Q}_n^{(d)}(t))\rho_n^{(d)} -  \expectS{\sum_{(i,j) \in \mathcal{L}}{\tilde{x}}_{ij}(t)W_{ij}(t)}.
\ee
Next, observe that
\be
\sum_{(i,j) \in \mathcal{L}}x^*_{ij}W_{ij}(t)-\expectS{\sum _{(i,j) \in \mathcal{L}}\tilde{x}_{ij}W_{ij}(t)}&=& \expectS{\sum _{(i,j) \in \mathcal{L}}x^*_{ij}W_{ij}(t)-\sum_{(i,j) \in \mathcal{L}}{x}^*_{ij}w_{ij}(t)} \label{first}\\
&+&  \expectS{\sum_{(i,j) \in \mathcal{L}}x^*_{ij}w_{ij}(t)-\sum _{(i,j) \in \mathcal{L}}\tilde{x}_{ij}w_{ij}(t)} \label{second}\\
&+& \expectS{\sum _{(i,j) \in \mathcal{L}}\tilde{x}_{ij}w_{ij}(t) -\sum _{(i,j) \in \mathcal{L}}\tilde{x}_{ij}W_{ij}(t)} \label{third}
\ee
Now note that each of the terms (\ref{first}) and (\ref{third}) is less than $|\mathcal{L}|\log(1+1/\eta_{min})/h(0)$ by Lemma (\ref{lemma}). The term (\ref{second}) is bounded from above, by using Lemma \ref{property}, as follows.
\ben \label{forth}
\mbox{(\ref{second})} & \leq &   \sum_{(i,j) \in \mathcal{L}}x^*_{ij}w_{ij}(t) -(1-\delta)(1-\epsilon)\sum_{(i,j) \in \mathcal{L}}\tilde{x}^*_{ij}w_{ij}(t) \nonumber \\
& \leq &\sum_{(i,j) \in \mathcal{L}}x^*_{ij}w_{ij}(t) -(1-\delta)(1-\epsilon)\sum_{(i,j) \in \mathcal{L}}{x}^*_{ij}w_{ij}(t) \nonumber \\
& \leq &(1-(1-\delta)(1-\epsilon)) \sum_{(i,j) \in \mathcal{L}} {x}^*_{ij}W_{ij}(t) + |\mathcal{L}| \log(1+1/\eta_{min})/h(0),
\een
whenever $q_{max}(t) \geq B(\delta, \epsilon)$, for any $\delta >0$.
Thus, using the above bounds for terms (\ref{first}), (\ref{second}) and (\ref{third}), we get
\be \label{fifth}
\expectS{\sum _{(i,j) \in \mathcal{L}}\tilde{x}_{ij}W_{ij}(t)} &\geq & (1-\delta)(1-\epsilon)\sum_{(i,j) \in \mathcal{L}} {x}^*_{ij}W_{ij}(t) -  3|\mathcal{L}|  \log(1+1/\eta_{min})/h(0).
\ee
Using (\ref{fifth}) in (\ref{drift1}) yields
\be \label{drift2}
\expectS{\Delta V(t)} & \leq &  \sum_{n=1}^N\sum_{d \in \mD}g(\bar{Q}_n^{(d)}(t))\rho_n^{(d)} - (1-\delta)(1-\epsilon)\sum_{(i,j) \in \mathcal{L}} {x}^*_{ij}W_{ij}(t)+C_3,
\ee
where $C_3:=C_1+C_2 +3 |\mathcal{L}|  \log(1+1/\eta_{min})/h(0)$.
Using (\ref{state weight}) and rewriting the right-hand-side of (\ref{drift2}) by changing the order of summations yields
\be \label{drift3}
\expectS{\Delta V(t)}  \leq   \sum_{n=1}^N\sum_{d \in \mD}g(\bar{Q}_n^{(d)}(t))\Bigl[\rho_n^{(d)}  +(1-\delta)(1-\epsilon) \Bigl(\sum_{i=1}^NR^{(d)}_{in}{{x}^*}^{(d)}_{in}(t)-\sum_{j=1}^NR^{(d)}_{nj}{{x}^*}^{(d)}_{nj}(t)\Bigr)\Bigr]+ C_3.\nonumber
\ee
whenever $q_{max}(t) \geq B(\delta, \epsilon)$.
The rest of the proof is standard. For any load  $\brho$ strictly inside $(1-3 \epsilon)\mathcal{C}$, there must exist a $\mu \in \mbox{Co}(\mathcal{R})$ such that for all $1 \leq n \leq N$, and all $d \in \mD$,
\be \label{capacity 1}
\rho^{(d)}_n < (1-3 \epsilon) \Big(\sum_{j=1}^NR^{(d)}_{nj} {\mu}_{nj}^{(d)}-\sum_{i=1}^N R^{(d)}_{in}{\mu}_{in}^{(d)}\Big).
\ee
Let
$
{\rho^*} ={(1-3 \epsilon)} \min_{n\in \mathcal{N},d \in \mD} \left(\sum_{j} R^{(d)}_{nj}{\mu}_{nj}^{(d)}-\sum_{i}R^{(d)}_{in} {\mu}_{in}^{(d)}\right)
$
for some positive $\rho^*$. Hence,
\ben
\expectS{\Delta V(t)} &\leq&  (1-\delta)(1-\epsilon)\sum_{n=1}^N\sum_{d \in \mD} \Bigl\{g(\bar{Q}_n^{(d)}(t)) \Bigl[\sum_{i=1}^NR^{(d)}_{in}{{x}^*}^{(d)}_{in}(t)-\sum_{j=1}^NR^{(d)}_{nj}{{x}^*}^{(d)}_{nj}(t)\Bigr] \Bigr\}\\
& + & (1-3 \epsilon) \sum_{n=1}^N\sum_{d \in \mD}\Bigl\{g(\bar{Q}_n^{(d)}(t)) \Bigl[\sum_{j=1}^N R^{(d)}_{nj}{\mu}_{nj}^{(d)}-\sum_{i=1}^N R^{(d)}_{in}{\mu}_{in}^{(d)}\Bigr] \Bigr\}+ C_3 .
\een
For any fixed small $\epsilon >0$, we can choose $\delta < \epsilon /(1-\epsilon)$ to ensure $(1-\delta)(1-\epsilon) > 1-2 \epsilon$. Moreover, from definition of $x^*(t)$ and convexity of $\mathrm{Co}(\mathcal{R})$, it follows that
\be
\sum_{n=1}^N\sum_{d \in \mD}g(\bar{Q}_n^{(d)}(t))\Bigl[\sum_{j=1}^N R^{(d)}_{nj}{x^*}_{nj}^{(d)}(t)-\sum_{i=1}^NR^{(d)}_{in} {x^*}_{in}^{(d)}(t)\Bigr] \geq \sum_{n=1}^N\sum_{d \in \mD}g(\bar{Q}_n^{(d)}(t)) \Bigl[\sum_{j=1}^NR^{(d)}_{nj} {\mu}_{nj}^{(d)}-\sum_{i=1}^NR^{(d)}_{in} {\mu}_{in}^{(d)}\Bigr],
\ee
for any $ \mu \in \mathrm{Co}(\mathcal{R})$. Hence,
\ben
\expectS{\Delta V(t)} &\leq& -\epsilon \sum_{n=1}^N\sum_{d \in \mD}g(\bar{Q}_n^{(d)}(t))\Bigl[\sum_{j=1}^N {R^{(d)}_{nj}\mu}_{nj}^{(d)}-\sum_{i=1}^NR^{(d)}_{in} {\mu}_{in}^{(d)}\Bigr] + C_3 \\
& \leq& - \rho^* \frac{\epsilon}{1-3 \epsilon}\sum_{n=1}^N\sum_{d \in \mD}g(\bar{Q}_n^{(d)}(t))+C_3 \leq - \epsilon',
\een
whenever $\max_{n,d}\bar{Q}_n^{(d)} \geq g^{-1}\big(\frac{C_3+\epsilon'}{\rho^*}\frac{1-3 \epsilon}{\epsilon}\big)$ and $q_{max}(t) \geq B(\delta, \epsilon)$ or, as a sufficient condition, whenever
$$q_{max}(t) \geq \max \left\{B(\delta, \epsilon), g^{-1}\big(\frac{C_3+\epsilon'}{\rho^*}\frac{1-3 \epsilon}{\epsilon}\big)\right\}.$$
In particular, to get negative drift, $-\epsilon'$, for some positive constant $\epsilon'$, it suffices that
 $$\max _n N_n> \max\left\{g^{-1}\big(\frac{C_3+\epsilon'}{\rho^*}\frac{1-3 \epsilon}{\epsilon}\big), B(\delta, \epsilon)\right\}$$
  because $ q_{max}(t) \geq  \max _n N_n$, and $g$ is an increasing function. This concludes the proof of the main theorem.

\subsection*{Extension of the proof to the distributed implementation}
The distributed algorithm is based on multiple site-update (or parallel operating) Glauber dynamics as defined next. Consider the graph $\cG(\cV,\cE)$ as before and a constant weight vector $\mathbf{\tilde{w}}=[\tilde{w}_{ij}: (i,j) \in \mathcal{L}]$. At each time $t$, a decision schedule $m(t) \subseteq \mathcal{R}$ is selected at random with positive probability $\alpha(m(t))$. Then, for all $(i,j) \in m(t)$, we perform the regular Glauber dynamics.
 Then, it can be shown that the Markov chain is reversible, it has the same stationary distribution as the regular Glauber dynamics in (\ref{stationary}), and its mixing time is almost the same as (\ref{mixing3}). In fact, the mixing time of the chain is characterized by the following Lemma.
\begin{lemma}\label{multiple}
For the multiple site-update Glauber Dynamics with the weight vector $\tilde{W}$ on a graph $\cG(\cV,\cE)$,
\begin{equation}
T \leq \frac{64^{|\cV|}}{2}\exp(4|\cV|\tilde{w}_{max}).
\end{equation}
where $\tilde{w}_{max}=\max_{i \in \cV|} \tilde{w}_i$.
\end{lemma}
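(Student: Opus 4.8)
The plan is to deduce Lemma~\ref{multiple} from the single-site bound of Lemma~\ref{mixing2} by a comparison of Dirichlet forms. Let $P$ denote the transition matrix of the multiple site-update chain and $P_0$ that of the ordinary single-site Glauber dynamics on $\cG(\cV,\cE)$ with the same weight vector $\mathbf{\tilde{w}}$; as already noted, both are reversible with respect to the same stationary distribution $\pi$. Since the mixing time is $T=1/(1-\lambda^*)$ with $\lambda^*=\max\{\lambda_2,|\lambda_r|\}$, it suffices to exhibit a constant $\gamma=\gamma(|\cV|)$ with $1-\lambda_2(P)\ge\gamma(1-\lambda_2(P_0))$ and $1+\lambda_r(P)\ge\gamma(1+\lambda_r(P_0))$; combined with Lemma~\ref{mixing2} this gives $1-\lambda^*(P)\ge\gamma\,16^{-|\cV|}e^{-4|\cV|\tilde{w}_{max}}$, and choosing the comparison constants so that $\gamma\ge 2\cdot 4^{-|\cV|}$ yields $T\le\tfrac12\,64^{|\cV|}e^{4|\cV|\tilde{w}_{max}}$.

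\noindent Concretely, I would compare $\mathcal{E}_P(f,f)=\tfrac12\sum_{s,s'}\pi(s)P(s,s')(f(s)-f(s'))^2$ with $\mathcal{E}_{P_0}(f,f)$. The transitions of $P_0$ with positive probability are exactly the single-link toggles $s\to s'$ with $s,s'\in\mathcal R$, and every such toggle is reproducible under $P$: the RTD/CTD control phase outputs, with a probability $\alpha_{\min}>0$ that does not depend on the current schedule or the queue lengths, a decision schedule $m$ containing the target link $(i,j)$ (for instance the singleton $m=\{(i,j)\}$), and conditioned on this the parallel Glauber step applies the correct flip to $(i,j)$ while \emph{freezing} every other link of $m$, each with probability at least $(1+e^{\tilde{w}_{max}})^{-1}$. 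The freezing estimate follows from a short case analysis of the update rule: an active link necessarily has a silent conflict set and stays active with probability $p\ge(1+e^{\tilde{w}_{max}})^{-1}$, whereas an inactive link stays inactive with probability $1-p\ge(1+e^{\tilde{w}_{max}})^{-1}$ irrespective of whether its out-of-$m$ conflict set is busy. As $|m|\le|\cV|$, this gives $\pi(s)P(s,s')\ge c\,\pi(s)P_0(s,s')$ on every edge with $c$ a configuration-independent product of the above quantities, hence $\mathcal{E}_P(f,f)\ge c\,\mathcal{E}_{P_0}(f,f)$ and, by the variational characterization of the gap, $1-\lambda_2(P)\ge c\,(1-\lambda_2(P_0))$. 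The most negative eigenvalue is handled by the same comparison applied to the companion forms $\sum_{s,s'}\pi(s)P(s,s')(f(s)+f(s'))^2$, using in addition $P(s,s)\ge\alpha_{\min}P_0(s,s)$ for the diagonal terms, which yields $1+\lambda_r(P)\ge c'(1+\lambda_r(P_0))$. Collecting constants and matching exponents then reproduces the factor $64^{|\cV|}/2$; alternatively one may rerun the conductance argument of \cite{ghaderi} directly on $P$, the parallel updates contributing the extra factor $4$ per vertex.

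\noindent The main obstacle is establishing the comparison constant $c$ \emph{uniformly in the state}. Two points need care: (a) showing that the control phase produces some decision schedule containing a prescribed link with probability bounded below by an explicit function of $|\cV|$ and the fixed access probabilities $\beta_i$, uncorrupted by the RTD/CTD mini-slot synchronization rules; and (b) the freezing estimate, where one must verify that \emph{every} link of $m$ other than the one being updated can hold its current value with probability at least $(1+e^{\tilde{w}_{max}})^{-1}$ — this is exactly where the distinction between active links (with forced-silent conflict sets) and inactive links enters. Once (a) and (b) are in hand, the remainder — reconciling the dependence of $\alpha_{\min}$ on the number of nodes with the stated $|\cV|$-exponent and tracking the powers of $e^{\tilde{w}_{max}}$ — is routine bookkeeping.
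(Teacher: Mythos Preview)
The paper does not actually prove this lemma: it simply writes ``See \cite{ghaderi} for the proof'' and moves on. The argument in \cite{ghaderi} is a direct conductance (Cheeger-type) bound on the multi-site chain, essentially rerunning the estimate behind Lemma~\ref{mixing2} with the decision-schedule mechanism inserted; this is the alternative you mention in your last sentence. Your primary route---comparing Dirichlet forms of $P$ and $P_0$ and invoking Lemma~\ref{mixing2}---is a different and perfectly legitimate way to obtain a bound of the same shape, and it has the conceptual advantage of factoring out the single-site analysis once and for all.

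That said, one step in your freezing estimate does not go through as written. You assert that an active link $(k,l)\in m$ stays active with probability $p_{kl}\ge (1+e^{\tilde{w}_{max}})^{-1}$. Since $p_{kl}=e^{\tilde{w}_{kl}}/(1+e^{\tilde{w}_{kl}})$, this inequality is equivalent to $\tilde{w}_{kl}\ge -\tilde{w}_{max}$. In the multihop setting the link weights $\tilde{w}_{kl}$ are \emph{differences} $\tilde g(q_k^{(d)})-\tilde g(q_l^{(d)})$ and can be negative; a priori the most one can say (via Lemma~\ref{relate}) is $\tilde{w}_{kl}\ge -N\,\tilde{w}_{max}$, so the lower bound you use for active links fails in general. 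The fix is easy and you already have it in hand: restrict the comparison to the event that the control phase outputs the \emph{singleton} $m=\{(i,j)\}$, which occurs with some $\alpha_0>0$ depending only on the $\beta_i$'s and the graph. Then there are no other links to freeze, and you get $\pi(s)P(s,s')\ge \alpha_0|\cV|\,\pi(s)P_0(s,s')$ directly.

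The remaining point is that what you call ``routine bookkeeping'' is where the specific constant $64^{|\cV|}/2$ lives. Your comparison gives $1-\lambda^*(P)\ge c\,(1-\lambda^*(P_0))$ with $c$ depending on $\alpha_0$ (hence on the access probabilities $\beta_i$ and the degree structure), so without further input you will get $T\le c^{-1}\,16^{|\cV|}e^{4|\cV|\tilde w_{max}}$ for some model-dependent $c$, not the clean $\tfrac12\,64^{|\cV|}$ factor. That is entirely adequate for every use of the lemma in this paper (only the $e^{4|\cV|\tilde w_{max}}$ growth matters downstream), but if you want to reproduce the stated constant you will have to either track $\alpha_0$ explicitly for the RTD/CTD protocol or fall back on the direct conductance computation of \cite{ghaderi}.
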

See \cite{ghaderi} for the proof. The rest of the analysis is the same as the argument for the basic algorithm.
The distributed algorithm uses a time-varying version of the multiple-site update Glauber dynamics, where the weights change with time. Although the upperbound of Lemma \ref{multiple} is loose, it is sufficient to prove the optimality of the algorithm.

Finally, let $T_{data}$ and $T_{control}$ denote the lengths of the data slot and the control slot. Thus, the distributed algorithm can achieve a fraction ${T_{data}}/(T_{data}+T_{control})$ of the capacity region. In particular, in our algorithm, it suffices to allocate two short mini-slots at the beginning of the slot for the purpose of control. By choosing the data slot to be much larger than the control slot, the algorithm can approach the full capacity.

\section{Conclusions}\label{conclusion}
Design of efficient scheduling and congestion control algorithms can be decoupled by using MAC-layer queues for the scheduling of packets and using window-based congestion control mechanisms for controlling the rate at which packets injected into the network. This separation result is very appealing to the network designer. It is also important from the practical perspective because, typically, only the MAC-layer information is available to the scheduler since it is implemented as part of the MAC layer. Moreover, window-based congestion control is also more consistent with practical implementations like TCP.

\appendices
\section{Proof of Lemma \ref{lemma:B_l}}
Let $\hat{A}_{nf}^{(d)}(t)$ denote the number of packets of file $f$ injected into the MAC layer of node $n$, and $\hat{D}_{nf}^{(d)}(t) = \sigma_{nf}(t)I_{nf}(t)$ denote the expected ``packet departure'' of file $f$ from the transport layer. Let $B_{nf}(t) = \hA- \hD$ for file $f$.
\begin{proof}[Part (i)]
It suffices to show that for each individual file $1 \leq f \leq N_n(t)$,
$
\expectS{B_{nf}(t)}=0.
$
We only need to focus on files $f$ with $ \xi_{nf}(t)=1$, i.e., existing files in the Transport layer, or new files, i.e, $f \in \bigl(N_n(t)+1, N_n(t) + a_n(t) \bigr)$, because the $\expectS{B_{nf}(t)}=0$ if file $f$ has no packets in the Transport layer.

Let $\mW_{nf}^r(t)$ be the remaining window size of file $f$ at node $n$ after MAC-layer departure but before the MAC-layer injection. We want to show that, for any $w \ge 0$,
\begin{eqnarray}\label{eq:inner_transfer2}
\expectS{B_{nf}(t) \Bigl| \mW_{nf}^r(t) =w }=0,
\end{eqnarray}
then (\ref{eq:inner_transfer2}) implies $\expectS{B_{nf}(t)}=0$.

Because the number of remaining packets at the Transport layer at each time is geometrically distributed with mean size $\sigma_{nf}(t)$, the transport layer will continue to inject packets into the MAC layer with probability $\gamma_{nf}(t) = 1-1/\sigma_{nf}(t)=1-\eta_{nf}(t)$ as long as all previous packets are successfully injected and the window size is not full.

Clearly, if $w=0,$ no packet can be injected into the MAC layer. Therefore, $\hA=0$ and $\hD=0$, and (\ref{eq:inner_transfer2}) is satisfied. Next, we consider the case when $w>0$.
Let $p_w(k,j)$ denote the probability that $\hA=k$ and $I_{nf}(t)=j\in\{0,1\}$ given that $\mW_{nf}^r(t)=w$. Because transport-layer packets are injected into the MAC layer as long as the window is not full, we have $p_w(k,0)=0 \mbox{ for } k <w.$ Obviously, $p_w(k,1)=0 \mbox{ for } k>w.$

The probability that $\hA = k$ where $k < w$ directly follows the geometric distribution of the remaining packets of file $f$, i.e.,
\begin{eqnarray*}
 p_w(k,1) & = & \prob{\hA=k, I_{nf}(t) = 1 | \mW_{nf}^r(t)=w} \\
  &= & \prob{\hA=k | \mW_{nf}^r(t)=w} \\
  &= & \gamma_{nf}^{k-1}(t) (1- \gamma_{nf}(t)),
\end{eqnarray*}
for $1 \le k \leq w$.
%
%
Note that from the definition of $I_{nf}(t),$ we have
\begin{eqnarray*}
\prob{I_{nf}(t) = 0 | \mW_{nf}^r(t)=w} = 1 - \sum_{k=1}^w p_w(k,1)  = \gamma_{nf}^w(t).
\end{eqnarray*}

Now we calculate the left-hand side of (\ref{eq:inner_transfer2}).
\begin{eqnarray*}
\expectS {B_{nf}(t) | \mW_{nf}^r(t) = w } &=& \sum_{k=1}^{w} p_w(k,1) \Bigl( k- \sigma_{nf} \Bigr)  + \prob{I_{nf}(t) = 0|\mW_{nf}^r(t)=w} w\\
&=& \sum_{k=1}^w k\gamma_{nf}^{k-1}(1- \gamma_{nf}) - (1- \gamma_{nf}^w) \sigma_{nf} + w \gamma_{nf}^w \\
&=&(1-\gamma_{nf}) \frac{d}{d \gamma_{nf}} \Bigl[ \frac{\gamma_{nf} - \gamma_{nf}^{w+1}}{1- \gamma_{nf}} \Bigr] - \frac{1 - \gamma_{nf}^w}{1- \gamma_{nf}}+ w \gamma_{nf}^w \\
&=& 0.
\end{eqnarray*}
\end{proof}
\begin{proof}[Part (ii)]
 From the definition of $B_{n}(t),$ we have $$B_n(t) = \sum_{f=1}^{N_n(t)} B_{nf}(t) + \sum_{f=N_n(t)+1}^{N_n(t)+a_n(t)}  B_{nf}(t).$$
Using the fact that new arriving files are mutually independent, and are also independent of current network state, we have
\be \label{eq:bound_B_l}
\expectS{ B_n(t)^2 } &=& \expectS{ \Bigl(\sum_{f=1}^{N_n(t)} B_{nf}(t) \Bigr)^2 }  + \expectS{\sum_{f=N_n(t)+1}^{N_n(t)+ a_n(t)}   B_{nf}(t)^2},
\ee
where we have also used the fact that $\expectS{ B_{nf}(t)}=0.$
Note that $B_{nf}(t)^2 \le \max \{ \hA^2, \hD^2\}.$ So, based on the assumption that the congestion window size is bounded by $\mW_{cong}$ and the mean file size is bounded by $1/\eta_{min},$ we have
$\expectS{ B_{nf}(t)^2 } \le \max \{ \mW_{cong}^2 ,  1/\eta_{min}^2 \}.$
Therefore, the second term in (\ref{eq:bound_B_l}) is bounded by
\begin{align} \label{eq:bound_B_l_new}
& \expectS{\sum_{f=N_n(t)+1}^{N_n(t)+ a_n(t)}   B_{nf}(t)^2} < \kappa_n  \max  \{ \mW_{cong}^2 ,  1/\eta_{min}^2 \}.
\end{align}
Next, we bound the first term in (\ref{eq:bound_B_l}). Let $\mathcal{F}_n(t)$ denote the set of files at node $n$ that are served at time $t$. Because $B_{nf}(t) = 0$ if the existing file is not served, we have
\begin{eqnarray*}
\Bigl| \sum_{f=1}^{N_n(t)} B_{nf}(t) \Bigr| &\le& \max \Bigl \{ \sum_{f \in \mathcal{F}_n(t)} \hA , \sum_{f \in \mathcal{F}_n(t)} \sigma_{nf}(t) \Bigr \} \nonumber \\
&\le& \bigl|\mathcal{F}_n(t)\bigr| \cdot \max \Bigl\{ \mW_{cong}, 1/\eta_{min} \Bigr\}.
\end{eqnarray*}

Note that $|\mathcal{F}_n(t)| \le \sum_{\exit}x_{nj}(t) \leq N r_{max}$ because the number of existing files that are served cannot exceed the sum of outgoing link capacities. Thus,
\begin{eqnarray}\label{eq:bound_B_l_existing}
\expectS { \Bigl(\sum_{f=1}^{n_l(t)} B_{lf}(t) \Bigr)^2  } \le   N^2 r^2_{max} \max \Bigl\{ \mW_{cong}^2, 1/\eta_{min}^2 \Bigr\}
\end{eqnarray}
Substituting (\ref{eq:bound_B_l_new}) and (\ref{eq:bound_B_l_existing}) into (\ref{eq:bound_B_l}) completes the proof.

\end{proof}
\section{Proof of Lemma \ref{lemma:bound_third_term_geo}}
 Note that $u^{(d)}_n(t)=0$ if $q_n^{(d)}(t) \geq N r_{max}$, and $u^{(d)}_n(t) \leq Nr_{max}$ if $q_n^{(d)}(t) \leq N r_{max}$. In the latter case, since the congestion window size for every file is at least one, we know that there are at most $Nr_{max}$ files in transport layer of node $n$ intended for destination $d$. Hence, based on the definition of $\bar{Q}^{(d)}_n(t)$, $\bar{Q}^{(d)}_n(t) \ge Q^0 := Nr_{max}+ Nr_{max}/{\eta_{min}}$. So,
\begin{eqnarray*}
\expectS{g(\bar{Q}_n^{(d)}(t))u_n^{(d)}(t)}&=& \expectS{g(\bar{Q}_n^{(d)}(t))u_n^{(d)}(t)\mathds{1}\left\{q_n^{(d)}(t) \leq N r_{max}\right\}}  \\
&\leq& \expectS{g(\bar{Q}_n^{(d)}(t))Nr_{max} \mathds{1}\left\{q_n^{(d)}(t) \leq N r_{max}\right\} }  \\
&\leq&  Nr_{max}g(Q^0)
\end{eqnarray*}
Therefore, the result follows by choosing $C_2=N^3r_{max} g(Nr_{max}(1+ 1/\eta_{min})).$
\section{Proof of Lemma \ref{relate1}}
It is sufficient to prove that for all $d \in \mD$, $w^{(d)}_{ij}(t)-g^*(t) \leq \tilde{w}^{(d)}_{ij}(t) \leq w^{(d)}_{ij}(t)+g^*(t)$ as we do now.
\ben
\tilde{w}^{(d)}_{ij} &\leq& \max \left\{g\left(q_i^{(d)}(t)\right),g^*(t)\right\}-g\left(q_j^{(d)}(t)\right)\\
&\leq & g\left(q_i^{(d)}(t)\right)+ g^*(t) -g\left(q_j^{(d)}(t)\right)\\
& = & w_{ij}^{(d)}(t)+g^*(t).
\een
Similarly,
\ben
\tilde{w}^{(d)}_{ij} &\geq& g\left(q_i^{(d)}(t)\right)-\max \left\{g\left(q_j^{(d)}(t)\right),g^*(t)\right\}\\
&\geq & g\left(q_i^{(d)}(t)\right)- g^*(t) -g\left(q_j^{(d)}(t)\right)\\
& = & w_{ij}^{(d)}(t)-g^*(t).
\een
\section{Proof of Lemma \ref{alpha ratio}}
Note that
$$
\frac{\pi_{t+1}(s)}{\pi_t(s)}=\frac{Z_t}{Z_{t+1}}\exp\Big({\sum_{(i,j) \in s}(\widetilde{w}_{ij}(t+1)-\widetilde{w}_{ij}(t))}\Big)
$$
where
\begin{eqnarray*}
\frac{Z_t}{Z_{t+1}} & =&\frac{\sum_{s \in \mathcal{R}}\exp(\sum_{(i,j)  \in s}\widetilde{w}_{ij} (t))}{\sum_{s \in \mathcal{R}}\exp(\sum_{(i,j) \in s}\widetilde{w}_{ij}(t+1))}\leq \max_{s}\exp\Big({\sum_{(i,j)  \in s}(\widetilde{w}_{ij} (t)-\widetilde{w}_{ij} (t+1))}\Big)\\
& \leq & \exp\Big({\sum_{(i,j)\in \mathcal{L}}(\widetilde{w}_{ij} (t)-\widetilde{w}_{ij} (t+1))}\Big).
\end{eqnarray*}
Let $q^*(t)$ denote $g^{-1}(g^*(t))$, and define $\widetilde{q}^{(d)}_i(t):=\max\{q^*(t), q^{(d)}_i(t)\}$.
Hence, 
\begin{eqnarray*}
\widetilde{w}_{ij}^{(d)}(t+1)-\widetilde{w}_{ij}^{(d)}(t) &=& g(\widetilde{q}^{(d)}_{i}(t+1))-g(\widetilde{q}^{(d)}_j(t+1))-g(\widetilde{q}^{(d)}_{i}(t))+g(\widetilde{q}^{(d)}_j(t)) \\
&=& \left[g(\widetilde{q}^{(d)}_{i}(t+1))-g(\widetilde{q}^{(d)}_{i}(t))\right]+ \left[g(\widetilde{q}^{(d)}_{j}(t))-g(\widetilde{q}^{(d)}_{j}(t+1))\right]\\
&\leq& g^\prime(\widetilde{q}^{(d)}_i(t)) (\widetilde{q}^{(d)}_i(t+1)-\widetilde{q}_i(t))+  g^\prime(\widetilde{q}^{(d)}_j(t+1)) (\widetilde{q}^{(d)}_j(t)-\widetilde{q}_j(t+1)),
\end{eqnarray*}
where the last inequality follows from the fact that $g$ is a concave and increasing function. If we assume that link service rate is at most one and the congestion window sizes are at most $\mW_{cong}$, then for all $i \in \mathcal{N}$ and for all $d\in\mD$,
$
|\widetilde{q}^{(d)}_i(t+1)-\widetilde{q}^{(d)}_i(t)| \leq 1 + \mW_{cong}.
$
Hence,
\begin{eqnarray*}
\frac{|\widetilde{w}_{ij}^{(d)}(t+1)-\widetilde{w}^{(d)}_{ij}(t)|}{1 + \mW_{cong}} & \leq &  g^\prime(\widetilde{q}^{(d)}_i(t))+ g^\prime(\widetilde{q}^{(d)}_j(t+1))\leq  2 g^\prime(q^{*}(t+1)-1-\mW_{cong}),
\end{eqnarray*}
%
and thus,
\ben
\frac{\pi_{t+1}(s)}{\pi_t(s)} \leq e^{2(1+\mW_{cong})|\mathcal{L}| g^\prime({q}^*(t+1)-1-\mW_{cong})}.
\een
Similarly,
\ben
\frac{\pi_{t}(s)}{\pi_{t+1}(s)} \leq e^{2(1+\mW_{cong})|\mathcal{L}| g^\prime({q}^*(t+1)-1-\mW_{cong})}.
\een
This concludes the proof.
\section{Proof of Lemma \ref{relate}}
The second inequality immediately follows from definition of $w_{ij}$. To prove the first inequality, consider a destination $d$, with routing matrix $\mathbf{R}^{(d)} \in \{0,1\}^{N \times N}$, and let $\mathbf{w}^{(d)}=[w_{ij}^{(d)}(t): R^{(d)}_{ij}=1]$, then, based on (\ref{diff log}), we have
\ben
\mathbf{w}^{(d)}=(\mathbf{I}-\mathbf{R}^{(d)})g(\mathbf{q}^{(d)}),
\een
where $g(\mathbf{q}^{(d)})=[g(q_i^{(d)}): i \in \mathcal{N}].$ Note that every row of $\mathbf{R}^{(d)}$ has exactly one ''1`` entry except the row corresponding to $d$ which is all zero, so $(\mathbf{R}^{(d)})^N=0$. Therefore, $(\mathbf{I}-\mathbf{R}^{(d)})^{-1}=\mathbf{I}+\mathbf{R}^{(d)}+(\mathbf{R}^{(d)})^2+ \cdots $ exists and $\mathbf{I}-\mathbf{R}^{(d)}$ is nonsingular (Similar to the argument in page 222 of \cite{data}). So $
g(\mathbf{q}^{(d)})=(\mathbf{I}-\mathbf{R}^{(d)})^{-1}\mathbf{w}^{(d)}$. Let $\|\cdot \|_{\infty}$ denote the $\infty$-norm. Then we have
\ben
\|(\mathbf{I}-\mathbf{R}^{(d)})^{-1}\|_{\infty} &=&  \|\sum_{k=0}^N (\mathbf{R}^{(d)})^k\|_{\infty}
 \leq  \sum_{k=0}^N \|(\mathbf{R}^{(d)})^k\|_{\infty}
\leq  \sum_{k=0}^N \|\mathbf{R}^{(d)}\|_{\infty}^k
 \leq  N
\een
where we have used the basic properties of the matrix norm, and the fact that $\|\mathbf{R}^{(d)}\|_{\infty}=1$. Therefore,
$$ \|g(\mathbf{q}^{(d)})\|_{\infty} \leq \|(\mathbf{I}-\mathbf{R}^{(d)})^{-1}\|_{\infty}  \|\mathbf{w}^{(d)}\|_{\infty} \leq N \|\mathbf{w}^{(d)}\|_{\infty},$$
for every $d \in \mathcal{D}$. Taking the maximum over all $d \in \mathcal{D}$, and noting that $g$ is a strictly increasing function, yields the result.
\section{Proof of Lemma \ref{lemma: adiabatic}}
$h$ is strictly increasing so $h(x) \geq 1$ for all $x \geq h^{-1}(1)$. So
\begin{equation}
g^\prime(x) \leq \frac{1}{1+x}\mbox{ for } \ x \geq h^{-1}(1).
\end{equation}
The inverse of $g$ cannot be expressed explicitly, however, it satisfies
\begin{equation}\label{g-1}
g^{-1}(x)=\exp(xh(g^{-1}(x)))-1.
\end{equation}
Therefore,
\begin{eqnarray}
\alpha_t  & \leq &  \frac{2(1+\mW_{cong})|\mathcal{L}|}{g^{-1}(g^*)-\mW_{cong}}  = \frac{2(1+\mW_{cong})|\mathcal{L}|}{\exp(g^*h(g^{-1}(g^*)))-1-\mW_{cong}}.\label{help1}
\end{eqnarray}
for $g^* \geq g(1+\mW_{cong}+h^{-1}(1))$.

Next, note that
\be
T_{t+1} &\leq& 16^{|\mathcal{L}|}e^{4 |\mathcal{L}|(w_{max}+g^*)} \nonumber \\
& \leq & 16^{|\mathcal{L}|}e^{4 |\mathcal{L}|(g(q_{max})+\frac{\epsilon}{4|\mathcal{L}| N }g(q_{max}))} \nonumber \\
& \leq & 16^{|\mathcal{L}|}e^{8 |\mathcal{L}| g(q_{max})} \label{help22}.
\ee
Consider the product of (\ref{help1}) and (\ref{help22}) and let $K:=2 (\mW_{cong}+1)|\mathcal{L}|16^{|\mathcal{L}|}$.
Using (\ref{g-1}) and (\ref{eq: gstar}), the condition (\ref{alphaT}) is satisfied if
\begin{equation}\label{cond1}
Ke^{g^*[\frac{32|\mathcal{L}|N^3}{\epsilon}-h(g^{-1}(g^*))]}\left(1+\frac{1+\mW_{cong}}{g^{-1}(g^*)-\mW_{cong}}\right) \leq \delta/16.
\end{equation}
Consider fixed, but arbitrary, $|\mathcal{L}|$, $N$ and $\epsilon$. As $q_{max} \to \infty$, $g(q_{max}) \to \infty$, and consequently $g^* \to \infty$ and $g^{-1}(g^*) \to \infty$. Therefore, the exponent $\frac{32|\mathcal{L}|N^3}{\epsilon}-h(g^{-1}(g^*))$ is negative for $q_{max}$ large enough, and thus, there is a threshold $q_{th}$ such that for all $q_{max} > q_{th}$, the condition (\ref{cond1}) is satisfied.
%

The last step of the proof is to determine $t^*$. Let $t_1$ be the first time that $q_{max}(t)$ hits $q_{th}$, then
\begin{eqnarray*}
\sum_{k=t_1}^{t_1+t}\frac{1}{T_k^2} & \geq & 16^{-2|\mathcal{L}|} \sum_{k=t_1}^{t_1+t} e^{-16|\mathcal{L}| g(q_{max}(t))}\\
& = & 16^{-2|\mathcal{L}|} \sum_{k=t_1}^{t_1+t} e^{-16|\mathcal{L}| \frac{\log(1+q_{max}(t))}{h(q_{max}(t))}}\\
& = &16^{-2|\mathcal{L}|} \sum_{k=t_1}^{t_1+t} (1+q_{max}(t))^{-\frac{16|\mathcal{L}| }{h(q_{max}(t))}}\\
 & \geq & 16^{-2|\mathcal{L}|} t (1+q_{th}+t)^{-\frac{16|\mathcal{L}| }{h(q_{th})}}
\end{eqnarray*}
and
\begin{eqnarray*}
\min_{s}\pi_{t_1}(s)& \geq & \frac{1}{\sum_{s}\exp(\sum_{i \in s}\widetilde{w}_{ij}(t_1))}\\
& \geq & \frac{1}{|\mathcal{R}|\exp(|\mathcal{L}|\widetilde{w}_{max}(t_1))}\\
& \geq & \frac{1}{|\mathcal{R}|\exp(|\mathcal{L}|({w}_{max}(t_1)+g^*(t_1)))}\\
& \geq & \frac{1}{2^{N^2}\exp(2N^2 g(q_{th}))}
\end{eqnarray*}
Therefore, by Proposition \ref{drift}, it suffices to find the smallest $t$ that satisfies
\ben
16^{-2N^2} t (1+q_{th}+t)^{-\frac{16N^2 }{g(q_{th})}} &\geq& \log(4/ \delta) + N^2  \log(2(1+q_{th}))
\een
for a threshold $q_{th}$ large enough. Recall that $h(.)$ is an increasing function, therefore, by choosing $q_{th}$ large enough, $\frac{16N^2}{h(q_{th})}$ can be made arbitrary small.
Then a finite $t^*$ always exists since
$$
\lim _{t^* \to \infty} t^* (1+q_{th}+t^*)^{-\frac{16N^2}{h(q_{th})}} = \infty.
$$
\section{Proof of Proposition \ref{drift}}
The drift in $\pi_t$ is given by
\begin{eqnarray*}
\|\pi_{t+1}-\pi_t\|^2_{1/\pi_{t+1}} &=&\|\frac{\pi_{t}}{\pi_{t+1}}-1\|^2_{\pi_{t+1}} = \sum_s \pi_{t+1}(s)(\frac{\pi_{t}(s)}{\pi_{t+1}(s)}-1)^2\\
& \leq & \max \{(e^{\alpha_t}-1)^2,(1-e^{-\alpha_t})^2\} = (e^{\alpha_t}-1)^2
\end{eqnarray*}
for $\alpha_t <1$ where $\alpha_t$ is given by (\ref{eq: alpha}). Thus, $
\|\pi_{t+1}-\pi_t\|_{1/\pi_{t+1}} \leq 2 \alpha_t
$
for $\alpha_t <1$.
The distance between the true distribution and the stationary distribution at time $t$ can be bounded as follows.
First, by triangle inequality,
\begin{eqnarray*}
\|\mu_t-\pi_t\|_{1/\pi_{t}} &\leq & \|\mu_t-\pi_{t-1}\|_{1/\pi_{t}}+\|\pi_{t-1}-\pi_{t}\|_{1/\pi_{t}}\leq  \|\mu_t-\pi_{t-1}\|_{1/\pi_{t}} + 2 \alpha_{t-1}.
\end{eqnarray*}
On the other hand,
\begin{eqnarray*}
\|\mu_t-\pi_{t-1}\|^2_{1/\pi_{t}} & =& \sum_{s}\frac{1}{\pi_t(s)}(\mu_t(s)-\pi_{t-1}(s))^2 =  \sum_{s}\frac{\pi_{t-1}(s)}{\pi_t(s)}\frac{1}{\pi_{t-1}(s)}(\mu_t(s)-\pi_{t-1}(s))^2\\
& \leq & e^{\alpha_{t-1}} \|\mu_t-\pi_{t-1}\|^2_{1/\pi_{t-1}}.
\end{eqnarray*}
Therefore, for $\alpha_{t-1} < 1$,
\begin{eqnarray*}
\|\frac{\mu_t}{\pi_t}-1\|_{\pi_t} & \leq & (1+\alpha_{t-1})\|\mu_t-\pi_{t-1}\|_{1/\pi_{t-1}}+2 \alpha_{t-1}.
\end{eqnarray*}
Suppose $\alpha_t \leq \delta/16$, then $\|\frac{\mu_t}{\pi_t}-1\|_{\pi_t} \leq \delta/2$ holds for $t> t^*$, if
$$
\|\mu_t-\pi_{t-1}\|_{1/\pi_{t-1}} \leq \delta/4
$$
for all $t > t^*$. Define $a_t:= \|\mu_{t+1}-\pi_{t}\|_{1/\pi_{t}}$. Then
\begin{eqnarray*}
a_{t+1} &=& \|\mu_{t+2}-\pi_{t+1}\|_{1/\pi_{t+1}} =\|\mu_{t+1}P_{t+1}-\pi_{t+1}\|_{1/\pi_{t+1}}\leq  \lambda^*_{t+1}\|\mu_{t+1}-\pi_{t+1}\|_{1/\pi_{t+1}}
\end{eqnarray*}
where $\lambda^*_{t+1}$ is the SLEM of $P_{t+1}$.
Therefore,
$$
a_{t+1} \leq \lambda^*_{t+1} [(1+\alpha_t) a_t + 2 \alpha_t].
$$
Suppose $a_{t} \leq \delta/4$. Defining $T_t=\frac{1}{1-\lambda^*_t}$, we have
$$
a_{t+1} \leq (1-\frac{1}{T_{t+1}})[\delta/4+(2+\delta/4)\alpha_t].
$$
Thus, $a_{t+1} \leq \delta/4$, if
$$
(2+\delta/4)\alpha_t < \frac{1}{T_{t+1}}(\delta/4+(2+\delta/4)\alpha_t),
$$
or equivalently if $
\alpha_t < \frac{\frac{\delta/4}{T_{t+1}}}{(2+\delta/4)(1-1/T_{t+1})}.
$
But
$$
\frac{\frac{\delta/4}{T_{t+1}}}{(2+\delta/4)(1-1/T_{t+1})} > \frac{\frac{\delta/4}{T_{t+1}}}{4(1-1/T_{t+1})} > \frac{\delta}{16}\frac{1}{T_{t+1}},
$$
so, it is sufficient to have
\be \label{eq: alphaT2}
\alpha_t T_{t+1} \leq \delta/16.
\ee
Therefore, if there exists a time $t^*$ such that $a_{t^*} \leq \delta/4$, then $a_{t} \leq \delta/4$ for all $t \geq t^*$.
To find $t^*$, note that $a_t > \delta/4$ for all $t< t^*$. So, for $t<t^*$, we have
\begin{eqnarray*}
a_t &\leq & (1-\frac{1}{T_t})[(1+\alpha_{t-1})a_{t-1}+2\alpha_{t-1}]\leq  (1-\frac{1}{T_t})[(1+\alpha_{t-1})a_{t-1}+2\alpha_{t-1}4\frac{a_{t-1}}{\delta}]\\
& \leq & (1-\frac{1}{T_t})(1+\alpha_{t-1}+\frac{8}{\delta}\alpha_{t-1})a_{t-1}\leq (1-\frac{1}{T_t})(1+\frac{\delta/16}{T_t}(1+\frac{8}{\delta}))a_{t-1}\\
& \leq & (1-\frac{1}{T_t})(1+\frac{1}{T_t}) a_{t-1}= (1-\frac{1}{T^2_t})a_{t-1} \leq  e^{-\frac{1}{T^2_t}}a_{t-1}.
\end{eqnarray*}
Thus, $
a_t \leq a_0 e^{-\sum_{k=1}^{t^*}\frac{1}{T^2_k}},
$
 where
 \begin{eqnarray*}
a_0 &=&  \|\frac{\mu_1}{\pi_0}-1\|_{\pi_0}  =  \|\mu_0P_0-\pi_0\|_{1/ \pi_0}\\
&\leq&  \lambda^*(P_0)\|\mu_0-\pi_0\|_{1/ \pi_0}  \leq  \sqrt{\frac{1}{\min_s\pi_0(s)}}.
 \end{eqnarray*}
Finally, assume that (\ref{eq: alphaT2}) holds only when $q_{max}(t) \geq q_{th}$ for a constant $q_{th} > 0$. Let $t_1$ be the first time that $q_{max}(t)$ hits $q_{th}$.
Then, after that, it takes $t^*$ time slots for the chain to get close to $\pi_t$ if $q_{max}(t)$ remains above $q_{th}$ for $t_1 \leq t \leq t_1+t^*$.
Alternatively, we can say that $\|\pi_t-\mu_t\|_{TV} \leq \delta/4$ if $q_{max}(t) \geq q_{th}+t^*$ since at each time slot at most one departure can happen and this guarantees that $q_{max}(t) \geq q_{th}$ for, at least, the past $t^*$ time slots. This immediately implies the final result in the proposition.
\end{document}